\crefname{app}{Appendix}{Appendices}
\crefname{table}{box}{boxes}
\Crefname{Table}{Box}{Boxes}
\newcommand{\mc}[1]{\mathcal{#1}}
\newcommand{\tl}[1]{\tilde{#1}}
\newtheorem{proposition}{Proposition}
\newcommand{\boxalign}[2][0.97\textwidth]{
  \par\noindent\tikzstyle{mybox} = [draw=black,inner sep=6pt]
  \begin{center}\begin{tikzpicture}
   \node [mybox] (box){%
    \begin{minipage}{#1}{\vspace{-5mm}#2}\end{minipage}
   };
  \end{tikzpicture}\end{center}
}
\begin{document}

\title{Designing Quantum Networks Using Preexisting Infrastructure}
\date{\today}

\author{Julian Rabbie}
\email{julianrabbie@gmail.com}
\affiliation{QuTech, Delft University of Technology, Lorentzweg 1, 2628 CJ Delft, The Netherlands}
\affiliation{Kavli Institute of Nanoscience, Delft University of Technology, Lorentzweg 1, 2628 CJ Delft, The Netherlands}
\author{Kaushik Chakraborty}
\thanks{These authors contributed equally.}
\affiliation{QuTech, Delft University of Technology, Lorentzweg 1, 2628 CJ Delft, The Netherlands}
\affiliation{Kavli Institute of Nanoscience, Delft University of Technology, Lorentzweg 1, 2628 CJ Delft, The Netherlands}
\author{Guus Avis}
\thanks{These authors contributed equally.}
\affiliation{QuTech, Delft University of Technology, Lorentzweg 1, 2628 CJ Delft, The Netherlands}
\affiliation{Kavli Institute of Nanoscience, Delft University of Technology, Lorentzweg 1, 2628 CJ Delft, The Netherlands}
\author{Stephanie Wehner}
\email{s.d.c.wehner@tudelft.nl}
\affiliation{QuTech, Delft University of Technology, Lorentzweg 1, 2628 CJ Delft, The Netherlands}
\affiliation{Kavli Institute of Nanoscience, Delft University of Technology, Lorentzweg 1, 2628 CJ Delft, The Netherlands}

\begin{abstract}
We consider the problem of deploying a quantum network on an existing fiber infrastructure, where quantum repeaters and end nodes can only be housed at specific locations. We propose a method based on integer linear programming (ILP) to place the minimal number of repeaters on such an existing network topology, such that requirements on end-to-end entanglement-generation rate and fidelity between any pair of end-nodes are satisfied. While ILPs are generally difficult to solve, we show that our method performs well in practice for networks of up to 100 nodes. We illustrate the behavior of our method both on randomly-generated network topologies, as well as on a real-world fiber topology deployed in the Netherlands.
\end{abstract}

\maketitle

\section{Introduction}
\label{sec:introduction}

The quantum internet will provide an infrastructure for quantum communication between any two devices in the world  \cite{Van14, LSWK04, Kim08, WEH18}.
This can be used to perform tasks which are provably impossible with the classical internet.
Many of these are cryptographic in nature and allow unconditional security, such as quantum key distribution \cite{bb14,E91}, secure multi-party cryptography \cite{BC16} and blind quantum computation \cite{broadbentUniversalBlindQuantum2009}.
Other applications of the quantum internet include fast byzantine agreement \cite{ben-orFastQuantumByzantine2005} and clock synchronization \cite{komarQuantumNetworkClocks2014a}. \\

A major challenge in the construction of terrestrial quantum networks is to overcome exponential loss in optical fibers.
In order to enable quantum communication over large distances, quantum repeaters are required.
These can form a quantum-repeater chain in which consecutive nodes are connected by elementary links.
Quantum repeaters are a very active research area and major advances have been achieved recently \cite{bhaskarExperimentalDemonstrationMemoryenhanced2020, stephensonHighrateHighfidelityEntanglement2019, rozpedekNeartermQuantumrepeaterExperiments2019, humphreysDeterministicDeliveryRemote2018, yuEntanglementTwoQuantum2020}.
However, the technology is not yet at the stage of practical deployment, 
and we anticipate that the first practical quantum repeaters will be costly.
It seems likely that before a global quantum internet is effected, smaller quantum networks connecting a limited number of end nodes are deployed.
A cost-efficient way of deploying such networks is using existing classical infrastructure by converting already-deployed optical fiber and installing quantum repeaters at strategic locations.\\

We model a classical fiber network which forms the basis of a quantum network as an undirected, weighted graph $G = (\mathcal N, \mathcal F, \mathcal L)$.
The nodes $\mathcal N$ are partitioned into a set of \emph{end nodes} $\mathcal C \subset \mathcal N$ and a set of \emph{potential repeater locations} $\mathcal R = \mathcal N \setminus \mathcal C$.
The goal of the quantum network is to enable quantum communication between end nodes.
Potential repeater locations are any location in the network where a quantum repeater could be placed.
Such a location could, for example, be a hub in the classical network with the facilities required to run a quantum repeater.
The edges of the graph are the fibers of the network, $\mathcal F$, where $\mathcal L (f)$ is the length of fiber $f \in \mathcal F$.
In case a quantum repeater is installed at a potential repeater location, the potential repeater location becomes a \emph{quantum-repeater node}.
When deploying a quantum network based on a classical fiber network, it is essential to determine which potential repeater locations should be turned into quantum-repeater nodes.
\\

In order to have an operational quantum network, nodes must be connected by elementary links.
For many quantum-repeater schemes (such as those using heralded entanglement generation \cite{inside_quantum_repeaters}), elementary links consist of fibers with active elements measuring qubits.
Therefore, when deploying a quantum network based on a classical fiber network, it must also be determined which fibers to convert into elementary links.
Here, we consider that elementary links can be constructed from any number of consecutively-adjacent fibers in the graph $G$ (passing through potential repeater locations).
Both fibers and potential repeater locations can be part of multiple elementary links, which is motivated by the fact that fibers are typically constructed in bundles (meaning that each elementary link could, in fact, use the same fiber bundle but a different fiber). Additionally, multiplexing over different wavelengths could be used to enable the use of a single fiber in multiple elementary links.
For an example of how a (very) small classical fiber network can be used to create a quantum network, see \Cref{fig:four_node_ill}. \\

\begin{figure}[b]
    \captionsetup{justification=raggedright}
    \includegraphics[width=0.35\textwidth]{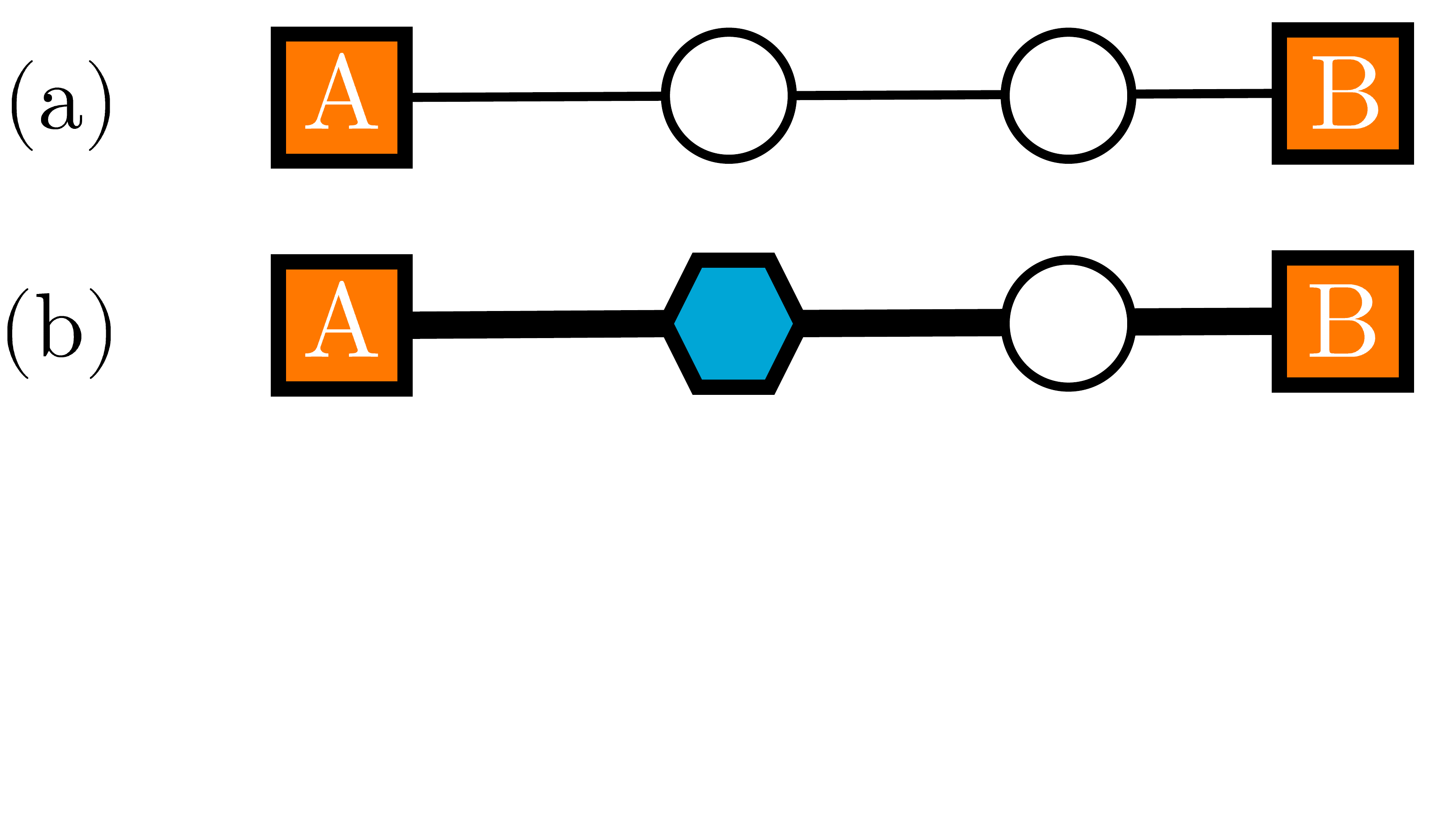}
    \vspace{-35pt}
    \caption{Example of how a quantum network can be constructed. \textbf{(a)} Graph representing a simple fiber network. Nodes A and B are end nodes, while the other two nodes are potential repeater locations. \textbf{(b)} Quantum network that is constructed using the preexisting fiber network. The first node from A is used as a quantum-repeater node (blue hexagon) and there are two elementary links. One elementary link is made from A to the quantum-repeater node, while the other starts at the quantum-repeater node and ends at B.}
    \label{fig:four_node_ill}
\end{figure}

Here, we introduce the problem of determining how to construct a quantum network using a preexisting classical fiber network as the repeater-allocation problem.
We define it as follows:

\pagebreak 

\noindent \textbf{Repeater-Allocation Problem}:
Given a classical fiber network corresponding to the undirected, weighted graph $G = (\mathcal N, \mathcal F, \mathcal L)$ with end nodes $\mathcal C \subset \mathcal N$.
Which of the potential repeater locations $\mathcal R = \mathcal N \setminus \mathcal C$ should be turned into quantum-repeater nodes,
and which fibers should be converted into elementary links,
such that a quantum network is obtained which satisfies a set of network requirements, while the associated costs are minimized? \\

In this paper
we present, to the best of our knowledge for the first time,
a method which solves the repeater-allocation problem.
Here, we only consider the costs associated to installing quantum repeaters, as we expect that the first practical quantum repeaters will come at a high cost.
Furthermore, the set of network requirements that we consider are the following:
\begin{enumerate}[label=\textbf{\arabic*}, ref={\arabic*}]
	\item \emph{Rate and fidelity}. \\
	The quantum network must be able to distribute bipartite entangled quantum states between any pair of end nodes at some minimum rate, which we denote $R_\text{min}$.
	Furthermore, the states must have some minimum fidelity to a maximally entangled state, which we denote $F_\text{min}$.
	The network must be able to do this for every pair of end nodes simultaneously. 
	
	In a quantum-repeater chain with fixed hardware, the rate of entanglement distribution is limited by loss and noise in elementary links and in quantum repeaters. 
	Therefore, it is generally possible to lower bound the rate by upper bounding the number of quantum repeaters (and thereby the number of elementary links), and the length of each elementary link (assuming the photon loss probability per unit length is constant).
	Similarly, fidelity is limited by noisy operations in quantum repeaters, while it can also be a decreasing function of the elementary link length (this can be, for example, due to dark counts in detectors).
	Therefore, fidelity too can be lower bounded by upper bounding the number of quantum repeaters and the elementary link length.
	
	We use these bounds to assess whether the rate and fidelity between a pair of end nodes is sufficient.
	For any $R_\text{min}$ and $F_\text{min}$, we can find $N_\text{max}$ and $L_\text{max}$ such that a repeater chain of $N_\text{max}$ repeaters and elementary links of length $L_\text{max}$ can deliver entangled states at rate $R_\text{min}$ with fidelity $F_\text{min}$.
	Then, we consider two end nodes capable of receiving entangled states with at least rate $R_\text{min}$ and at least fidelity $F_\text{min}$ 
	if there is a free path between them which contains at most  $N_\text{max}$ repeaters and of which each elementary link is at most $L_\text{max}$ long.
	
	How exactly $N_\text{max}$ and $L_\text{max}$ can be determined from $R_\text{min}$ and $F_\text{min}$ is specific to the quantum-repeater architecture and depends on various performance parameters.
	We give a toy-model calculation in \Cref{subsec:methods:toy-model} as an example.
	Note that when considering a quantum-repeater architecture which is not based on entanglement distribution, the method presented in this paper is still applicable if a performance metric like rate and fidelity can be determined which can be lower bounded by upper bounding the number of repeaters and the elementary link lengths of a repeater chain.
	
	\label{requirement:quality}
	
	\item \emph{Robustness}. \\
	When a part of a quantum network breaks down, all other requirements should still be met.
	We quantify this using the minimum number of quantum-repeater nodes or elementary links (it can be any combination) that need to break down before one of the other requirements can no longer be met.
	Here, we use the symbol $K$ to refer to this number. 
	\label{requirement:robustness}
	
	\item \emph{Repeater capacity}.\\
	Quantum-repeater nodes should never be required to operate above their capacity in order to meet all other network requirements.
	We define the capacity of a quantum-repeater node as the maximum number of quantum-communication sessions it can facilitate simultaneously.
	In an entanglement-based network, this could be directly related to the number of entangled states that can be stored in memory or the number of Bell-state measurements that can be performed simultaneously.
	Here, we use the symbol $D$ to refer to the capacity of the quantum-repeater nodes.
	\label{requirement:repeater_capacity}
\end{enumerate}
\section{Results}
\label{sec:results}

In this section we present a method, detailed in \Cref{method}, which aids in the design of a quantum network using existing classical infrastructure.
Specifically, given a fiber network, our method makes it possible to choose at which locations quantum repeaters should be installed.
This is done such that entangled states can be distributed between all pairs of end nodes simultaneously with a minimum rate and fidelity.
Furthermore, our method guarantees that the resulting quantum network is robust against failure of quantum repeaters and elementary links, and can take finite capacity of quantum repeaters into account.
At the same time, our method minimizes the total number of quantum repeaters that need to be installed. 
We dub the problem that our method solves the repeater-allocation problem. 
\smallskip \\ 

\begin{table}[ht]
	\boxalign{
		\vspace{.5cm}
		\textbf{Input}
		\begin{itemize}
			\item Fiber network graph $G = (\mathcal N, \mathcal F, \mathcal L)$.
			\item Set of end nodes $\mathcal C \subset \mathcal N$. 
			\item Minimum rate $R_\text{min}$ and fidelity $F_\text{min}$ required by end nodes.\footnotemark[1]
			\item Required robustness parameter $K$ (number of quantum-repeater nodes and elementary links that must be incapacitated before network operation is compromised).
			\item Capacity parameter $D$ (number of quantum-communication sessions that one quantum repeater can facilitate simultaneously).
    		\end{itemize}
    		\vspace{.5cm}
    		\textbf{Method}
    		\begin{enumerate}
    			\item Determine values for the parameters $L_\text{max}$ and $N_\text{max}$ such that a quantum-repeater chain consisting of $N_\text{max}$ repeaters and elementary links of length $L_\text{max}$ is able to deliver entangled states at rate $R_\text{min}$ with fidelity $F_\text{min}$ to a maximally entangled state.
    			\item Construct the set of potential repeater locations
    			\begin{equation} \label{eq:R}
    			\mathcal R = \mathcal N \setminus \mathcal C.
    			\end{equation}
    			\item Construct the set
    			\begin{equation}
    			\mathcal Q = \Big \{ (s, t)| (s, t) \in_R \{ (i, j), (j, i) \}, i, j \in C, i \neq j \Big \}, \label{eq:Q}
    			\end{equation}
    			
    			where $\in_R$ implies picked uniformly at random.
			\item For every $(s, t) \in \mathcal Q$, construct the set
			\begin{equation}\label{eq:E}
			\mathcal E _{(s, t)} = \Big \{ (n_1, n_2) | n_1 \in \mathcal R \cup \{s\}, n_2 \in \mathcal R \cup \{t\} , n_1 \neq n_2\Big \},
			\end{equation}
			and then construct the set
			\begin{equation}
			 \mathcal E = \bigcup_{q \in \mathcal Q} \mathcal E_q.
			\end{equation}
			\item For every $(u,v) \in \mathcal{E}$, determine the shortest path from $u$ to $v$ in the fiber-network graph $G$.
			Store the length of the path as $L\Big( (u,v) \Big)$ and the fibers making it up as $F\Big( (u,v) \Big)$.
			\item Solve the link-based formulation in \Cref{lbf} using an ILP solver. 
			Store the values of the variables $x^{q,k}_{uv}$ and $y_u$.
			\item Apply the path extraction algorithm, i.e. \Cref{alg:path_extraction}, to obtain the set $\mc P^*$.
			For every $(u,v) \in \mathcal E$, set $x_{uv}^{q,k} = 0$ if there is no $p \in \mathcal P^*$ such that $(u,v) \in p$.
			\label{step:remove_loops}
		\end{enumerate}
		\vspace{.5cm}
		\textbf{Solution}
		\begin{itemize}
			\item Every potential repeater location $u \in \mathcal R$ for which $y_u = 1$ should be used as a quantum-repeater node.
			\item For every $(u, v) \in \mathcal E$ for which $x_{uv}^{q,k} = 1$ for some value of $q$ and $k$, an elementary link should be constructed using the fibers $F\Big( (u,v) \Big)$.
		\end{itemize}
	}
	\caption{Method to solve the repeater-allocation problem.}
	\label{method}
\end{table}

Key to our method is integer linear programming (ILP), which can be used to obtain the optimal repeater placement with an optimization solver such as Clp \cite{coinor}, Gurobi \cite{gurobi} or CPLEX \cite{ibm2019cplex}. Our method has been tested both using a real fiber network and a large number of randomized graphs, on which we report in \Cref{subsec:discussion:real_network,subsec:discussion:evaluation_on_random_networks} respectively. 
The real network contains four end nodes and 50 potential repeater locations, and a solution was found in 74 seconds using a computer running a 
quad-core Intel Xeon W-2123 processor at 3.60 GHz and 16 GB of RAM, demonstrating that the method is feasible for realistically-sized networks. \\ 

\renewcommand{\thefootnote}{\alph{footnote}}
\footnotetext[1]{Instead of a minimum rate and fidelity, one can also use the minimum value(s) for other performance metric(s), as long as these can be lower bounded by upper bounding the number of repeaters and elementary link lengths of a quantum-repeater chain.}

Here, we put forward two different ILP formulations.
The first, which we call the \emph{path-based formulation} (see \Cref{pbf}), is based on enumerating and then choosing 
paths between end nodes of the quantum network.
It is relatively easy to show and understand that this formulation indeed solves the repeater-allocation problem (see \Cref{subsec:methods:validity_of_pbf}). 
However, it is not efficient, as the number of variables and constraints in the formulation 
grows exponentially with the size of the network.
The second formulation is the \emph{link-based formulation} (see \Cref{lbf}).
This formulation is much more efficient than the path-based formulation, as it only grows polynomially with the size of the network.
Therefore, our method as described in \Cref{method} uses the link-based formulation.
It is, however, harder to see that the link-based formulation can be used to solve the repeater-allocation problem.
Yet, the link-based formulation is equivalent to the path-based formulation, as we show in Section \ref{subsec:methods:proof_of_equivalence}. \\

The structure of the paper is as follows.
In the remainder of this section, we present
our method for solving the repeater-allocation problem and introduce both the intuitive path-based formulation and the efficient link-based formulation.
Next, in \Cref{sec:discussion}, we first give an example of the use of our method on a real fiber network in the Netherlands.
We also study the behaviour and performance of the method on a large number of randomly-generated network graphs.
Furthermore, we present ways in which our method can be extended, and we discuss its limitations.
Finally, in \Cref{sec:methods}, we argue that the path-based formulation can indeed be used to solve the repeater-allocation problem, we give an example of a rate-fidelity analysis, we sketch a proof of
the equivalence of the path-based formulation and the link-based formulation,
we explain how we generate random network graphs and we present the scaling of the two ILP formulations. 

\subsection{Path-Based Formulation}
\label{subsec:results:pbf}

The main idea behind the path-based formulation, which is shown in \Cref{pbf}, is to enumerate and then choose paths
for every $(s, t) \in \mathcal Q$, where $\mathcal Q$ is the set of all ordered pairs of end nodes as defined in \Cref{eq:Q}.
A path between $s$ and $t$ is a sequence of elementary links that does not contain any loops and connects $s$ and $t$. 
Quantum-repeater nodes are then allocated in such a way that they enable the chosen paths to be used. This can be considered an instance of the set cover problem \cite{daskin2011network}.
To guarantee a minimum rate $R_\text{min}$ and fidelity $F_\text{min}$, we require every chosen path to contain at most $N_\text{max}$ quantum-repeater nodes,
and we require every elementary link in the path to be at most $L_\text{max}$ long.
$N_\text{max}$ and $L_\text{max}$ are functions of $R_\text{min}$ and $F_\text{min}$, and what these functions look like depends on the specific quantum-repeater implementation under consideration.
For an example of how $N_\text{max}$ and $L_\text{max}$ can be derived from $R_\text{min}$ and $F_\text{min}$, see \Cref{subsec:methods:toy-model}.
Furthermore, to guarantee the network is robust, we choose $K$ different paths per end-node pair.
They are chosen such that none of the $K$ paths share a quantum-repeater node or an elementary link.
Finally, to account for the finite capacity of quantum repeaters, we choose the paths such that every quantum-repeater node is only used by at most $D$ different paths.
It can be intuitively understood that any quantum network accommodating the use of all these paths, will satisfy all network requirements considered in this paper. \\

\begin{table}[h]
\boxalign{
\begin{alignat}{2}
    \hspace*{-100pt}
    \min \quad \sum_{u \in \mathcal R} y_u & \phantom{\leq} \label{eq:pbf_objfun} \\
    \text{s.t. } \quad 
    L\Big((u, v)\Big)x_p &\leq L_\text{max} \qquad && \forall (u, v) \in p, p \in \mathcal P \label{eq:pbf_max_length} \\
    |p| x_p &\leq N_\text{max} + 1 \qquad && \forall p \in \mathcal P \label{eq:pbf_max_repeaters}\\
    \sum_{p \in \mathcal{P}_q} x_p & = K && \forall q \in \mathcal Q  \label{eq:pbf_K}\\
    \sum_{p \in \mathcal{P}_q}r_{up}x_p & \leq 1 && \forall u \in \mathcal{R}, q \in \mathcal{Q} \label{eq:pbf_disjoint}\\
    \sum_{p \in \mathcal P}r_{up}x_p & \leq Dy_u && \forall u \in \mathcal{R} \label{eq:pbf_capacity}  \\
     x_p & \in \{0, 1\} && \forall p \in \mathcal P \label{eq:pbf_x}\\
     y_{u} & \in \{0,1\} && \forall u \in \mathcal R \label{eq:pbf_y} \\
     \mathrm{where} \quad
     r_{up} &= 
     \begin{cases}
     1 \qquad \text{if path $p$ uses $u$ as a quantum-repeater node} \\
     0 \qquad \text{otherwise}
     \end{cases}
     \qquad
     &&\forall u \in \mathcal R, p \in \mathcal P 
\end{alignat}
}
\caption{Path-based formulation.}
\label{pbf}
\end{table}

Key to the path-based formulation are the binary decision variables $x_p$, which are defined for every path $p \in \mathcal P = \cup_{(s, t) \in \mathcal Q} \mathcal P_{(s, t)}$, where $\mathcal P_{(s, t)}$ is the set of all possible paths from end node $s$ to end node $t$.
The elementary links that can be contained by a path $p \in \mathcal P_{(s, t)}$ must all be in $\mathcal E_{(s, t)}$, which is defined in Equation \eqref{eq:E}.
Each $x_p$ has value $1$ when $p$ is considered part of the chosen set of paths, and $0$ otherwise. 
Furthermore, there are the binary decision variables $y_u$ for all $u \in \mathcal R$. 
$y_u$ is $1$ if a quantum repeater is placed at potential repeater location $u$, and $0$ otherwise.
Constraints \eqref{eq:pbf_max_length} to \eqref{eq:pbf_capacity} guarantee that these variables are chosen such that all network requirements are satisfied.
The objective function \eqref{eq:pbf_objfun} ensures that they are chosen such that the total number of quantum-repeater nodes is minimized.
It is argued that solutions to the path-based formulation are indeed solutions to the repeater-allocation problem in \Cref{subsec:methods:validity_of_pbf}.\\

The path-based formulation requires us to define one variable $x_p$ corresponding to each path $p \in \mc P$. 
Hence, the total number of variables as well as the number of constraints are at least $|\mc P|$, which is $O(|\mathcal N|!)$. 
Therefore the size of the input to the ILP solver scales exponentially with the number of nodes. 
This makes the path-based formulation unsuitable for designing quantum networks based on large fiber networks.
Our implementation of the path-based formulation in CPLEX can be found in the repository \cite{githubcode}.
In the next section, we give a more efficient formulation. 

\subsection{Link-Based Formulation}
\label{subsec:results:lbf}

Here we present the link-based formulation, which can be found in \Cref{lbf}. 
This formulation is inspired by the capacitated facility location problem \cite{daskin2011network}.
Instead of choosing which paths to use, we choose which elementary links to use.
Quantum repeaters can then be placed such that each chosen elementary link is enabled.
To this end, for each end-node pair $q \in \mc Q$, for every elementary link $(u,v) \in \mathcal E_q$ and for $k = 1, 2, \ldots, K$,  we define the binary decision variable $x^{q,k}_{uv}$. 
It can be thought of as indicating whether elementary link $(u,v)$ is used in the $k^\text{th}$ path used to connect end node $s$ to end node $t$, where $q = (s, t)$.
Furthermore, we again use the variables $y_u$ that indicate whether node $u \in \mathcal R$ is used as a quantum-repeater node. \\

\begin{table}[h]
\boxalign{
\begin{alignat}{2}
    \hspace*{-100pt}
    \min \quad \sum_{u \in \mathcal{R}} y_u & \phantom{\leq} \label{eq:lbf_objfun} \\
    \mathrm{s.t.} \quad 
    \sum_{\substack{v \\ (u, v) \in \mathcal E _q}} x^{q,k}_{uv} - \sum_{\substack{v \\ (v, u) \in \mathcal E _q}}x^{q,k}_{vu} & =
    \begin{cases}
    1, \quad & \text{if } u = s \\
    -1, \quad & \text{if } u = t \\
    0, \quad & \text{if } u \in \mathcal{R}
    \end{cases}
    \qquad
    &&\forall u \in \mathcal R \cup \{s, t\}, q = (s, t) \in \mathcal{Q}, k = 1, 2, \ldots, K \label{eq:lbf_flowcon} \\
    L\Big( (u, v) \Big) x^{q,k}_{uv} & \leq L_{\mathrm{max}} &&  \forall (u,v) \in \mathcal E_q,  q \in \mathcal{Q}, k = 1, 2, \ldots, K \label{eq:lbf_max_length} \\
    \sum_{(u,v) \in \mathcal E _q}  x^{q,k}_{uv} & \leq N_{\mathrm{max}} +1 && \forall q \in \mathcal{Q}, k = 1, 2, \ldots, K \label{eq:lbf_max_repeaters} \\
    \sum_{\substack{v \\ (u, v) \in \mathcal E_q}}\sum_{k=1}^K x_{uv}^{q,k} & \leq 1 && \forall u \in \mathcal{R}, q \in \mathcal{Q} \label{eq:lbf_disjoint_repeaters} \\
    \sum_{k=1}^K x_{st}^{q,k} & \leq 1 && \forall q \in \mathcal{Q} \label{eq:lbf_max_one_direct_path} \\
    \sum_{q \in\mathcal{Q}} \sum_{\substack{v \\ (u, v) \in \mathcal E _q}} \sum_{k = 1}^{K} x^{q,k}_{uv} & \leq D y_u && \forall u \in \mathcal{R} \label{eq:lbf_capacity} \\
    x^{q,k}_{uv} & \in \{0, 1\} && \forall (u,v) \in \mathcal{E}_q, q \in \mathcal{Q}, k = 1, 2, \ldots, K \label{eq:xdmon_lbf_k} \\
    y_u & \in \{0, 1\} && \forall u \in \mathcal{R} \label{eq:ydom_lbf}
\end{alignat}
}
\caption{Link-based formulation.}
\label{lbf}
\end{table}

Because the number of elementary links scales polynomially with the number of nodes, both the number of variables and the number of constraints also scale polynomially with the number of nodes $|\mc N|$.
In particular, they are $O (|\mc N|^2)$ (see \Cref{subsec:methods:scaling} for a derivation).
Our implementation of the link-based formulation in CPLEX can be found in the repository \cite{githubcode}. \\

In \Cref{subsec:methods:proof_of_equivalence}, we sketch the proof of the equivalence of the path-based formulation and the link-based formulation.
Furthermore, we sketch why the variables $x_{uv}^{q,k}$ and $y_u$ still provide a solution to the link-based formulation after performing step \ref{step:remove_loops} of \Cref{method}.
The reason this step is included in our method is because, otherwise, elementary links could be included in the solution which are not necessary to meet the network requirements.
The detailed version of the proof can be found in \Cref{app:proof_of_equivalence}.
Since the link-based formulation scales much more favourably with the size of the fiber network under consideration, 
it is more efficient to use this formulation when solving the repeater-allocation problem for large networks.

\section{Discussion}
\label{sec:discussion}

In this section we illustrate our method as implemented by the link-based formulation using the Python API of CPLEX version 12.9 \cite{ibm2019cplex}. The corresponding code can be found in the repository \cite{githubcode}.
Furthermore, we investigate the effect of varying network-requirement parameters and discuss possible extensions and limitations of our method.

\subsection{Example on a Real Network}
\label{subsec:discussion:real_network}

Here, we demonstrate our method by solving the repeater-allocation problem for a real fiber network.
The fiber network that we consider is the core network of SURFnet.
The latter is a network provider for Dutch educational and research institutions and has provided us with the network data, which is available in the repository \cite{githubcode}.
The network graph is depicted in \Cref{fig:surfnet_data}. \\

\begin{figure}[t]
	\centering
	\captionsetup{justification=raggedright}
	\includegraphics[width=1\textwidth]{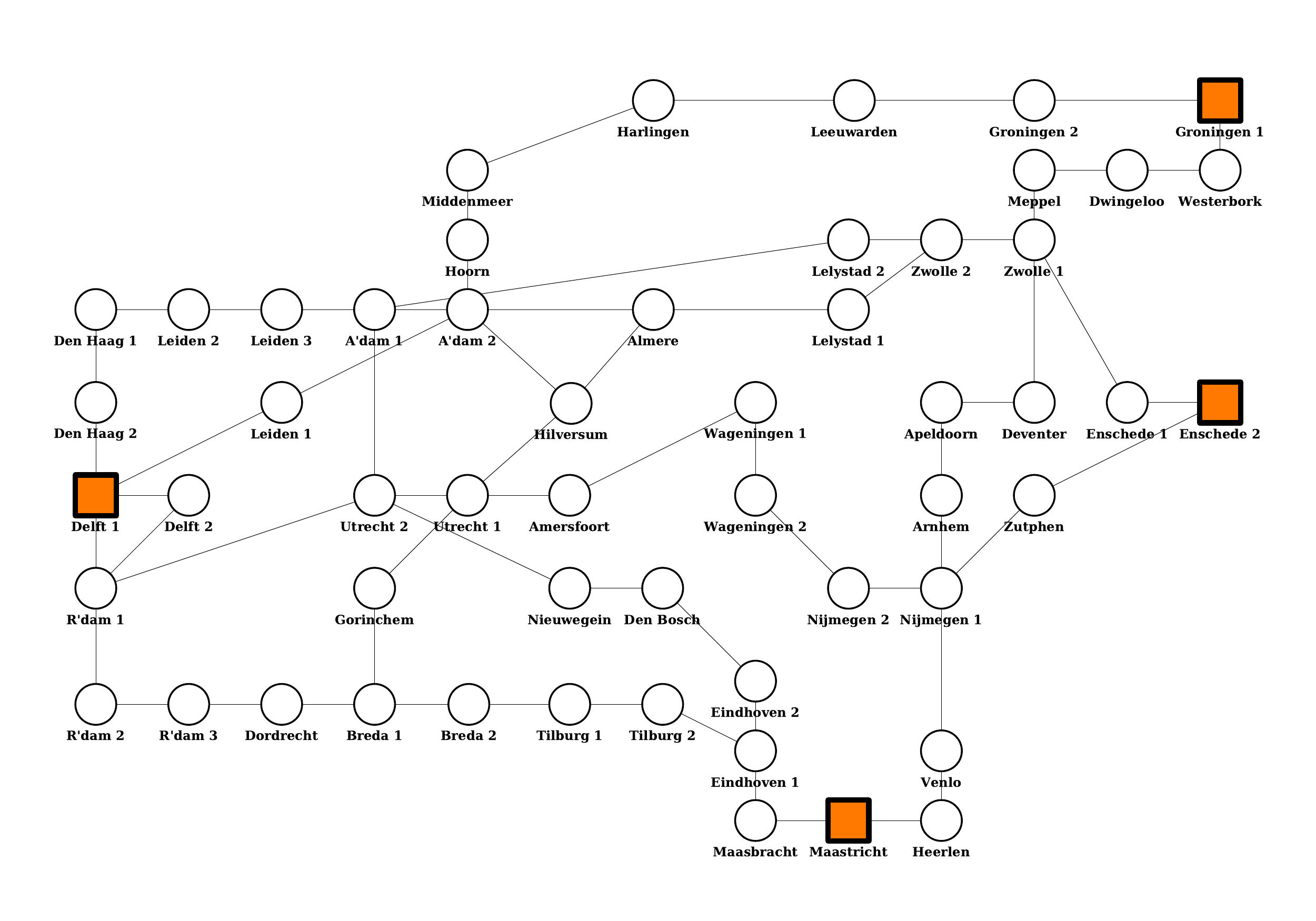}
	\caption{Graph representation of SURFnet core network. 
		Node locations roughly correspond to geographical locations but have been adjusted for readability.
		Lengths of fibers connecting nodes are not shown.
		Nodes that are used as end nodes are shown as orange squares.
		Potential repeater locations are shown as white circles. 
		A'dam and R'dam are used as abbreviations for Amsterdam and Rotterdam respectively.}
	\label{fig:surfnet_data}
\end{figure}

As end nodes of the network, we have chosen the cities of Delft, Enschede, Groningen and Maastricht.
In this example, we consider an entanglement-based quantum network utilizing massive multiplexing as described in e.g. \cite{sinclairSpectralMultiplexingScalable2014}.
For the end nodes, we require a minimum rate of $R_\text{min} = 1$ Hz (one entangled state per second) and a fidelity to a maximally entangled state $F_\text{min} = 0.93$.
Furthermore, we set the robustness parameter to $K=2$ (thus requiring that any single quantum repeater or elementary link in the network can break down without compromising network functionality), and we set the capacity parameter to $D=4$ (which, in this case, means that we assume each quantum repeater can perform four Bell-state measurements simultaneously).\\

The first step of our method requires us to calculate the $L_\text{max}$ and $N_\text{max}$ corresponding to the minimal rate and fidelity we have chosen.
This requires us to study the behaviour of a quantum-repeater chain consisting of $N + 1$ elementary links of length $L$ each.
$L_\text{max}$ and $N_\text{max}$ then have to be chosen as the largest possible values for $L$ and $N$ respectively such that the repeater chain still achieves the required rate and fidelity.
Here, we make a couple of simplifying assumptions to make the calculations more tractable.
Particularly, we assume elementary links generate Werner states, and we assume that the only losses are due to fiber attenuation and probabilistic Bell-state measurements (which we take to have a 50\% success probability).
In \Cref{subsec:methods:toy-model}, we perform the calculation and find that for an elementary-link fidelity $F_\text{link} = 0.99$, number of multiplexing modes $M=1000$, speed of light in fiber $c_\text{fiber} = 200,000$ km/s and attenuation length $L_\text{att} = 22$ km, we have $N_\text{max} = 6$ and $L_\text{max} = 136$ km. \\

The rest of the steps of the method in \Cref{method} have been performed using a Python script and CPLEX \cite{githubcode}.
The resulting solution is shown graphically in \Cref{fig:surfnet_solution}.
All chosen repeater nodes are shown as blue hexagons, while all fibers that are used in elementary links are drawn as thick lines. We see that repeaters are placed around Groningen in order to bridge the large distance to the other end nodes without exceeding the maximum elementary-link length $L_\text{max}$. Additionally, placing quantum-repeater nodes close to Groningen means they can be used for several of Groningen's outgoing connections.
There are multiple such nodes close together because each only has a limited capacity ($D = 4$), and the redundancy increases the robustness of the network.\\

\begin{figure}
    \captionsetup{justification=raggedright}
	\centering
	\includegraphics[width=.9\textwidth]{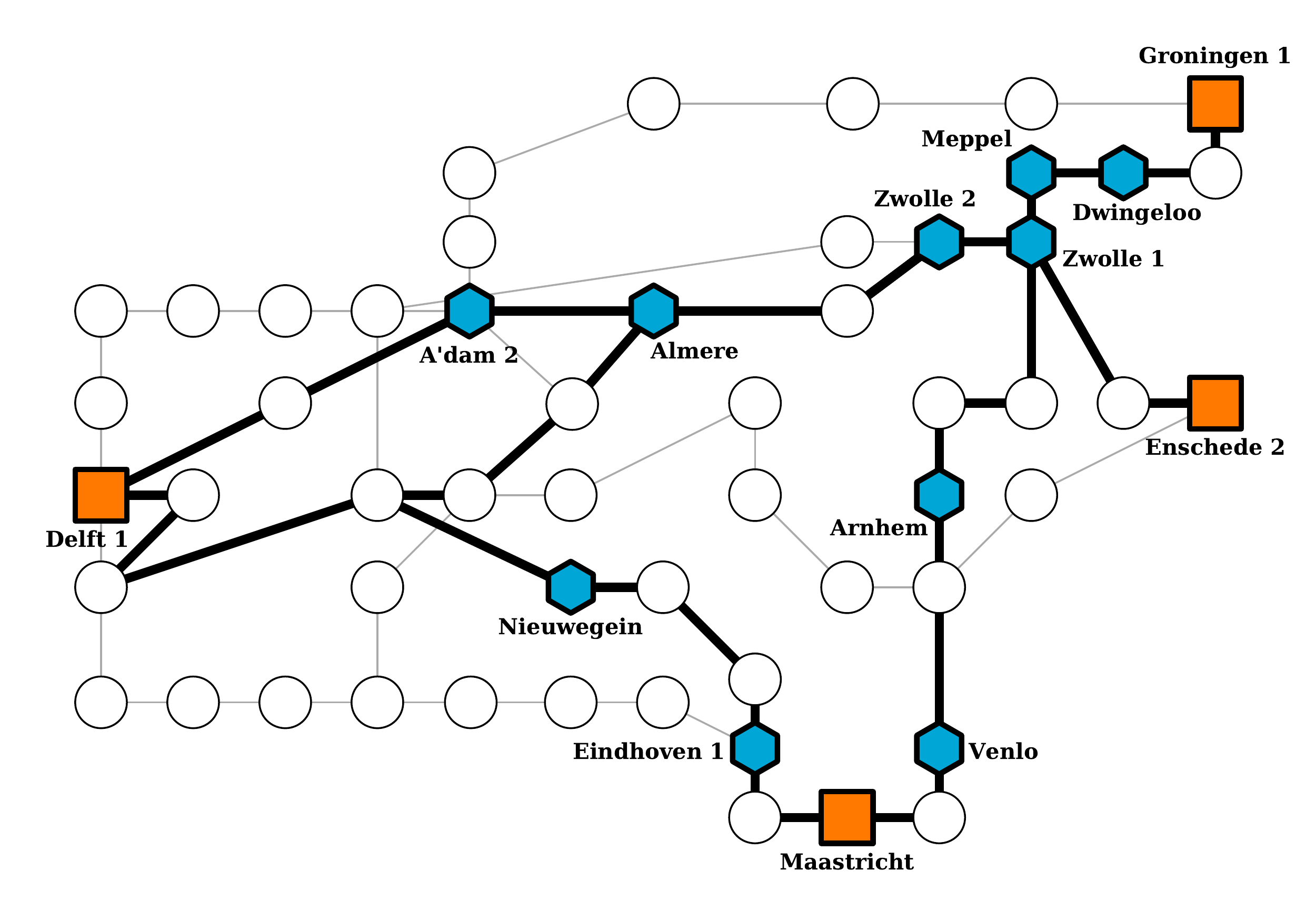}
	\caption{Solution to the repeater-allocation problem for $R_\text{min} = 1$ Hz, $F_\text{min} = 0.93$, $K = 2$ and $D = 4$. 
	The network graph used as input corresponds to the SURFnet network, depicted in Figure \ref{fig:surfnet_data}.
	End nodes are shown as orange squares, quantum-repeater nodes are shown as blue hexagons and the fibers that are used in the elementary links are highlighted with thick lines.
	}
	\label{fig:surfnet_solution}
\end{figure}

On our setup (see \Cref{sec:results}), it took us approximately 74 seconds to find the optimal solution to the link-based formulation for this network. Note that a feasible solution is a combination of decision variable values that satisfy all the constraints, while the optimal solution is a feasible solution that also minimizes the objective function.

\subsection{Effect of Network-Requirement Parameters} \label{subsec:discussion:evaluation_on_random_networks}

Here, we demonstrate and investigate the effect of the different network-requirement parameters on the outcome of our method.
The network-requirement parameters are, in principle, the minimum rate $R_\text{min}$, the minimum fidelity $F_\text{min}$, the robustness parameter $K$ and the capacity parameter $D$.
However, since $R_\text{min}$ and $F_\text{min}$ are translated into a maximum number of repeaters $N_\text{max}$ and a maximum elementary-link length $L_\text{max}$ in our method,  we here consider the network-requirement parameters to be $L_\text{max}$, $N_\text{max}$, $K$ and $D$.
This way, we can keep our discussion agnostic about the exact hardware used to create a quantum network and how $R_\text{min}$ and $F_\text{min}$ are mapped to $N_\text{max}$ and $L_\text{max}$. 
\\

First we give a visual demonstration on how the different network-requirement parameters affect the repeater placement.
To this end, we have created a network graph with end nodes in the corners of the network and 10 possible repeater locations randomly distributed in between the end nodes. 
For details on how the graph was obtained, see \Cref{subsec:methods:generating_random_networks}.
While keeping the network fixed, we vary the network-requirement parameters $D$, $K$ and $L_\text{max}$.
In \Cref{fig:square_k1,fig:square_k2,fig:square_k3}, we explore how the robustness parameter influences the total number of required quantum repeaters. 
Since each repeater has a capacity of $D = 6$ to distribute entanglement between the six end-node pairs,
and because the network is set up in such a way that each path needs exactly one quantum-repeater node to connect end nodes without elementary links exceeding $L_\text{max}=0.9$,
the optimal solution always contains $K$ repeaters. 
In \Cref{fig:square_d1,fig:square_d2,fig:square_d3} on the other hand, we see that as the capacity of quantum repeaters is varied from $D=1$ to $D=3$, the required number of quantum repeaters decreases when $D$ increases. Note that since $K = 1$, the optimal solution here always happens to contain $|\mathcal{Q}|/D$ repeaters.
Finally, in \Cref{fig:square_Lmax06,fig:square_Lmax075,fig:square_Lmax09} we see that as we allow for longer elementary links to be used, the total number of repeaters is decreased. If we would increase $L_\text{max}$ even further, at a certain point every end node can be connected to another end node with a direct elementary link and hence the number of repeaters will drop to zero.
The degeneracy of the optimal solution is visible from the fact that the solutions with two repeaters for $K = 2$ (\Cref{fig:square_k2}), $D = 3$ (\Cref{fig:square_d3}) and $L_\text{max} = 0.75$ (\Cref{fig:square_Lmax075}) are not equal. 
In \Cref{subsec:discussion:extensions}, it is discussed how this degeneracy can be lifted.
We do not show the effect of $N_\text{max}$.
Since the total number of repeaters is already minimized, changing the value of $N_\text{max}$ does not change the repeater allocation, but only determines whether a feasible solution exists at all. \\

Considering how the repeater placement on a single network varies with the network-requirement parameters can offer insight into how our method operates. However, it does not provide a general investigation into the properties of our method. 
In order to make more general and quantitative statements about our method, we will next consider the effect of varying network-requirement parameters on the repeater allocation for an ensemble of random networks.
In this work, we construct random network graphs using random geometric graphs.
That is, network graphs are constructed by scattering nodes randomly on a unit square.
Edges are put only between nodes if the Euclidean distance separating them is smaller than some number, which is called the radius of the random geometric graph.
The nodes which form the convex hull of the network are chosen as end nodes, so that the others are potential repeater locations. This choice is motivated by the fact that any potential-repeater locations that do not lie between end nodes would probably not play an important role anyway.
For a more elaborate account of how we generate random network graphs, see \Cref{subsec:methods:generating_random_networks}. \\

\begin{figure}[H]
    \centering
    \begin{subfigure}[b]{0.3\textwidth}
		\includegraphics[width=\textwidth]{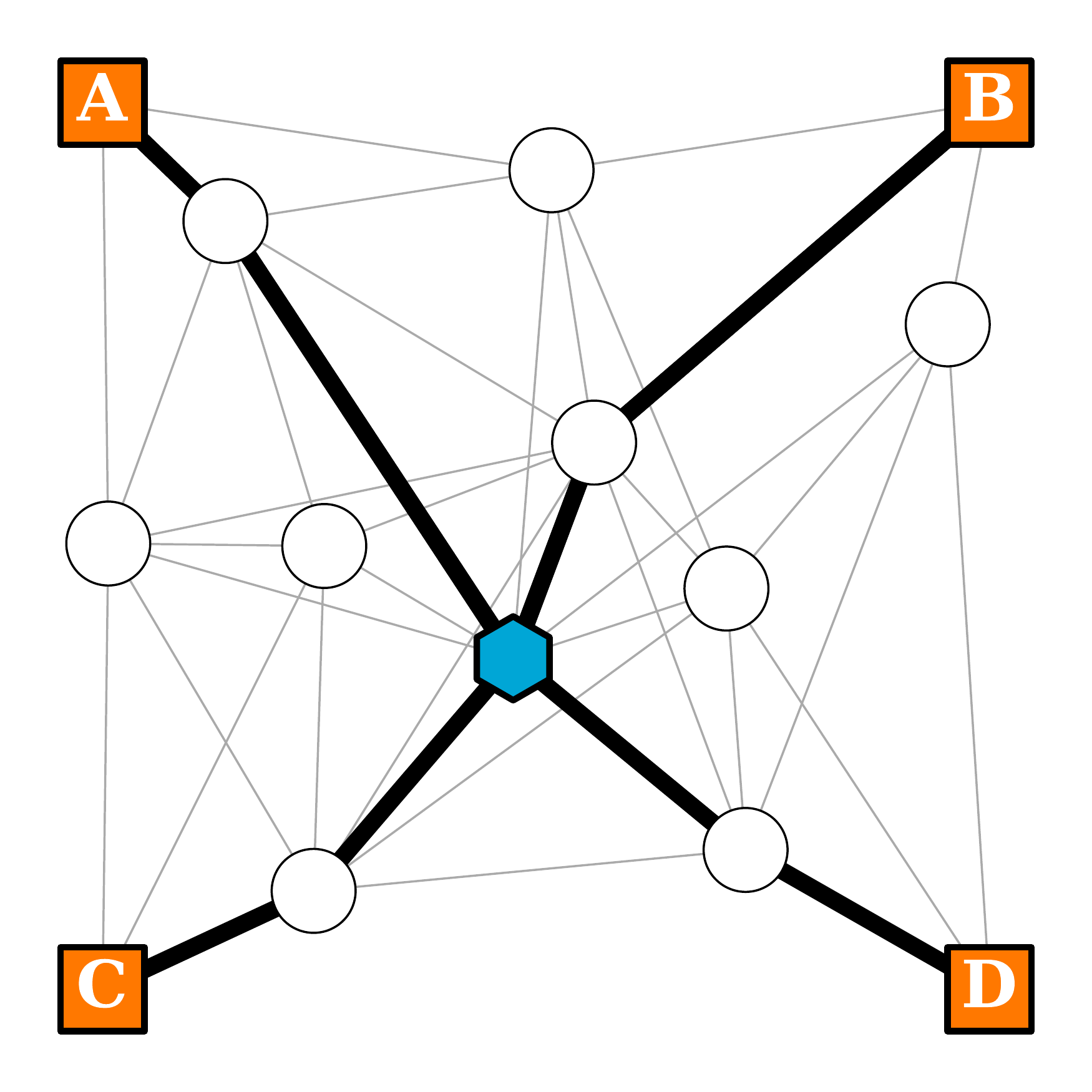}
		\caption{$K=1$}
		\label{fig:square_k1}
	\end{subfigure}
	~
	\begin{subfigure}[b]{0.3\textwidth}
		\includegraphics[width=\textwidth]{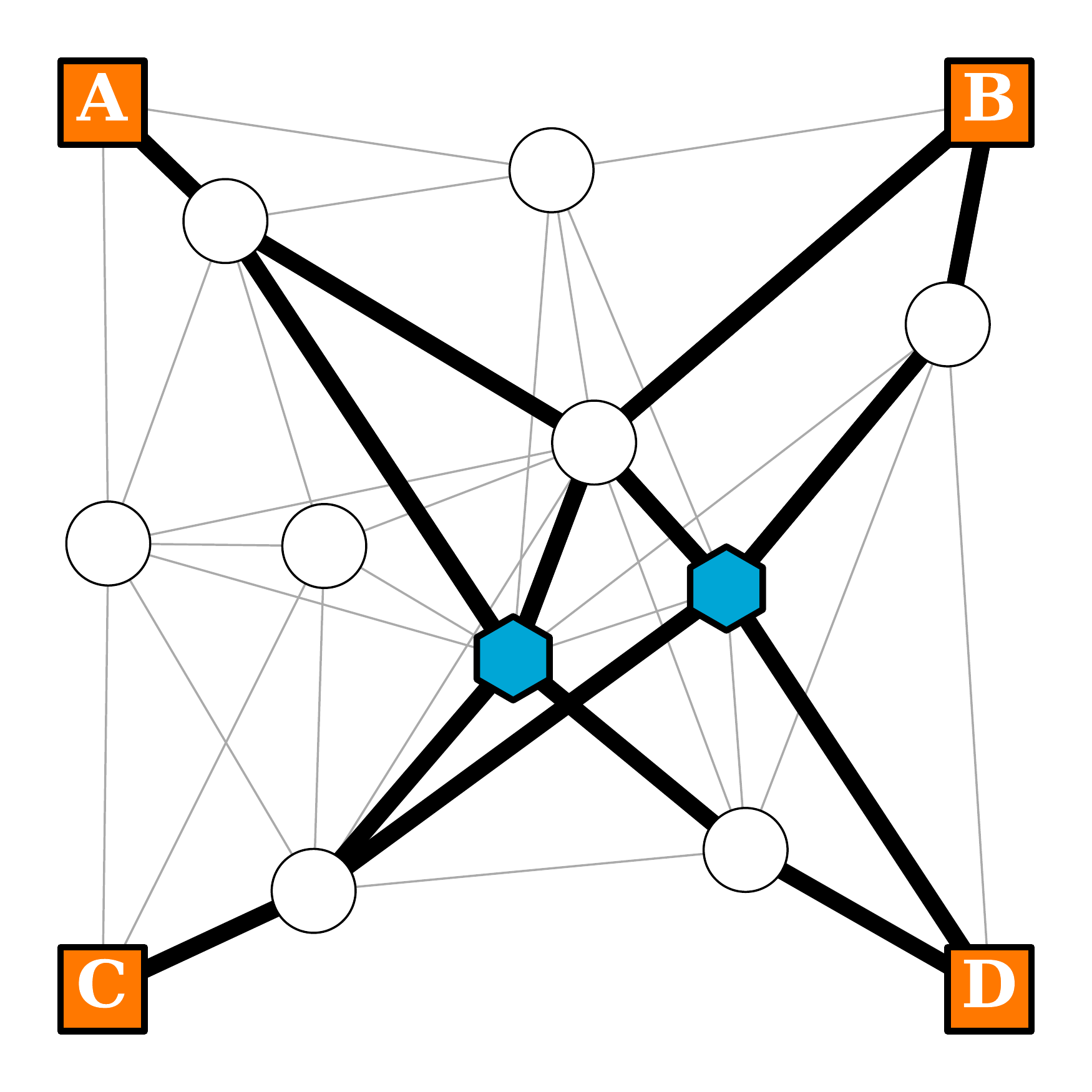}
		\caption{$K=2$}
		\label{fig:square_k2}
	\end{subfigure}
	~
	\begin{subfigure}[b]{0.3\textwidth}
		\includegraphics[width=\textwidth]{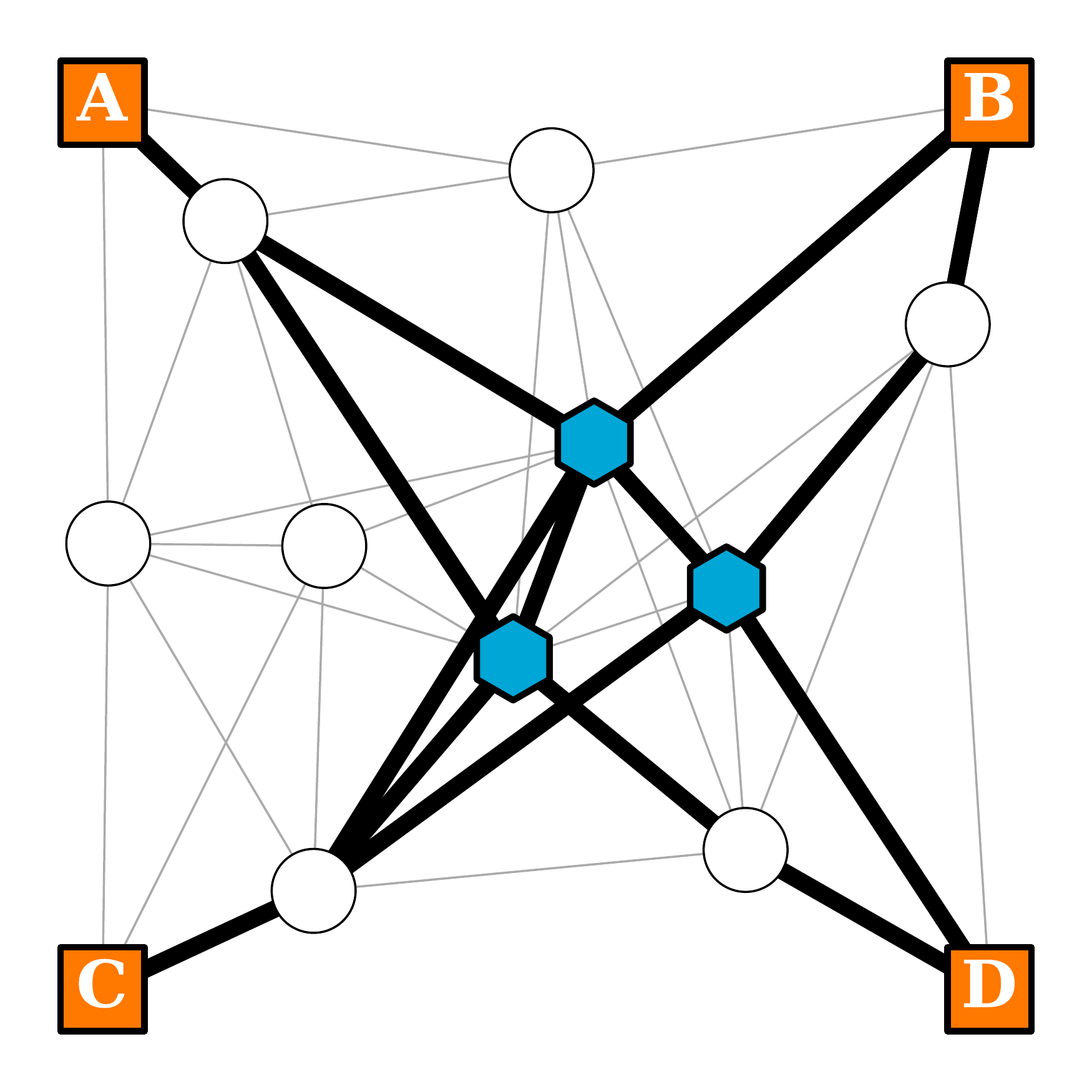}
		\caption{$K=3$}
		\label{fig:square_k3}
	\end{subfigure}
	\begin{subfigure}[b]{0.3\textwidth}
		\includegraphics[width=\textwidth]{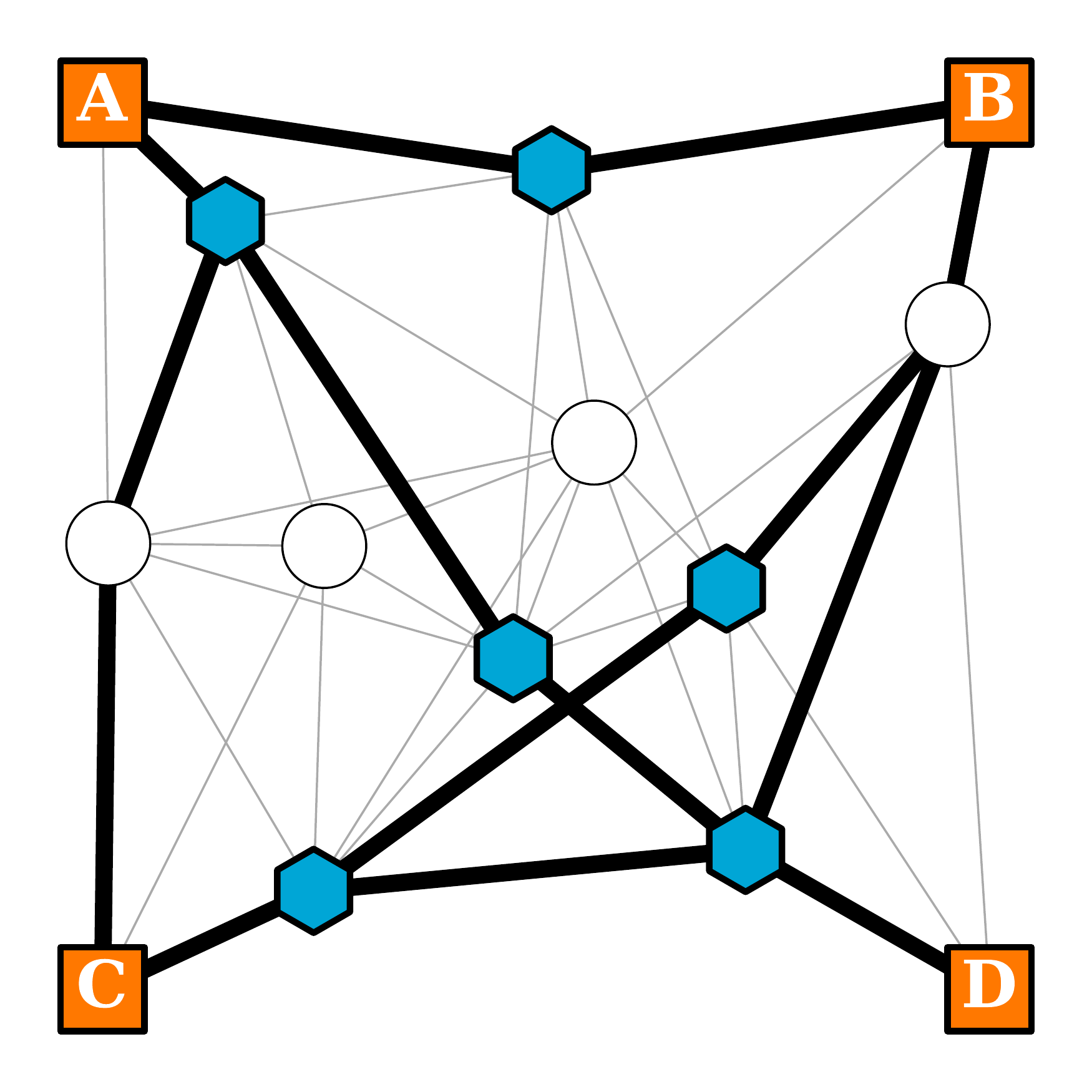}
		\caption{$D=1$}
		\label{fig:square_d1}
	\end{subfigure}
	~
	\begin{subfigure}[b]{0.3\textwidth}
		\includegraphics[width=\textwidth]{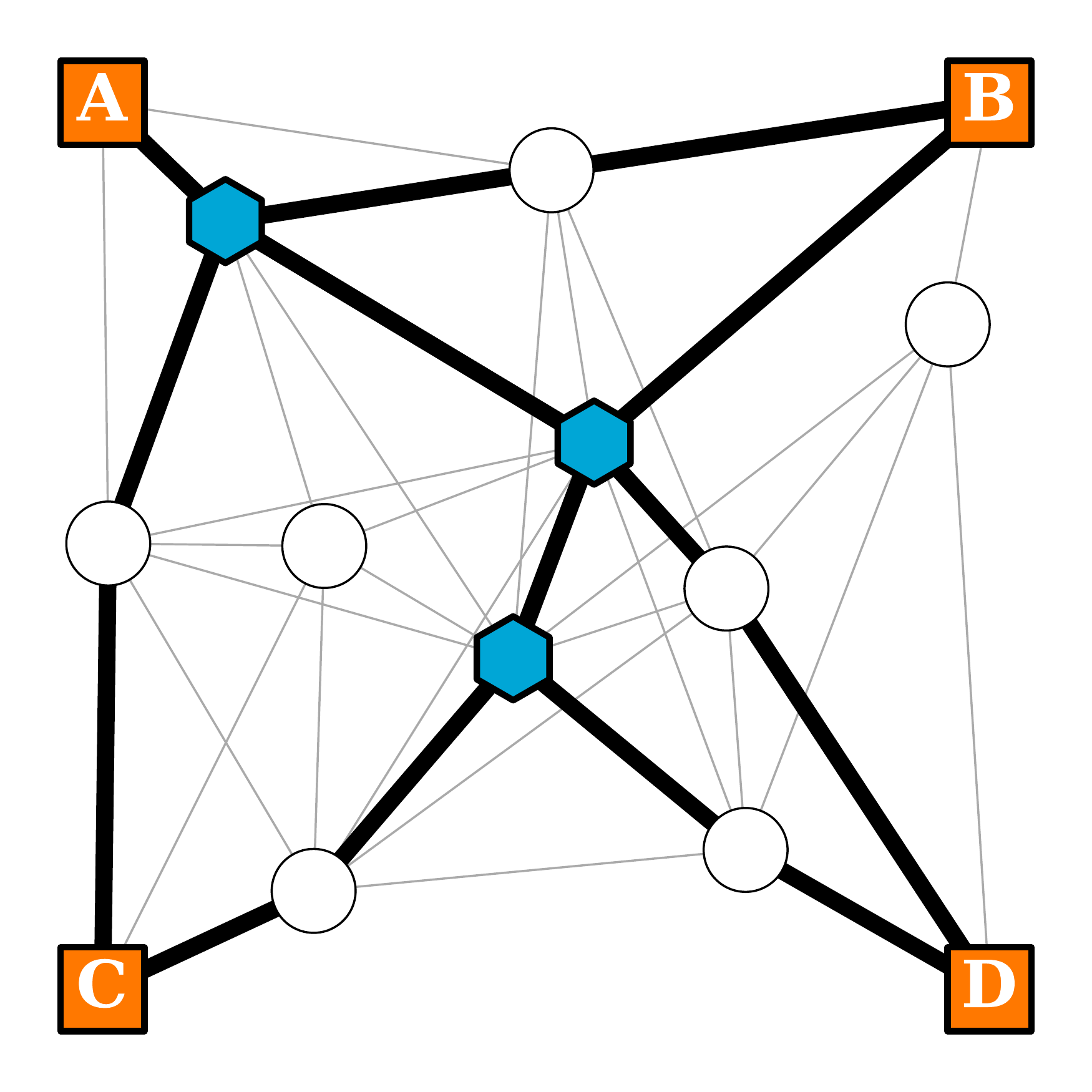}
		\caption{$D=2$}
		\label{fig:square_d2}
	\end{subfigure}
	~
	\begin{subfigure}[b]{0.3\textwidth}
		\includegraphics[width=\textwidth]{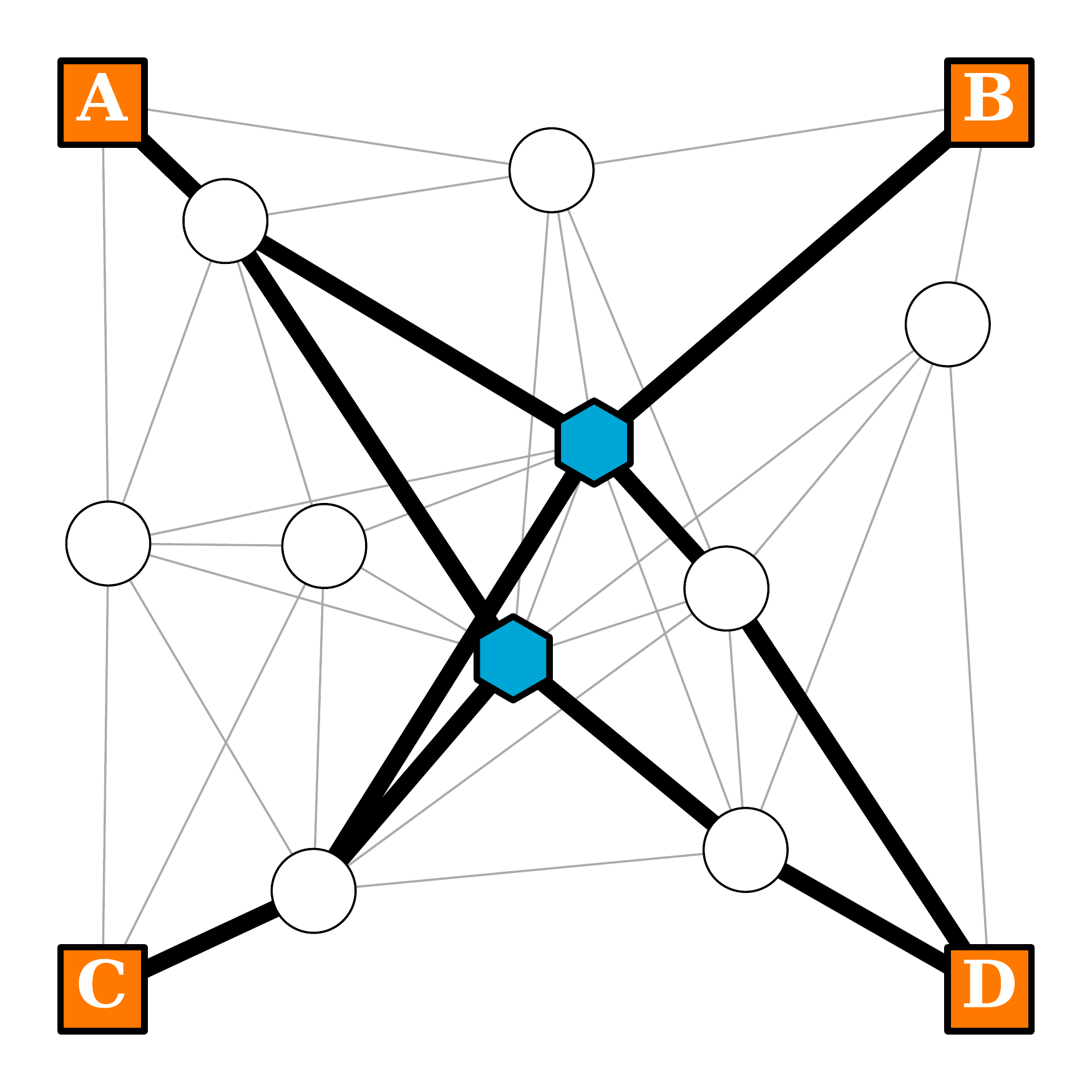}
		\caption{$D=3$}
		\label{fig:square_d3}
	\end{subfigure}
	\begin{subfigure}[b]{0.3\textwidth}
		\includegraphics[width=\textwidth]{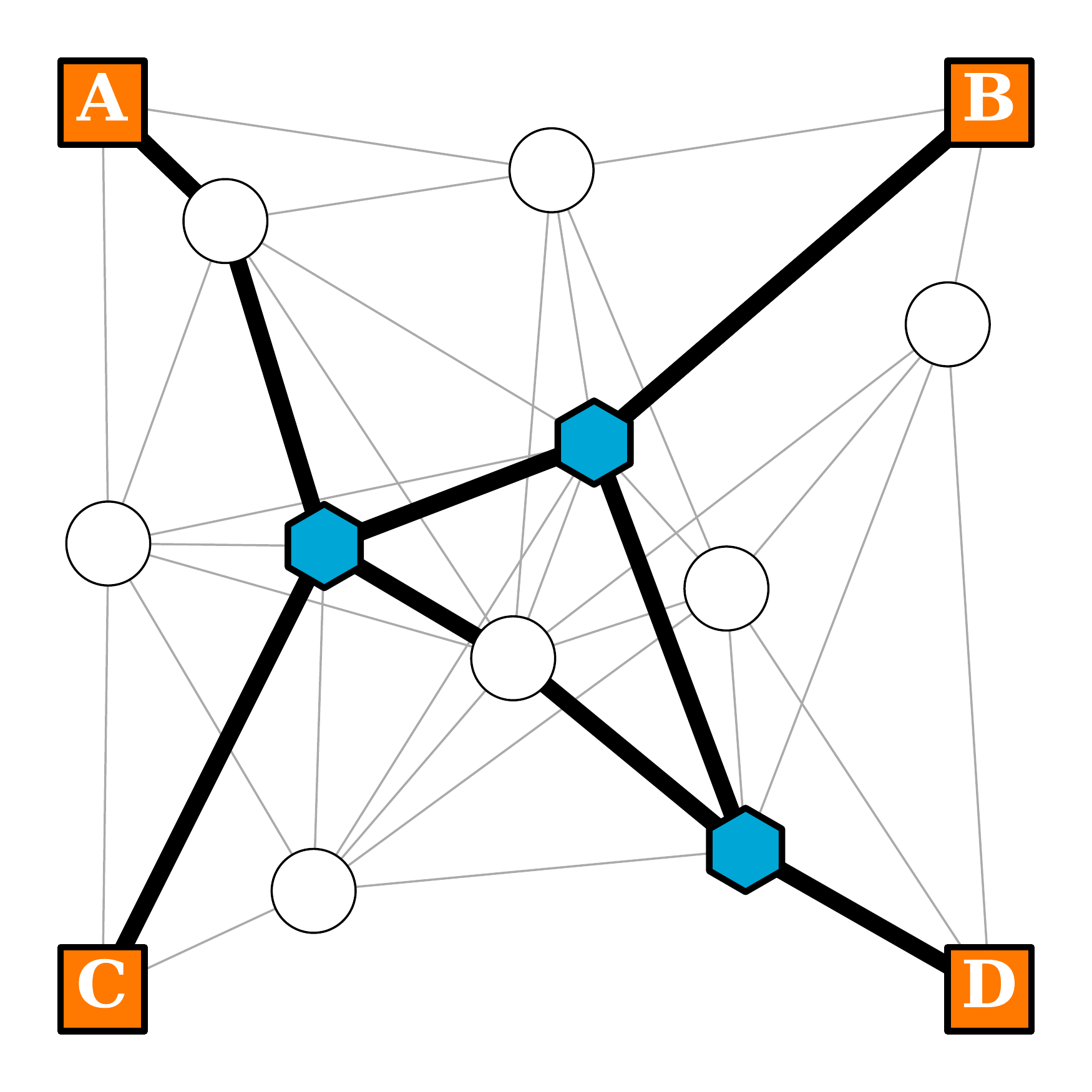}
		\caption{$L_\text{max} = 0.6$}
		\label{fig:square_Lmax06}
	\end{subfigure}
	~
	\begin{subfigure}[b]{0.3\textwidth}
		\includegraphics[width=\textwidth]{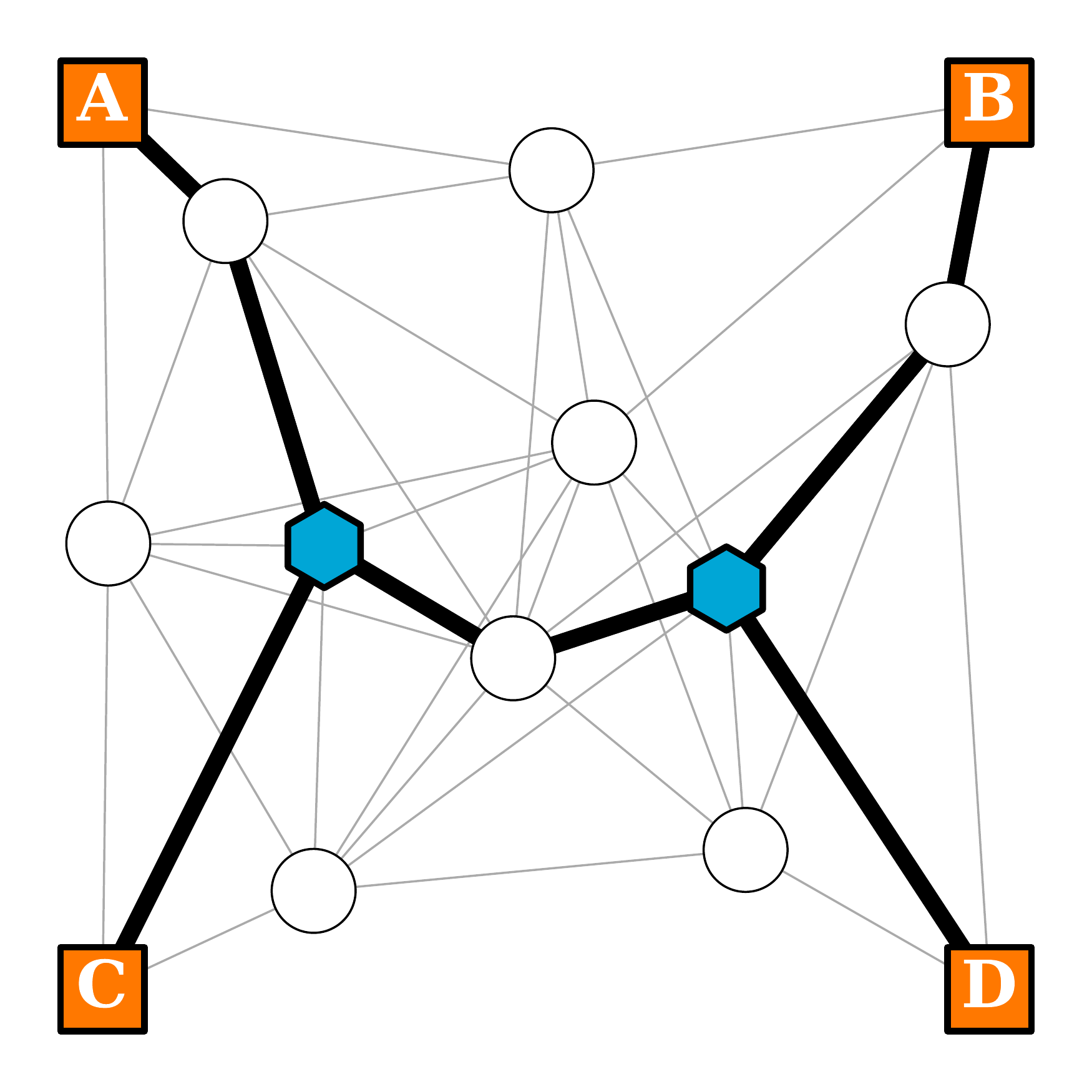}
		\caption{$L_\text{max} = 0.75$}
		\label{fig:square_Lmax075}
	\end{subfigure}
	~
	\begin{subfigure}[b]{0.3\textwidth}
		\includegraphics[width=\textwidth]{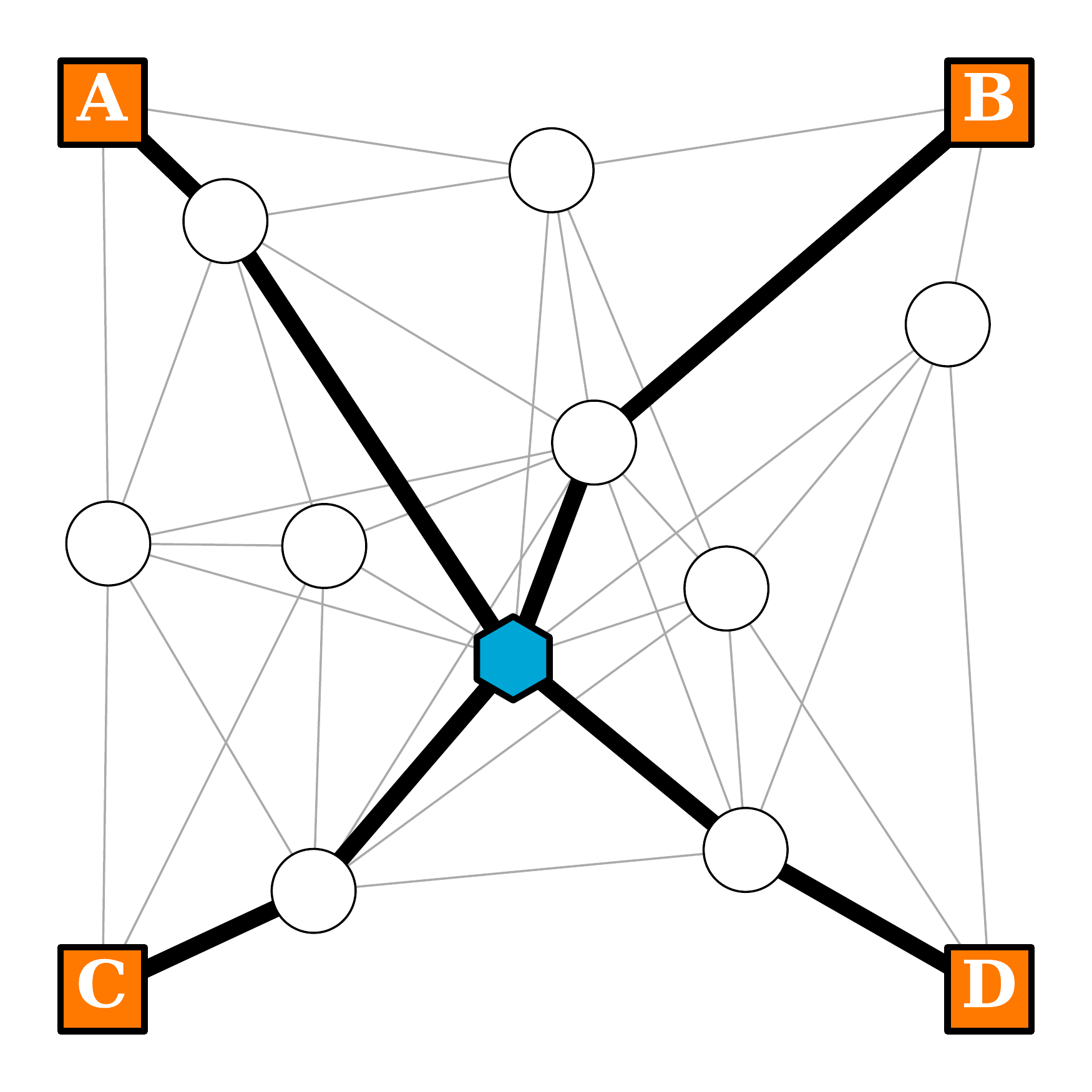}
		\caption{$L_\text{max} = 0.9$}
		\label{fig:square_Lmax09}
	\end{subfigure}
	
    \caption{
    Solutions obtained using our method for an example network graph using the network-requirement parameters
    $L_{\text{max}} = 0.9$, $N_{\text{max}} = 3$, $K = 1$ and $D = 6$, unless noted otherwise in the caption of a specific solution. 
    \textbf{(a)}-\textbf{(c)} Visualization of the effect of 
    $K$. 
    A higher robustness implies that we require more repeaters. 
    \textbf{(d)}-\textbf{(f)} 
    Visualization of the effect of $D$.
    As the
    capacity of quantum-repeater nodes increases,
    multiple paths can use the same repeater and hence the overall number of repeaters decreases.
    \textbf{(g)}-\textbf{(i)} Visualization of the effect of $L_\text{max}$.
    When longer elementary-link lengths are allowed, less quantum-repeater nodes are required to bridge the distance between end nodes.}
    \label{fig:unit_square_results}
\end{figure}

We here report how the number of placed repeaters and the (vertex) connectivity of quantum networks designed using our method vary as a function of the network-requirement parameters.
The number of placed repeaters is interesting to consider since the aim of our method is to minimize this.
On the other hand, the connectivity is interesting since it
lower bounds
the minimum number of quantum repeaters that need to break down before any pair of end nodes becomes disconnected, thereby giving an indication of how robust a quantum network is.
Note that connectivity is not the same as the robustness parameter $K$, which 
lower bounds
the 
minimum
number of quantum repeaters or elementary links that need to break down before end nodes can no longer distribute entanglement with a minimum rate and fidelity, while at the same time taking repeater capacity into account.
We have first generated 1000 random network graphs for which our method was able to find solutions for the parameter values $L_\text{max} = 0.9$, $N_\text{max}=6$, $K=6$ and $D=4$.
Then, while keeping all other parameters constant, we have varied each of the parameters $D$, $K$ and $L_\text{max}$.
This has been done in such a way that all considered values are less restrictive than the original values, such that we can be sure that a solution exists for each parameter value. 
Of each resulting quantum network, we determine the number of repeaters and the connectivity, and for each parameter value we determine the average number of repeaters and the average connectivity over all 1000 quantum networks.\\

\begin{figure}[b]
    \centering
    \captionsetup{justification=raggedright}
    \includegraphics[width=.85\textwidth]{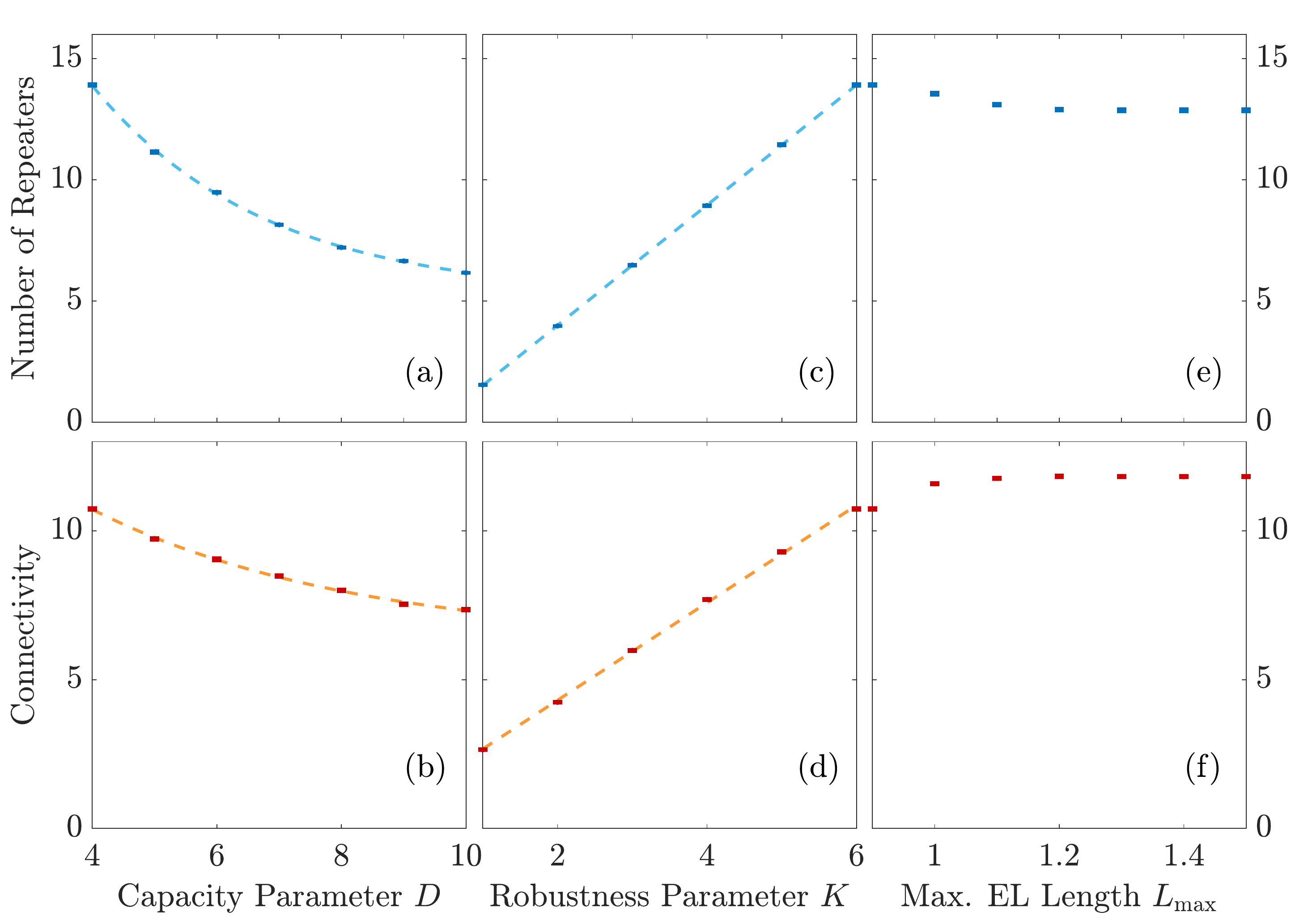}
    \caption{Simulation on 1000 random geometric graphs with a radius of $0.9$ and $n = 25$ nodes for varying network parameters. We use $L_{\text{max}} = 0.9$, $N_\text{max} = 6$, $K = 6$ and $D = 4$, and except for the varied parameter.
	In the plots, each of the points represents the average number of placed repeaters or average connectivity over all samples for each value of \textbf{(a)}-\textbf{(b)} the capacity parameter $D$, \textbf{(c)}-\textbf{(d)} the robustness parameter $K$ or \textbf{(e)}-\textbf{(f)} the maximum elementary-link length $L_{\text{max}}$. We either use a linear or an exponential function for the fits. The error bars represent one standard deviation of the mean. Solving an instance to optimality requires approximately $30$ seconds on average.}
	\label{fig:con_nrep_network_param}
	\addtocounter{figure}{-1}
	\begin{subfigure}{0\textwidth}
    \phantomsubcaption
    \label{fig:nrep_d}
    \end{subfigure}
    \begin{subfigure}{0\textwidth}
    \phantomsubcaption
    \label{fig:con_d}
    \end{subfigure}
    \begin{subfigure}{0\textwidth}
    \phantomsubcaption
    \label{fig:nrep_k}
    \end{subfigure}
    \begin{subfigure}{0\textwidth}
    \phantomsubcaption
    \label{fig:con_k}
    \end{subfigure}
    \begin{subfigure}{0\textwidth}
    \phantomsubcaption
    \label{fig:nrep_lmax}
    \end{subfigure}
    \begin{subfigure}{0\textwidth}
    \phantomsubcaption
    \label{fig:con_lmax}
    \end{subfigure}
\end{figure}

In \Cref{fig:nrep_d,fig:con_d}, we show the number of repeaters and the connectivity as a function of the repeater capacity $D$. We see that both the number of repeaters and the connectivity decrease as $D$ increases, and they both accurately follow an exponential fit in the domain under consideration.
In \Cref{fig:nrep_k,fig:con_k}, we show how the number of repeaters and connectivity vary as a function of the robustness parameter $K$. 
We see that both increase linearly in the domain under consideration. 
For $D$ ($K$) the number of repeaters decreases (increases) following the same line of reasoning as we mentioned above for the visual demonstration. Generally, we expect the connectivity to follow the change in the number of repeaters, because a network with less quantum repeaters is easier to disconnect.
Finally, in \Cref{fig:nrep_lmax,fig:con_lmax}, we investigate the effect of $L_\text{max}$ on the number of repeaters and connectivity.
While the number of repeaters decreases, the connectivity increases, although they both flatten from $L_{\text{max}} = 1.2$. The number of repeaters does not decrease to zero because $K = 6$.
Therefore, even if $L_\text{max}$ is large enough to allow for paths between end nodes with zero quantum-repeater nodes, there are still at least five quantum-repeater nodes required to make the network robust against the breakdown of direct elementary links between end nodes.
On the other hand, the connectivity increases since it also takes paths through other end nodes into account in its computation, and with an increasing value of $L_\text{max}$, we expect more direct elementary links to appear. 

\subsection{Computation Times}
\label{subsec:discussion:comp_times}

Even though the link-based formulation has a scaling of $O(|\mc N|^2)$ in terms of the number of variables and constraints, it remains an ILP. In general, ILP's are NP-hard and thus generally require an exponential amount of time to solve. 
In order to investigate the performance of our method for varying network sizes, we determined the computation time for finding an optimal solution as a function of the number of nodes. The result is shown in \Cref{fig:comp_times}, in which we see that the computation time indeed increases exponentially. Nonetheless, instances on random geometric graphs with 100 nodes can be solved to optimality in about one minute on our setup (see \Cref{sec:results}). \\

The computation time can be strongly affected by the network topology and the chosen parameter values, since these can alter the difficulty of finding an optimal solution as well as the number of variables and constraints (see \Cref{subsec:methods:scaling}). However, the parameters that we use for \Cref{fig:comp_times} are neither very strict nor loose and provide us with insight into the approximate scaling of the computation time, rather than the worst-case behavior.
Note that we expect that, in practical use cases, the topology and the parameter values will be determined once and remain more or less fixed, which implies that the repeater-allocation problem will not need to be solved repeatedly. This makes the increasingly large computation time for sizable graphs or stringent parameters less problematic.

\begin{figure}[b]
    \centering
    \captionsetup{justification=raggedright}
    \vspace*{-50pt}
    \includegraphics[width=0.75\textwidth]{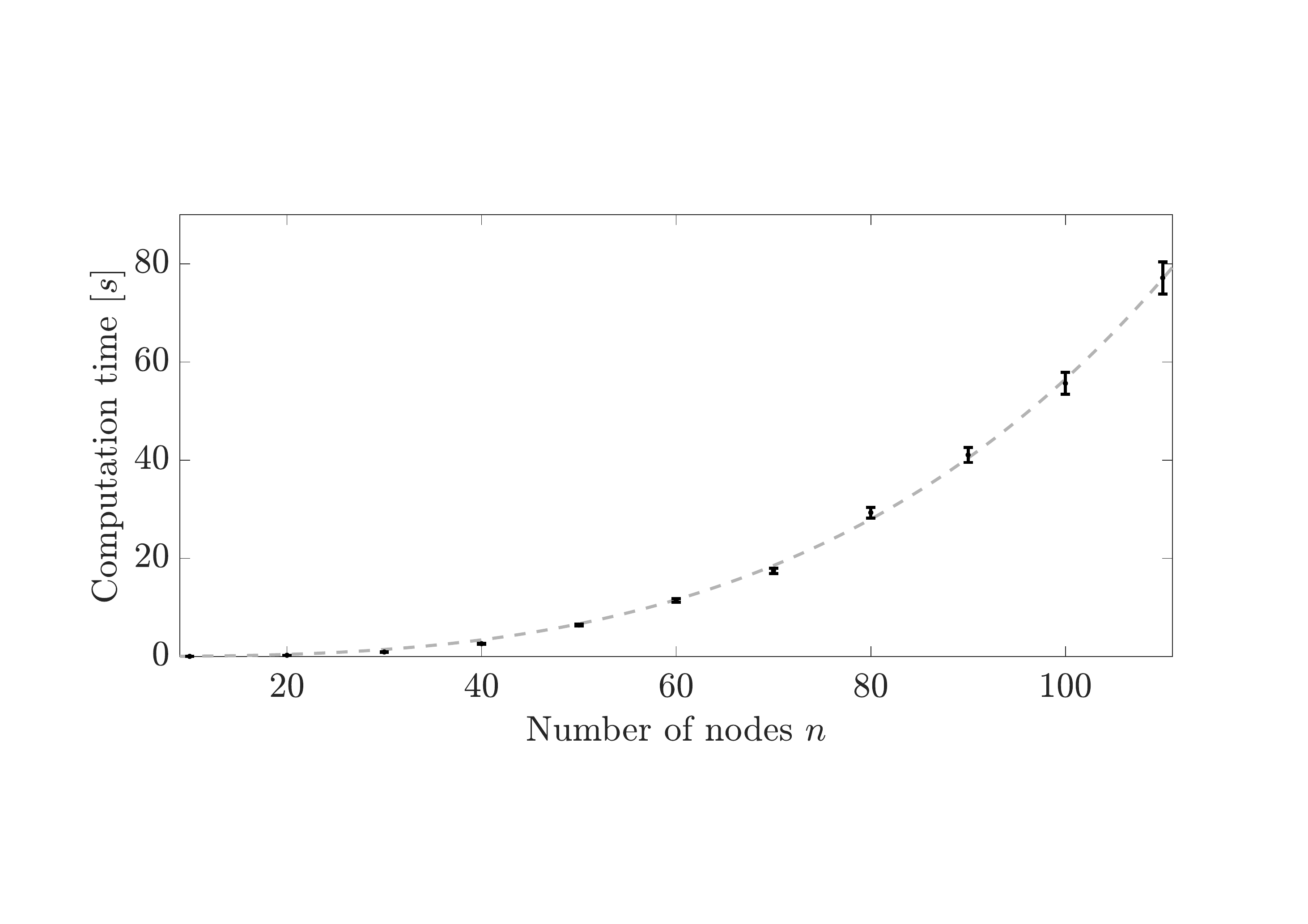}
    \vspace*{-50pt}
    \caption{Computation time in seconds for 100 random geometric graphs with $L_{\text{max}} = 1$, $N_{\text{max}} = 6$, $K = 2$ and $D = 8$ for varying number of nodes $n$. The error bars represent one standard deviation of the mean. For the fit we have used an exponential function of the form $a(e^{bn^3} - 1)$, where $a$ and $b$ are free parameters.}
    \label{fig:comp_times}
\end{figure}

\subsection{Extensions}
\label{subsec:discussion:extensions}

There are various ways in which our method can be extended.
Here, we present two possible extensions.
Such extensions change the ILP formulation in \Cref{lbf}.
The result of these is the generalized link-based formulation, which is presented in \Cref{generalized_lbf}. 
To incorporate the extensions into the method in \Cref{method}, the generalized link-based formulation must be used where otherwise the link-based formulation would be used. \\

\begin{table}[ht]
\boxalign{
	\begin{alignat}{2}
	\hspace*{-100pt}
	\min \quad \sum_{u \in \mathcal{R}} y_u & + \phantom{\leq}  \alpha \sum_{q\in\mathcal{Q}} \sum_{(u,v) \in \mathcal E_q} \sum_{k=1}^{K^q} L\Big( (u, v) \Big)x^{q,k}_{uv} \label{eq:ext_lbf_objfun} \\
	\mathrm{s.t.} \quad 
	\sum_{\substack{v \\ (u, v) \in \mathcal E _q}} x^{q,k}_{uv} - \sum_{\substack{v \\ (v, u) \in \mathcal E _q}}x^{q,k}_{vu} & =
	\begin{cases}
	1, \quad & \text{if } u = s \\
	-1, \quad & \text{if } u = t \\
	0, \quad & \text{if } u \in \mathcal{R}
	\end{cases}
	\qquad
	&&\forall u \in \mathcal R \cup \{s, t\}, q = (s, t) \in \mathcal{Q}, k = 1, 2, \ldots, K^q \label{eq:ext_lbf_flowcon} \\
	L\Big( (u, v) \Big) x^{q,k}_{uv} & \leq L^q_{\mathrm{max}} &&  \forall (u,v) \in \mathcal E_q, q \in \mathcal{Q}, k = 1, 2, \ldots, K^q \label{eq:ext_lbf_max_length} \\
	\sum_{(u,v) \in \mathcal E _q}  x^{q,k}_{uv} & \leq N^q_{\mathrm{max}} +1 && \forall q \in \mathcal{Q}, k = 1, 2, \ldots, K^q \label{eq:ext_lbf_max_repeaters} \\
	\sum_{\substack{v \\ (u, v) \in \mathcal E_q}}\sum_{k=1}^{K^q} x_{uv}^{q,k} & \leq 1 && \forall u \in \mathcal{R}, q \in \mathcal{Q} \label{eq:ext_lbf_disjoint_repeaters} \\
	\sum_{k=1}^{K^q} x_{st}^{q,k} & \leq 1 && \forall q \in \mathcal{Q} \label{eq:ext_lbf_max_one_direct_path} \\
	\sum_{q \in\mathcal{Q}} \sum_{\substack{v \\ (u, v) \in \mathcal E _q}} \sum_{k = 1}^{K^q} x^{q,k}_{uv} & \leq D_u y_u && \forall u \in \mathcal{R} \label{eq:ext_lbf_capacity} \\
	x^{q,k}_{uv} & \in \{0, 1\} && \forall (u,v) \in \mathcal{E}_q, q \in \mathcal{Q}, k = 1, 2, \ldots, K^q \label{eq:ext_xdmon_lbf_k} \\
	y_u & \in \{0, 1\} && \forall u \in \mathcal{R} \label{eq:ext_ydom_lbf}
	\end{alignat}
}
\caption{Generalized link-based formulation.}
\label{generalized_lbf}
\end{table}

The first extension we can make is solving the repeater-allocation problem in case of heterogeneous network requirements.
So far, we have considered the network requirements to be homogeneous, i.e. the same throughout the network.
However, it can be the case that some end nodes require a higher rate and fidelity, that some end nodes need access to more robust quantum communication, or that quantum repeaters with a larger capacity can be placed at some potential repeater locations than at other.
Then, we can define the network-requirement parameters on a per-end-node-pair or per-node basis.
Specifically, for every pair of end nodes $q \in \mathcal Q$, 
we define the minimum rate $R_\text{min}^q$ and fidelity $F_\text{min}^q$ of entanglement generation,
and the required robustness parameter $K^q$ (in order to break communication between the end nodes $q$, at least $K^q$ quantum repeaters or elementary links must be incapacitated).
Furthermore, for every potential repeater location $u \in \mathcal R$, we define the quantum-repeater capacity $D_u$.
To incorporate this into the method, the input parameters must be adapted accordingly, and the maximum number of repeaters and maximum elementary-link length must be calculated for every pair of end nodes separately (i.e. $L_\text{max}^q$ and $N_\text{max}^q$ must be determined from $R_\text{min}^q$ and $F_\text{min}^q$ for each $q \in \mathcal Q$).\\

A second extension has to do with the fact that the link-based formulation in \Cref{lbf} typically has a highly-degenerate optimal solution.
That is, often there are multiple possible quantum-repeater placements for which all constraints are satisfied and the total number of quantum-repeater nodes is minimal.
However, it might be the case that some solutions are more desirable than others.
To pick out these solutions, one can define a secondary objective.
This secondary objective can then be taken into account by defining a corresponding objective function, and adding it to the existing objective function, while scaling it such that it does not influence the optimal number of repeaters. In particular, the scale factor $\alpha$ should be chosen such that the secondary objective value does not exceed $1$.
This can be seen as a form of weighted goal programming \cite{jones2010practical}.
As an example, in \Cref{generalized_lbf}, we use as secondary objective to minimize the total length of all used elementary links. 
Other secondary objectives, such as minimizing the largest elementary-link length, could be implemented in a similar fashion. \\

\subsection{Limitations}
\label{subsec:discussion:limitations}

In this section we discuss some of the limitations of the method we present in this work.
Each limitation represents a way that our method could be further extended, but is beyond the scope of this paper.\\
                                              
A first major limitation is the complexity of ILP's.
While we provide an efficient ILP formulation, 
in which the number of variables and constraints scales polynomially with the network size,
it remains an ILP.
This cannot be helped, as choosing whether a repeater should be placed at a certain potential repeater location is inherently binary.
In general, it is NP-hard to solve an ILP.
While we indeed observe exponential scaling of the computation time in \Cref{subsec:discussion:comp_times}, we are able to find optimal solutions of realistically-sized networks within tractable time using CPLEX, which is also demonstrated using a real network in \Cref{subsec:discussion:real_network}.
Conceivably, one can use heuristics or approximation algorithms to obtain solutions faster, although the solutions then may no longer be optimal. \\

Another limitation that we consider here is the fact that our method is agnostic about how elementary links are constructed.
We assume that any number of fibers can be combined to form an elementary link.
However, quantum-repeater protocols relying on heralded entanglement generation typically require the presence of a midpoint station with the capability to perform Bell-state measurements \cite{inside_quantum_repeaters}.
If there are constraints on the placement of such stations, our method is insufficient.
Conceivably, if such stations can only be placed at potential repeater locations, a modified version of our method could be used.
Furthermore, we assume that an elementary link between two nodes is always constructed from the fibers which minimize the elementary-link length such that rate and fidelity are maximized.
However, if one would like to incorporate the number of fibers (rather than elementary links) that need to be disabled before the quantum network is incapacitated as an additional network requirement (thereby guaranteeing more robustness),
this may no longer be a useful assumption.
It may then be better to try to construct different elementary links from different fibers as much as possible, such that individual fibers do not become too critical. \\
\section{Methods}
\label{sec:methods}

\subsection{Explanation of the Path-Based Formulation}
\label{subsec:methods:validity_of_pbf}

In \Cref{subsec:results:pbf}, we introduced the path-based formulation.
This ILP formulation can be found in \Cref{pbf}, and we claim that solutions to the path-based formulation can be used to construct solutions to the repeater-allocation problem.
Here, we show how and why this can be done. \\

The idea behind the path-based formulation is to choose a combination of feasible paths that minimize the overall number of utilized repeaters. If a path is chosen that uses potential repeater location $u \in \mathcal R$ as a quantum-repeater node,
a repeater should be placed at $u$.
The binary variables $x_p$ are used to parameterize the chosen paths, while the binary variables $y_u$ are used to parameterize where quantum repeaters should be placed.
A coupling between these variables is realized by Constraints \eqref{eq:pbf_capacity}: if a path $p \in \mc P$ is chosen in which a node $u \in \mc R$ is used as quantum-repeater node, the corresponding $y_u$ variables must have value $1$. Conversely, when $y_u = 1$ for a given repeater node $u \in \mc R$, up to $D$ paths can use this repeater node in order for the corresponding constraint to hold, thereby also imposing a limit on the repeater capacity. After all, if $\sum_{p \in \mathcal P}r_{up}x_p > D$ then more than $D$ paths are chosen in which node $u \in \mc R$ is used as a repeater, which renders the solution infeasible. \\

Paths are moreover only considered useful if they can be used to deliver entanglement between end nodes with the minimum required rate $R_\text{min}$ and fidelity $F_\text{min}$.
In the path-based formulation, this is implemented by requiring chosen paths to contain at most $N_\text{max} + 1$ elementary links, each with a length of at most $L_\text{max}$. 
The values of $N_\text{max}$ and $L_\text{max}$ can be determined from $R_\text{min}$ and $F_\text{min}$ as detailed in \Cref{method}.
These requirements are straightforwardly enforced by Constraints \eqref{eq:pbf_max_length} and \eqref{eq:pbf_max_repeaters}.
Constraints \eqref{eq:pbf_max_length} can only hold when $x_p = 0$ for all paths that contain an elementary link ($(u, v) \in p$) which is too long ($L((u, v)) > L_\text{max}$).
Similarly, Constraints \eqref{eq:pbf_max_repeaters} can only hold when $x_p = 0$ for all paths for which the number of elementary links ($|p|$) exceeds the maximum ($N_\text{max} + 1$).\\

Furthermore, the choice of paths must be such that it is guaranteed that up to $K$ potential repeater nodes or elementary links can break down before there is no path available between any pair of end nodes that can deliver entanglement at the required rate and fidelity.
This is implemented by choosing, per pair of end nodes, $K$ different paths.
All of these paths are chosen such that none of them share a quantum-repeater node.
Since elementary links connect quantum-repeater nodes, this automatically also means that none of the paths share an elementary link.
Therefore, when a quantum-repeater node or elementary link becomes incapacitated, this can disrupt at most one path between a pair of end nodes.
When there are $K$ break downs, in the worst case, this can disrupt all paths between a pair of end nodes.
But as long as there are fewer break downs, there will be at least one path available.\\

Since every chosen path can deliver entanglement at the required rate and fidelity, this guarantees robustness of the quantum network against up to $K$ break downs.
It is enforced by Constraints \eqref{eq:pbf_K} that there are exactly $K$ paths chosen between every pair of end nodes.
Furthermore, Constraints \eqref{eq:pbf_disjoint} make sure that the number of chosen paths connecting a pair of end nodes using $u$ as a quantum-repeater node ($\sum_{p \in \mathcal P_q} r_{up}x_p$) is at most one, thereby guaranteeing that all $K$ paths are disjoint. 
Note that, when considering the quantum-repeater capacity, all chosen paths are taken into account.
In other words, Constraints \eqref{eq:pbf_capacity} guarantee that the repeater capacity is not exceeded when all paths are used simultaneously.
Therefore, if one path between a pair of end nodes is disrupted and they are forced to switch to another path, it is guaranteed that none of the quantum repeaters along that path are overloaded.
\\

It is now easy to obtain a solution to the repeater-allocation problem from the solution to the path-based formulation.
Every potential repeater location $u \in \mathcal R$ for which $y_u=1$ in the solution to the path-based formulation should be used as a quantum-repeater node.
Furthermore, each elementary link which is part of a chosen path ($(u, v) \in p$ such that $x_p=1$)  should be constructed.
This is done 
using the fibers making it up ($F\big((u,v)\big)$).
Then, the resulting quantum network will be such that all network requirements are satisfied.
Furthermore, the number of quantum-repeater nodes will be minimal.
This is because this number, which is exactly $\sum_{u \in \mathcal R}y_u$, is minimized by the objective function \eqref{eq:pbf_objfun} of the path-based formulation.
Therefore, the path-based formulation can indeed be used to solve the repeater-allocation problem.

\subsection{Toy-Model Calculation of \texorpdfstring{$N_\text{max}$ and $L_\text{max}$ from $R_\text{min}$ and $F_\text{min}$}{N\_max and L\_max from R\_min and F\_min}}
\label{subsec:methods:toy-model}

In this section we calculate the maximum number of repeaters and maximum elementary-link length from the minimum required rate $R_\text{min}$ and fidelity $F_\text{min}$ using a toy model of a quantum-repeater chain.
The quantum-repeater architecture that we consider is of the massively-multiplexed type as described in e.g. \cite{sinclairSpectralMultiplexingScalable2014}.
The toy model that we consider here makes the following simplifying assumptions:
\begin{itemize}
    \item the states distributed over elementary links are Werner states,
    \item the noise in the states distributed over elementary links is the only noise,
    \item the only sources of photon loss are fiber attenuation and non-deterministic Bell-state measurements,
    \item all processes except light traveling through fiber are instantaneous.
\end{itemize}

It is shown in \Cref{app:toy_model} that in this model, a repeater chain with $N$ quantum repeaters, $M$ entanglement-distribution attempts per round per elementary link, elementary-link length $L$, elementary-link fidelity $F_\text{link}$, speed of light in fiber $c_\text{fiber}$ and a 50\% Bell-state measurement success probability has the following end-to-end rate $R$ and fidelity $F$:
\begin{align}
        R &= \frac {c_\text{fiber}} L \Big( \frac 1 2 \Big)^N \Big[ 1  - \Big( 1 - \frac 1 2 e^{-L/L_\text{att}}\Big)^M \Big]^{N+1}, \\
        F &= \frac 1 4 \Big[ 1 + 3 \Big ( \frac{4F_\text{link} - 1} 3 \Big )^{N+1} \Big].
\end{align}
$N_\text{max}$ can now be obtained from the fidelity.
Specifically, it is the lowest-integer solution to the equation
\begin{equation}
    F > F_\text{min}.
\end{equation}
To find $L_\text{max}$, we can put the resulting value of $N_\text{max}$ into the equation
\begin{equation}
    R > R_\text{min}.
    \label{eq:minimum_rate_eq}
\end{equation}
The smallest value for $L$ that solves Equation \eqref{eq:minimum_rate_eq} is then $L_\text{max}$.
Note that the calculation here is somewhat simplified because the fidelity is not a function of $L_\text{max}$.
If both fidelity and rate would be functions of $N$ and $L$, there would not exist a unique solution.
In that case, there is some freedom in choosing $N_\text{max}$ and $L_\text{max}$.\\

The calculation of $N_\text{max}$ and $L_\text{max}$ for the example parameters $F_\text{min} = 0.93$, $R_\text{min} = 1$ Hz, $F_\text{link} = 0.99$, $c_\text{fiber} = 200,000$ km/s, $M=1000$ and $L_\text{att} = 22$ km results in $N_\text{max} = 6$ and $L_\text{max} = 136$ km (rounded down).

\subsection{Proof of Equivalence}
\label{subsec:methods:proof_of_equivalence}

In this section we briefly outline the proof of why the path-based formulation and the link-based formulation are equivalent. 
The main idea is to use an optimal solution to the path-based formulation to construct a feasible solution to the link-based formulation and vice versa. 
We prove that this is always possible in such a way that the value of the objective function of the constructed feasible solution is the same as that of the original optimal solution.
This can be used to show that the optimal objective values of both formulations are always the same. 
Therefore, the feasible solution to one formulation constructed from an optimal solution to another formulation is itself an optimal solution.
We say that two ILP formulations are equivalent if optimal solutions to one can be obtained from the other and vice versa, and therefore
we conclude that the path-based formulation and the link-based formulation are equivalent. \\

To construct a solution to the link-based formulation using a solution to the path-based formulation, we use the elementary links that appear in chosen paths. More specifically, for each $q = (s, t) \in \mc Q$ and $k = 1, 2, \dots, K$, we set $x_{uv}^{q,k} = 1$ if elementary link $(u, v) \in \mc E_q$ is in the $k^\text{th}$ chosen path connecting $s$ and $t$. Conversely, Constraints \eqref{eq:lbf_flowcon} guarantee that, for every $q = (s, t) \in \mc Q$ and $k = 1, 2, \dots, K$, the elementary links $(u, v) \in \mc E_q$ for which $x_{uv}^{q,k} = 1$ can be used to form exactly one path between $s$ and $t$. These paths can be obtained by using \Cref{alg:path_extraction}, which outputs the set $\mathcal P^*$ that contains the extracted paths over all $q \in \mathcal Q$ and $k = 1, 2, \dots, K$. Thus, we can construct a solution to the path-based formulation from a solution to the link-based formulation by setting $x_p = 1$ for all $p \in \mc P^*$. Furthermore, the repeater-placement variables $y_u$ are kept the same when translating between formulations. \\

By comparing the different constraints, it can be understood that if a solution to one formulation is feasible, the solution to the other formulation that can be obtained from it is also feasible.
Constraints \eqref{eq:pbf_max_length} and \eqref{eq:lbf_max_length} both guarantee that elementary-link lengths do not exceed $L_\text{max}$, while
Constraints \eqref{eq:pbf_max_repeaters} and \eqref{eq:lbf_max_repeaters} both guarantee that each path includes $N_\text{max}$ quantum-repeater nodes at maximum.
Constraints \eqref{eq:pbf_K} and \eqref{eq:lbf_flowcon} make sure there are $K$ paths between each pair of end nodes. These paths are guaranteed to be disjoint for the path-based formulation by Constraints \eqref{eq:pbf_disjoint} and for the link-based formulation by Constraints \eqref{eq:lbf_disjoint_repeaters} and \eqref{eq:lbf_max_one_direct_path}.
Lastly, Constraints \eqref{eq:pbf_capacity} and \eqref{eq:lbf_capacity} couple the $x$ variables to the $y$ variables and make sure the quantum-repeater capacity is taken into account. \\

In step \ref{step:remove_loops} of \Cref{method}, we manually set $x_{uv}^{q,k}=0$ for all elementary links $(u,v) \in \mc E$ which are not in one of the paths $p \in \mathcal P^*$.
We do this because, on some occasions, the variables $x^{q,k}_{uv}$ are allowed to have value $1$ in such a way that they form loops (which are disjoint from the path between $s$ and $t$).
For example, it could be the case that for some $q \in \mathcal Q$ and $k = 1,2, \dots, K$, it holds that $x^{q,k}_{u_1 u_2} = x^{q,k}_{u_2 u_1} =1$, which does not violate any of the constraints in \Cref{lbf}, and also does not influence the objective function \eqref{eq:lbf_objfun}.
Since these loops do not connect end nodes, they do not contribute to realizing any of the network requirements.
Therefore, any variable $x^{q,k}_{uv}$ with value $1$ such that it is part of a loop can safely be set to $0$ without violating any constraint.
This is shown rigorously in \Cref{app:proof_of_equivalence}.
Only allowing for elementary links which are part of paths between end nodes realizes the removal of such loops.
Since the method in \Cref{method} recommends the construction of elementary link $(u,v) \in \mc E$ if $x_{uv}^{q,k} = 1$, setting them to $0$ whenever this is possible helps to prevent the construction of unnecessary elementary links.
One way in which the appearance of loops in optimal solutions can be prevented in the first place by is to use the generalized link-based formulation in \Cref{generalized_lbf}.
In this formulation, the minimization of the total elementary-link length is used as secondary objective. \\

\smallskip
\begin{algorithm}[b]
	\SetAlgoLined
	$\mc P^* = \emptyset$\;
	\For{$q = (s, t) \in \mc Q$}{
	    \For{$k = 1, 2, \ldots, K$}{
	        $u_0 = s$\;
	        $n = 0$\;
	        \While{$u_n \neq t$}
	        {
	            Find the unique node $v \in \mc R \cup \{t\}$ for which $x_{u_nv}^{q,k} = 1$\;
                $n = n + 1$\;
                $u_n = v$\;
	        }
	        $p = \Big( (s, u_1), (u_1, u_2), \dots , (u_{n-1}, t) \Big)$ \;
            $\mc P^* = \mc P^* \cup p$\;
	    }
	}
	\caption{Path extraction algorithm.}
	\label{alg:path_extraction}
\end{algorithm}
\smallskip

\subsection{Generating Random Networks}
\label{subsec:methods:generating_random_networks}

Here, we describe how we generate random network graphs based on random geometric graphs.
These networks are used to demonstrate our method and study the effect of different network-requirement parameters in \Cref{subsec:discussion:evaluation_on_random_networks}. \\

The recipe for generating a random geometric graph on a two-dimensional Euclidean space with $n$ nodes and radius $d$
is as follows \cite{Pen03}.
First, $n$ points are distributed uniformly at random on a unit square, by sampling both their horizontal and vertical coordinates uniformly at random.
To every two points $p_1$, $p_2$ we associate $r(p_1, p_2)$, which is the Euclidean distance between the two points.
From this, an undirected weighted graph is constructed in which every node corresponds to one of the points, 
and edges between nodes corresponding to points $p_1$, $p_2$ are added if $r(p_1, p_2) \leq d$.
The weight that is given to the edge is $r(p_1, p_2)$. \\

To turn a random geometric graph into a suitable network graph, 
it must be decided which of the nodes are end nodes, and which are potential repeater locations.
To this end, we determine the convex hull of the graph.
We choose to use nodes corresponding to vertices of the convex hull of the graph as end nodes, i.e. they make up the set $\mathcal C$.
All other nodes are thus considered potential repeater locations, i.e. they make up the set $\mathcal R$.
This method is used because it is expected that potential repeater locations lying outside of the area spanned by the end nodes will only rarely be chosen as quantum-repeater nodes.
When the end nodes form the convex hull, there are no such potential repeater locations, and the number of nodes that are not of relevance to the repeater-allocation problem is minimized.
We generate the random geometric graphs using NetworkX \cite{networkx} and determine the convex hull using an algorithm \cite{qhull} which is included in SciPy \cite{scipy}.\\

The random network graph used in \Cref{fig:unit_square_results} has been based on a random geometric graph with $n=10$ and $d=0.6$, but has been further edited to be made suitable for demonstration purposes.
Some nodes were displaced manually.
Additionally, end nodes have been added at the corners of the unit square and connected to the three closest potential repeater locations.

\subsection{Scaling of the Formulations}
\label{subsec:methods:scaling}

The path-based formulation relies on the enumeration of all the paths between two end nodes. For every pair $(s,t) \in \mc Q$ we must consider all possible permutations of intermediate nodes in which $r$ repeaters are placed on a path. For $r = 0$, we get a single path directly from $s$ to $t$ and for $r = 1$ we should consider all possible paths that utilize one repeater, which are $|\mc R|$ in total. Next, when $r = 2$ we must consider all paths that contain exactly two repeaters and additionally all permutations of the repeater placements in these paths, which gives $|\mc R|(|\mc R| - 1)$ paths in total, et cetera. The number of $y_u$ variables is $|\mc R|$, so that the number of variables $n_\text{var}^\text{pbf}$ of the path-based formulation is given by
\begin{align}
    n_\text{var}^\text{pbf} & = |\mc R| + |\mc Q||\mc P_q| \\
    & = |\mathcal{R}| + |\mathcal{Q}|\sum_{r=0}^{|\mc R|}\frac{|\mathcal{R}|!}{(|\mathcal{R}| - r)!}. \label{eq:scaling_pbf}
\end{align}
If $|\mc R| > 1$, this simplifies to \cite{wagon2016round}
\begin{align}
    n_\text{var}^\text{pbf} & = |\mathcal{R}| + |\mathcal{Q}|\left[e|\mc R|!\right], 
\end{align}
where $e$ denotes Euler's number and $[\cdot]$ represents the rounding operator. We assume that the number of end nodes $|\mc C|$, and therefore the number of end-node pairs $|\mc Q| = |\mathcal{C}|(|\mathcal{C}| - 1) / 2$, is constant so that this does not scale with the total number of nodes $|\mc N|$ in our graph. This implies that the number of possible repeater locations $\mc R = \mc N \setminus \mc C$ scales linearly with the number of nodes. The number of variables, as well as the number of constraints, is thus $O\left(|\mc N|!\right)$. \\

One important detail of our implementation of the path-based formulation is that we take Constraints \eqref{eq:pbf_max_length} and \eqref{eq:pbf_max_repeaters} into account while enumerating all the paths. If we encounter a path which contains an elementary link with a length that exceeds $L_\text{max}$ or which uses more than $N_\text{max}$ repeaters, we simply exclude it from the set $\mc P$. This can greatly reduce the total number of variables, although it will remain to scale exponentially with $|\mc N|$. \\

In the link-based formulation, we need to enumerate all the elementary links in the network. To this end, we need to count every elementary link from $s$ to every node $v \in \mathcal{R} \cup \{t\}$, and from $u \in \mc R$ to $t$ which results in $2|\mc R| + 1$ elementary links. Next, we also need to consider the elementary link from every node $u \in \mc R$ to $v \in \mc R$ and back, in order to allow for directional paths from $s$ to $t$, which are $|\mc R|(|\mc R| - 1)$ in total. Additionally, since we use the index $k$ for our $x_{uv}^{q,k}$ variables in order to keep track of the redundant paths that are required for the given level of robustness, we need to make a copy of these variables for every value of $k = 1, 2, \ldots, K$. When we combine this with the $|\mc R|$ $y_u$ variables, we get that the total number of variables of the link-based formulation is given by

\begin{align}
    n_\text{var}^\text{lbf} & = |\mc R| + K|\mc Q||\mc E_q| \\
    & = |\mc R| + K|\mc Q|(|\mc R|^2 + |\mc R| + 1), \label{eq:scaling_lbf}
\end{align}

\noindent which is $O\left(|\mathcal{N}|^2\right)$, if we assume that $K$ is a fixed constant.
Note that the link-based formulation therefore also has $O\left(|\mc N|^2\right)$ constraints.
\section{Acknowledgements}

We would like to thank SURFnet for sharing their data regarding the network topology. This publication was supported by the QIA-project that has received funding from the European Union's Horizon 2020 research and innovation program under grant Agreement No. 820445. This work was supported by an ERC Starting grant and NWO Zwaartekracht QSC.

\section*{Data Availability}
All the data and code we used for generating the results can be found in the Github repository \cite{githubcode}.

\section*{Author Contributions}
This work is based on the master thesis of J.R. which was devised by S.W. and supervised by S.W. and G.A.; K.C. introduced some key conceptual and proof ideas. J.R. proposed the use of linear programming and implemented both formulations with CPLEX and G.A. wrote most of the code for the simulations. All authors contributed to the manuscript.

\section*{Competing Interests}
The authors declare no competing interests.
\bibliographystyle{vancouver}
\bibliography{bibliography}
\clearpage
\appendix
\section{Toy-Model Calculation of Rate and Fidelity}
\label{app:toy_model}

In this appendix, we calculate the rate and fidelity of a quantum-repeater chain using a toy model described in \Cref{subsec:methods:toy-model}.
The quantum-repeater architecture under consideration is of the massively-multiplexed type as described in e.g. \cite{sinclairSpectralMultiplexingScalable2014}.
In such a repeater chain, during every round of time, entanglement distribution is attempted a large number of times on each elementary link (using e.g. spectral multiplexing).
If at the end of the round a quantum repeater has at least succeeded once at entanglement generation with each neighbour, a successfully-entangled state is selected from each side and entanglement swapping is performed between the two (through a Bell-state measurement).
Otherwise, all entanglement is discarded and a new attempt is made during the next round.
Our toy model of such a quantum-repeater chain is based on the following simplifying assumptions:
\begin{itemize}
    \item the states distributed over elementary links are Werner states,
    \item the noise in the states distributed over elementary links is the only noise,
    \item the only sources of photon loss are fiber attenuation and non-deterministic Bell-state measurements,
    \item all processes except light traveling through fiber are instantaneous.
\end{itemize}

First, we investigate the final end-to-end fidelity of entangled quantum states created by a repeater chain.
Let us consider a repeater chain with $N$ quantum repeaters, $M$ entanglement-distribution attempts per round per elementary link, elementary-link length $L$ and elementary-link fidelity $F_\text{link}$.
In the toy model, entangled states shared over elementary links are Werner states, which can be parametrized as
\begin{equation}
    \rho_{p_\text{link}} = p_\text{link} \ketbra {\Phi^+} + \frac{1-p_\text{link}}{4} \mathds 1.
\end{equation}
This state has fidelity to the maximally-entangled Bell state $\ket {\Phi ^+} = \tfrac 1 {\sqrt 2} (\ket {00} + \ket {11})$ of $F_\text{link} = \tfrac 1 4 (1 + 3 p_\text{link})$, and therefore $p_\text{link} = \tfrac 1 3 (4F_\text{link}-1)$.\\

Entanglement swapping between Werner states $\rho_{p_1}$ and $\rho_{p_2}$ is performed through a Bell-state measurement on one qubit from the first state and one qubit from the second state. 
The quantum state after this operation (after tracing out the measured qubits and ignoring possible Pauli corrections) is a new Werner state, $\rho_{p_{1,2}}$, with $p_{1,2}$ = $p_1 p_2$.
Repeated use of this equation reveals that, if entanglement distribution is successful at least once in each of the $N+1$ elementary links, and if all entanglement swaps are successful, the Werner state $\rho_{p_f}$ is obtained with $p_f = p_\text{link}^{N+1}$.
Thus, the final fidelity is
\begin{equation}
    F = \frac {1 + 3 p_\text{link}^{N+1} } 4 = \frac 1 4 \Big[ 1 + 3 \Big ( \frac{4F_\text{link} - 1} 3 \Big )^{N+1} \Big].
\end{equation}

Now we consider the rate at which end-to-end entanglement can be established.
First, we calculate the probability that a single attempt at entanglement distribution in a single elementary link is successful.
Entanglement is generated by sending entangled photons from both repeaters to a station in the center of the elementary link, where a probabilistic Bell-state measurement is performed with 50\% success probability.
In the toy model, the only other source of photon loss in the elementary link is attenuation in optical fiber, which we assume to be characterized by the attenuation length $L_\text{att}$.
In that case, the probability that both photons reach the midpoint and the Bell-state measurement in successful is

\begin{equation}
    \text{Pr(one attempt)} = \Big( e^{-L/(2L_\text{att})}\Big)^2 \times \frac 1 2 = \frac 1 2 e^{-L/L_\text{att}}.
\end{equation}
Then, the probability that at least one of the $M$ attempts in an elementary link in a single round is successful is
\begin{equation}
    \text{Pr(elementary link)}  = 1 - \left(1 - \text{Pr(one attempt)}\right)^M.
\end{equation}
Finally, the end-to-end success probability is given by the probability that each link is successful, and that each entanglement swap in the repeaters is successful.
Since entanglement swapping in repeaters has a 50\% success probability, this gives
\begin{equation}
    \text{Pr(repeater chain)}  = \text{Pr(elementary link)} ^{N+1} \Big(\frac 1 2 \Big)^N =  \Big( \frac 1 2 \Big)^N \Big[ 1  - \Big( 1 - \frac 1 2 e^{-L/L_\text{att}}\Big)^M \Big]^{N+1}.
\end{equation}

To determine the rate, we now need to know how long every round takes.
In the toy model, the only aspect of entanglement generation that takes any time is the photons traveling to the midpoints stations, and the messages heralding success or failure of the entanglement attempts traveling from the midpoint stations back to the repeaters.
We assume both are light traveling through fiber.
Thus, every round takes as long as it takes for light to travel the distance $L$ through fiber.
Denoting the speed of light in fiber $c_\text{fiber}$, this gives a round time of $L / c_\text{fiber}$, so that the repetition rate is $c_\text{fiber} / L$.
The end-to-end entanglement distribution rate is obtained by multiplying the repetition rate with the success probability, given by
\begin{equation}
    R = \frac {c_\text{fiber}} L \text{Pr(repeater chain)}  =  \frac {c_\text{fiber}} L \Big( \frac 1 2 \Big)^N \Big[ 1  - \Big( 1 - \frac 1 2 e^{-L/L_\text{att}}\Big)^M \Big]^{N+1}.
\end{equation}

\section{Proof of Equivalence}
\label{app:proof_of_equivalence}

In this appendix, we prove the equivalence between the link-based formulation and path-based formulation. In order to make this material self-contained, in \Cref{supp_not} we reintroduce some of the notations. Next, in \Cref{supp_sec:formulations}, we briefly re-describe both of the formulations. After that, in \Cref{supp_sec:path_to_link,supp_sec:link_to_path}, we show how to construct a feasible solution to the link-based formulation from the optimal solution to the path-based formulation and vice versa. By combining this result with the proof that the optimal objective values are equal, we conclude that the two formulations are equivalent. 
Here, we consider two ILP formulations to be equivalent if an optimal solution to one formulation can be used to obtain an optimal solution to the other and vice versa.

\begin{table}[H]
\centering
\begin{tabular}{|c |c | } 
 \hline
 $\mc Q$ & Set of all ordered pairs $(s,t)$ of end-nodes.  \\
  \hline
 $\mc P_q$ & Set of all possible paths for a pair $q \in \mc Q$. \\
  \hline
 $\mc P$ & Set of all possible paths between all of the pairs $q \in \mc Q$. \\
  \hline
 $\mc E_q$ & Set of all elementary links that can be used by a pair $q \in \mc Q$.\\
  \hline
 $\mc R$ & Set of potential repeater locations in the network.  \\ 
  \hline
 $L_{\text{max}}$ & Maximum length of an elementary link.\\
  \hline
 $N_{\text{max}}$ & Maximum number of repeaters in a path. \\
  \hline
 $K$ &
  The robustness parameter, which denotes the minimum number of quantum-repeater nodes or \\ & elementary links (it can be any combination) that need to break down before one of the other \\ & requirements can no longer be met.\\
  \hline
  $\mc K$ & Set of all integers from $1$ to $K$ (inclusive). \\
  \hline
 $D$ & The capacity parameter, which denotes the number of\\ & quantum-communication sessions that one quantum repeater can facilitate simultaneously. \\
 \hline
\end{tabular}
\caption{Overview of the sets and parameters that are relevant for the proof.}
\label{supp_not}
\end{table}

\subsection{Formulations}
\label{supp_sec:formulations}

Here we will restate both the path-based and link-based formulation for completeness. In the path-based formulation, we define the binary decision variables $x_p$ corresponding to a path $p \in \mathcal P = \cup_{(s,t) \in \mathcal Q} \mathcal P_{(s,t)}$, where $\mathcal P_{(s,t)}$ is the set of all possible paths from end node $s$ to end node $t$. A path itself is a sequence of elementary links reaching from $s$ to $t$ that does not contain any loops.
They have value $1$ when $p$ is considered part of the set of chosen paths, and $0$ otherwise.
Furthermore, we use the binary decision variables $y^p_u$ for all $u \in \mathcal R$. Note that we introduce the superscript $p$ here to more clearly distinguish between the two formulations, which differs from the main text.
The variable $y^p_u$ is $1$ if a quantum repeater is placed at potential repeater location $u$, and $0$ otherwise.
The exact formulation is given in \Cref{supp_pbf}. \\

\begin{table}[ht]
\boxalign{
\hspace*{-100pt}
\begin{alignat}{2}
    \hspace*{-100pt}
    \min \quad \sum_{u \in \mathcal R} y^p_u & \phantom{\leq} \label{supp_eq:pbf_objfun} \\
    \text{s.t. } \quad 
    L\Big((u,v)\Big)x_p &\leq L_\text{max} \qquad && \forall (u, v) \in p, p \in \mathcal P \label{supp_eq:pbf_max_length}\\
    |p| x_p &\leq N_\text{max} + 1 \qquad && \forall p \in \mathcal P \label{supp_eq:pbf_max_repeaters}\\
    \sum_{p \in \mathcal{P}_q} x_p & = K && \forall q \in \mathcal Q  \label{supp_eq:pbf_K}\\
    \sum_{p \in \mathcal{P}_q}r_{up}x_p & \leq 1 && \forall u \in \mathcal{R}, q \in \mathcal{Q} \label{supp_eq:pbf_disjoint}\\
    \sum_{p \in \mathcal P}r_{up}x_p & \leq Dy^p_u && \forall u \in \mathcal{R} \label{supp_eq:pbf_capacity}  \\
     x_p & \in \{0, 1\} && \forall p \in \mathcal P \label{supp_eq:pbf_x}\\
     y^p_{u} & \in \{0,1\} && \forall u \in \mathcal R \label{supp_eq:pbf_y} \\
     \mathrm{where} \quad
     r_{up} &= 
     \begin{cases}
     1 \qquad \text{if path $p$ uses $u$ as a quantum-repeater node}\\
     0 \qquad \text{otherwise}
     \end{cases} 
     \qquad
     &&\forall u \in \mathcal R, p \in \mathcal P \label{supp_eq:pbf_rup}
\end{alignat}
}
\caption{Path-based formulation.}
\label{supp_pbf}
\end{table}

On the other hand, in the link-based formulation we define the binary decision variable $x^{q,k}_{uv}$ for each pair of end nodes $q = (s,t) \in \mc Q$, every elementary link $(u,v) \in \mathcal E_q$ and $k \in \mc K = \{1,2, \dots K\}$.
These variables can be interpreted as indicating whether an elementary link $(u,v)$ is used in the $k^\text{th}$ path connecting the end nodes $s$ and $t$.
Furthermore, in the link-based formulation we use the variables $y^l_u$ to indicate whether node $u \in \mathcal R$ is used as a quantum-repeater node. The exact formulation is given in \Cref{supp_lbf}.

\begin{table}[ht]
\centering
\boxalign{
\begin{alignat}{2}
    \hspace*{-100pt}
    \min \quad \sum_{u \in \mathcal{R}} y^l_u & \phantom{\leq} \label{supp_eq:lbf_objfun} \\
    \mathrm{s.t.} \quad 
    \sum_{\substack{v \\ (u, v) \in \mathcal E _q}} x^{q,k}_{uv} - \sum_{\substack{v \\ (v, u) \in \mathcal E _q}}x^{q,k}_{vu} & =
    \begin{cases}
    1, \quad & \text{if } u = s \\
    -1, \quad & \text{if } u = t \\
    0, \quad & \text{if } u \in \mathcal{R}
    \end{cases}
    \qquad
    &&\forall u \in \mathcal R \cup\{s, t\}, q = (s, t) \in \mathcal{Q}, k \in \mc K \label{supp_eq:lbf_flowcon} \\
    L\Big( (u, v) \Big) x^{q,k}_{uv} & \leq L_{\mathrm{max}} &&  \forall (u,v) \in \mathcal E_q, q \in \mathcal{Q}, k \in \mc K \label{supp_eq:lbf_max_length} \\
    \sum_{(u,v) \in \mathcal E _q}  x^{q,k}_{uv} & \leq N_{\mathrm{max}} +1 && \forall q \in \mathcal{Q}, k \in \mc K \label{supp_eq:lbf_max_repeaters} \\
    \sum_{\substack{v \\ (u, v) \in \mathcal E_q}}\sum_{k \in \mc K} x_{uv}^{q,k} & \leq 1 && \forall u \in \mathcal{R}, q \in \mathcal{Q} \label{supp_eq:lbf_disjoint_repeaters} \\
    \sum_{k \in \mc K} x_{st}^{q,k} & \leq 1 && \forall q \in \mathcal{Q} \label{supp_eq:lbf_max_one_direct_path} \\
    \sum_{q \in\mathcal{Q}} \sum_{\substack{v \\ (u, v) \in \mathcal E _q}} \sum_{k \in \mc K} x^{q,k}_{uv} & \leq D y^l_u && \forall u \in \mathcal{R} \label{supp_eq:lbf_capacity} \\
    x^{q,k}_{uv} & \in \{0, 1\} && \forall (u,v) \in \mathcal{E}_q, q \in \mathcal{Q}, k \in \mc K \label{supp_eq:xdmon_lbf_k} \\
    y^l_u & \in \{0, 1\} && \forall u \in \mathcal{R} \label{supp_eq:ydom_lbf}
\end{alignat}
}
\caption{Link-based formulation.}
\label{supp_lbf}
\end{table}

\subsection{From the Path-Based Formulation to the Link-Based Formulation}
\label{supp_sec:path_to_link}

In this section we will construct a solution to the link-based formulation from the optimal solution to the path-based formulation. We then proceed by proving that this newly constructed solution is indeed a feasible solution to the link-based formulation, i.e. that it satisfies all constraints. \\

From the optimal solution to the path-based formulation we can use the values of the variables $x_p$ and $y_u^p$ to assign values to our new binary decision variables $\tl x_{uv}^{q,k}$ and $\tl y_u^l$, which presumably give a solution to the link-based formulation, using \Cref{alg:path_to_link}.

\smallskip
\begin{algorithm}[H]
	\SetAlgoLined
	For all $q\in \mc Q$, $k \in \mc K$ and $(u,v) \in \mc E_q$, set $\tl x^{q,k}_{uv} = 0$ and for all $u \in \mathcal{R}$, set $\tl y_u^l = y_u^p$\; 
	\For{$q \in \mc Q$}{
	    $k = 1$\;
		\For{$p \in \mc P_q$}{
		    \If{$x_p = 1$}{
		        \For{$(u,v) \in p$}{
		        $\tl x^{q,k}_{uv} = 1$\;
		        }
		    $k = k + 1$\;
		    }
		}
	}
	\caption{Methodology for assigning the values of $\tl x^{q,k}_{uv}$ and $\tl y_u^l$.}
	\label{alg:path_to_link}
\end{algorithm}
\smallskip

From \Cref{alg:path_to_link}, we can see that for each pair $q \in \mc Q$ and value of $k \in \mc K$, we select a single, unique path $p$ for which $x_p = 1$ 
(because there are exactly $K$ paths with $x_p=1$ in $\mathcal P_q$, due to Constraints \eqref{supp_eq:pbf_K})
and use its elementary links to assign the corresponding $\tl x_{uv}^{q,k}$ variables to have value $1$. Let us denote this path by $p^{q,k}$ for ease of notation in the remainder of this section. \\

Next, we will prove that the newly constructed solution is indeed feasible, by showing that all of the $\tl x^{q,k}_{uv}$ and $\tl y_u^l$ variables satisfy all constraints of the link-based formulation.

\begin{proposition}
    The variables $\tl x^{q,k}_{uv}$ that we obtain with \Cref{alg:path_to_link} satisfy Constraints \eqref{supp_eq:lbf_flowcon} of the link-based formulation, i.e.
    \begin{equation}
        \sum_{\substack{v \\ (u, v) \in \mathcal E_q}} \tl x^{q,k}_{uv} - \sum_{\substack{v \\ (v, u) \in \mathcal E _q}}\tl x^{q,k}_{vu}  =
    \begin{cases}
    1, \quad & \text{\upshape if } u = s \\
    -1, \quad & \text{\upshape if } u = t \\
    0, \quad & \text{\upshape if } u \in \mathcal{R}
    \end{cases}
    \quad \forall u \in \mc R \cup \{s, t\}, q = (s, t) \in \mc Q, k \in \mc K.
    \end{equation}
\end{proposition}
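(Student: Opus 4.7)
The plan is to analyze the flow balance at each node $u \in \mathcal{R} \cup \{s,t\}$ separately, exploiting the fact that Algorithm~\ref{alg:path_to_link} assigns values to $\tilde{x}^{q,k}_{uv}$ based on a single path. Fix $q = (s,t) \in \mathcal{Q}$ and $k \in \mathcal{K}$. By Constraint~\eqref{supp_eq:pbf_K} there are exactly $K$ paths in $\mathcal{P}_q$ with $x_p = 1$, so the inner \texttt{if} branch of the algorithm is entered exactly $K$ times as $k$ ranges over $\mathcal{K}$; thus a single path, call it $p^{q,k}$, is selected for this particular $(q,k)$, and $\tilde{x}^{q,k}_{uv} = 1$ if and only if $(u,v)$ is an elementary link of $p^{q,k}$.

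Next, I would use the structural fact that $p^{q,k}$ is, by definition of a path in $\mathcal{P}_q$, a loop-free sequence of elementary links $(v_0,v_1),(v_1,v_2),\ldots,(v_{n-1},v_n)$ with $v_0 = s$, $v_n = t$ and pairwise-distinct intermediate nodes $v_1,\ldots,v_{n-1} \in \mathcal{R}$. With this parameterization, I would evaluate the left-hand side of the claimed identity case by case:
\begin{itemize}
  \item At $u = s = v_0$, the only link of $p^{q,k}$ incident to $u$ is the outgoing link $(v_0,v_1)$, contributing $1$ to the first sum and $0$ to the second; the difference is $1$.
  \item At $u = t = v_n$, the only incident link of $p^{q,k}$ is $(v_{n-1},v_n)$, contributing $0$ to the first sum and $1$ to the second; the difference is $-1$.
  \item At $u = v_i$ for some $1 \le i \le n-1$, exactly one outgoing link $(v_i,v_{i+1})$ and exactly one incoming link $(v_{i-1},v_i)$ of $p^{q,k}$ are incident; the difference is $1 - 1 = 0$.
  \item At $u \in \mathcal{R}$ not appearing in $p^{q,k}$, no link of $p^{q,k}$ is incident, so both sums vanish and the difference is $0$.
\end{itemize}

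Each case matches the prescribed right-hand side of Constraint~\eqref{supp_eq:lbf_flowcon}, which completes the verification. I do not anticipate a genuine obstacle here: the argument is essentially the observation that the incidence vector of a simple $s$-$t$ path is the canonical unit $s$-$t$ flow. The only mildly subtle point is justifying that one never double-counts a node, which is exactly where loop-freeness of paths in $\mathcal{P}_q$ (and the fact that each $(q,k)$ indexes its own path $p^{q,k}$, so different paths in $\mathcal{P}_q$ cannot interfere with the $(q,k)$-constraint) is used.
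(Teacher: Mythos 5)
Your proposal is correct and follows essentially the same route as the paper's proof: a case analysis of the flow balance at $s$, $t$, intermediate path nodes, and unvisited repeater locations, using the loop-freeness of the selected path $p^{q,k}$ to guarantee exactly one incoming and one outgoing link at intermediate nodes. The only cosmetic difference is that the paper additionally notes explicitly that $\mathcal E_q$ contains no incoming links at $s$ and no outgoing links at $t$ (so those sums are empty), which your path-based accounting also covers.
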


\begin{proof}
    Consider the path $p^{q,k}$ for specific values of $q = (s,t) \in \mathcal Q$ and $k \in \mathcal K$, for which $x_{p^{q,k}}=1.$.
    This path starts at $s$ and ends at $t$ by construction, so there is exactly one node $v \in \mc R \cup \{t\}$ for which $\tl x_{sv}^{q,k} = 1$ and hence $\sum_{v: (s, v) \in \mathcal{E}_q} \tl x_{sv}^{q,k} = 1$. Additionally, since the set $\mathcal{E}_q$ only contains outgoing edges from $s$, $\sum_{v: (v, s) \in \mathcal{E}_q} \tl x_{vs}^{q,k}$ is an empty sum and thus, trivially, $\sum_{v: (v, s) \in \mathcal{E}_q} \tl x_{vs}^{q,k} = 0$. This implies that
    \begin{equation}
        \sum_{\substack{v \\ (s, v) \in \mathcal E_q}} \tl x^{q,k}_{sv} - \sum_{\substack{v \\ (v, s) \in \mathcal E _q}}\tl x^{q,k}_{vs} = 1.
        \label{sup_eq:flowcon_source}
    \end{equation}
    In a similar fashion, 
    it must hold that there exists exactly one node $v \in R \cup \{s\}$ such that $\tl x_{vt}^{q,k} = 1$ and hence $\sum_{v: (v, t) \in \mathcal{E}_q}\tl x_{vt}^{q,k} = 1$. Furthermore, since the set $\mc E_q$ also only contains the incoming edges to $t$, it follows trivially that $\sum_{v: (t, v) \in \mc E_q} \tl x_{tv}^{q,k} = 0$, which implies that
    \begin{equation}
        \sum_{\substack{v \\ (t, v) \in \mathcal E_q}} \tl x^{q,k}_{tv} - \sum_{\substack{v \\ (v, t) \in \mathcal E _q}}\tl x^{q,k}_{vt} = -1.
        \label{sup_eq:flowcon_sink}
    \end{equation}
    If 
    $p^{q,k}$
    visits any other node $u \in \mc R$, it holds that there is exactly one incoming edge and one outgoing edge from this node, since a path cannot start nor end here and neither can a path contain loops by definition. 
    In other words, there must be exactly one node $v \in \mathcal{R} \cup \{s\}$ for which $\tl x_{vu}^{q,k} = 1$ and one node $v' \in \mathcal{R} \cup \{t\}$ (where $v \neq v'$) for which $\tl x_{uv'}^{q,k} = 1$. This results in
    
    \begin{equation}
        \sum_{\substack{v \\ (u, v) \in \mathcal E_q}} \tl x^{q,k}_{uv} - \sum_{\substack{v \\ (v, u) \in \mathcal E _q}}\tl x^{q,k}_{vu} = 0 \qquad \forall u \in \mc R.
        \label{sup_eq:flowcon_repnode}
    \end{equation}
    Finally, if a node $u\in \mathcal R$ is not visited by 
    $p^{q,k}$,
    there are no incoming and no outgoing edges. 
    Thus, $\tl x_{vu}^{q,k} = 0$ for all $v \in \mathcal R \cup \{s\}$, and $\tl x_{uv'}^{q,k} = 0$ for all $v' \in \mathcal R \cup \{t\}$.
    As a result, Equation \eqref{sup_eq:flowcon_repnode} is also satisfied in this case.
    The combination of \eqref{sup_eq:flowcon_source}, \eqref{sup_eq:flowcon_sink} and \eqref{sup_eq:flowcon_repnode} for all $q \in \mathcal Q$ and $k \in \mathcal K$ concludes the proof.
\end{proof}

\begin{proposition}
    The variables $\tl x^{q,k}_{uv}$ that we obtain with \Cref{alg:path_to_link} satisfy Constraints \eqref{supp_eq:lbf_max_length} of the link-based formulation, i.e.
    \begin{equation}
        L\big( (u, v) \big) \tl x^{q,k}_{uv} \leq L_{\mathrm{max}} \qquad \forall (u,v) \in \mathcal E_q, q \in \mathcal{Q}, k \in \mc K.
    \end{equation}
\end{proposition}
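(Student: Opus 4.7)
The plan is to proceed by a simple case analysis on the value of $\tilde{x}^{q,k}_{uv}$, leveraging the fact that the construction in the algorithm only sets $\tilde{x}^{q,k}_{uv} = 1$ when the corresponding elementary link appears in a path $p$ of the path-based formulation with $x_p = 1$. Since edge lengths are non-negative and the binary variables take values in $\{0,1\}$, the constraint $L((u,v)) \tilde{x}^{q,k}_{uv} \leq L_\text{max}$ is trivially satisfied whenever $\tilde{x}^{q,k}_{uv} = 0$. The interesting case is therefore $\tilde{x}^{q,k}_{uv} = 1$.

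The main step is to handle the case $\tilde{x}^{q,k}_{uv} = 1$. By inspection of \Cref{alg:path_to_link}, this assignment only occurs if there is a path $p^{q,k} \in \mathcal{P}_q$ with $x_{p^{q,k}} = 1$ such that $(u,v) \in p^{q,k}$. I would then invoke Constraints \eqref{supp_eq:pbf_max_length} of the path-based formulation applied to this path and edge, which yields $L((u,v)) \, x_{p^{q,k}} \leq L_\text{max}$. Since $x_{p^{q,k}} = 1$, this gives $L((u,v)) \leq L_\text{max}$, and multiplying both sides by $\tilde{x}^{q,k}_{uv} = 1$ yields exactly the claimed inequality.

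Combining the two cases over all $(u,v) \in \mathcal{E}_q$, $q \in \mathcal{Q}$ and $k \in \mathcal{K}$ completes the argument. There is no real obstacle here; the lemma is essentially a direct translation of \eqref{supp_eq:pbf_max_length} through the assignment rule of \Cref{alg:path_to_link}. The analogous propositions for Constraints \eqref{supp_eq:lbf_max_repeaters}, \eqref{supp_eq:lbf_disjoint_repeaters}, \eqref{supp_eq:lbf_max_one_direct_path} and \eqref{supp_eq:lbf_capacity} will likely follow the same template, where the only nontrivial bookkeeping will come later when relating the $K$ disjoint paths through a given repeater (tied to the $k$-index) and when verifying the capacity constraint using the fact that $\tilde{y}^l_u = y^p_u$.
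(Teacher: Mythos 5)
Your proposal is correct and follows essentially the same argument as the paper: the case $\tl x^{q,k}_{uv}=1$ is reduced, via the assignment rule of \Cref{alg:path_to_link}, to Constraints \eqref{supp_eq:pbf_max_length} applied to the chosen path $p^{q,k}$ with $x_{p^{q,k}}=1$, while the case $\tl x^{q,k}_{uv}=0$ is trivial. No gaps; the quantification over all $(u,v)\in\mathcal E_q$, $q\in\mathcal Q$, $k\in\mc K$ is handled just as in the paper's proof.
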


\begin{proof}
    Consider the path $p^{q,k}$ for specific values of $q = (s,t) \in \mathcal Q$ and $k \in \mathcal K$.
    Because $x_{p^{q,k}} = 1$ by definition, it follows from Constraints \eqref{supp_eq:pbf_max_length} that any elementary link $(u,v) \in p^{q,k}$ has length $L((u,v)) \leq L_\text{max}$.
    Since $\tl x_{uv}^{q,k} = 1$ only if $(u,v)$ is in $p^{q,k}$, it follows that every elementary link for which $\tl x_{uv}^{q,k} = 1$ must have a length smaller than or equal to $L_\text{max}$.
    Because this argument can be made for any $q\in \mathcal Q$ and any $k\in\mathcal K$, Constraints \eqref{supp_eq:lbf_max_length} are satisfied whenever $\tl x_{uv}^{q,k} = 1$.
    Furthermore, they are trivially satisfied when $\tl x_{uv}^{q,k} = 0$.
\end{proof}

\begin{proposition}
    The variables $\tl x^{q,k}_{uv}$ that we obtain with \Cref{alg:path_to_link} satisfy Constraints \eqref{supp_eq:lbf_max_repeaters} of the link-based formulation, i.e.
    \begin{equation}
        \sum_{(u,v) \in \mathcal E _q} \tl x^{q,k}_{uv} \leq N_{\mathrm{max}} +1 \qquad \forall q \in \mathcal{Q}, k \in \mc K.
    \end{equation}
\end{proposition}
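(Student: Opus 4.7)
The plan is to reduce this constraint in the link-based formulation directly to the analogous constraint \eqref{supp_eq:pbf_max_repeaters} of the path-based formulation, using the one-to-one correspondence between the nonzero $\tilde{x}^{q,k}_{uv}$ variables and the elementary links of the chosen path $p^{q,k}$.

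First I would fix arbitrary $q = (s,t) \in \mathcal{Q}$ and $k \in \mathcal{K}$ and recall from the construction in \Cref{alg:path_to_link} that $\tilde{x}^{q,k}_{uv} = 1$ if and only if the elementary link $(u,v)$ lies in the unique path $p^{q,k} \in \mathcal{P}_q$ with $x_{p^{q,k}} = 1$ that was assigned index $k$ by the algorithm. Existence and uniqueness of such a path follow from Constraints \eqref{supp_eq:pbf_K}, which force exactly $K$ paths in $\mathcal{P}_q$ to have $x_p = 1$.

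Next I would observe that, since each elementary link $(u,v) \in p^{q,k}$ contributes exactly $1$ to the sum and no other link contributes anything,
\begin{equation}
\sum_{(u,v) \in \mathcal{E}_q} \tilde{x}^{q,k}_{uv} = |p^{q,k}|,
\end{equation}
where $|p^{q,k}|$ denotes the number of elementary links in $p^{q,k}$.

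The final step is to invoke Constraints \eqref{supp_eq:pbf_max_repeaters} for the path-based formulation: since $x_{p^{q,k}} = 1$, we have $|p^{q,k}| \cdot 1 \leq N_\text{max} + 1$, and hence $\sum_{(u,v) \in \mathcal{E}_q} \tilde{x}^{q,k}_{uv} \leq N_\text{max} + 1$. Because $q$ and $k$ were arbitrary, this establishes the claim. There is no real obstacle here; the proposition is essentially a bookkeeping statement that transfers the path-length bound across the correspondence set up by the algorithm, and the only subtlety is making explicit that $\tilde{x}^{q,k}_{uv}$ is the indicator of membership of $(u,v)$ in $p^{q,k}$, so that the sum collapses to $|p^{q,k}|$.
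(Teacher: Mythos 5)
Your proof is correct and follows essentially the same route as the paper's: identify $\sum_{(u,v) \in \mathcal{E}_q} \tl x^{q,k}_{uv}$ with $|p^{q,k}|$ via the construction in the algorithm, then apply Constraints \eqref{supp_eq:pbf_max_repeaters} using $x_{p^{q,k}} = 1$. No gaps to report.
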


\begin{proof}
    Consider the path $p^{q,k}$ for specific values of $q = (s,t) \in \mathcal Q$ and $k \in \mathcal K$, for which $x_{p^{q,k}} = 1$.
    According to \Cref{alg:path_to_link}, $\tl x_{uv}^{q,k}=1$ only if $(u,v) \in p^{q,k}$.
    Hence, the number of $(u,v) \in \mathcal E_q$ for which $\tl x_{uv}^{q,k} = 1$ is the same as the number of elementary links in $p^{q,k}$, which is $|p^{q,k}|$. 
    Thus,
    
    \begin{equation}
        |p^{q,k}| = |p^{q,k}| x_{p^{q,k}} = \sum_{(u, v) \in \mc E_q} \tl x_{uv}^{q,k},
        \label{supp_eq:number_of_elementary_links}
    \end{equation}
    where the first equality holds because $x_{p^{q,k}} = 1$ by definition of $p^{q,k}$.
    According to Constraints \eqref{supp_eq:pbf_max_repeaters},
    \begin{equation}
        |p| x_p \leq N_\text{max} + 1 \qquad \forall p \in \mathcal P.
    \end{equation}
    which implies
    \begin{equation}
        |p^{q,k}| x_{p^{q,k}} \leq N_\text{max} + 1 \qquad \forall q \in \mathcal Q, k \in \mathcal K. \label{sup_eq:proof_Nmax}
    \end{equation}
    If we substitute \eqref{supp_eq:number_of_elementary_links} into \eqref{sup_eq:proof_Nmax} we directly get that
    \begin{equation}
        \sum_{(u, v) \in \mc E_q} \tl x_{uv}^{q,k} \leq N_\text{max} + 1 \qquad \forall q \in \mathcal Q, k \in \mathcal K.
    \end{equation}

\end{proof}

\begin{proposition}
    The variables $\tl x^{q,k}_{uv}$ that we obtain with \Cref{alg:path_to_link} satisfy Constraints \eqref{supp_eq:lbf_disjoint_repeaters} and \eqref{supp_eq:lbf_max_one_direct_path} of the link-based formulation, i.e.
    \begin{equation}
        \sum_{\substack{v \\ (u, v) \in \mathcal E_q}}\sum_{k \in \mc K} \tl x_{uv}^{q,k} \leq 1 \qquad \forall u \in \mathcal{R}, q \in \mathcal{Q}
    \end{equation}
    and
    \begin{equation}
        \sum_{k \in \mc K} \tl x_{st}^{q,k} \leq 1 \qquad \forall q \in \mathcal{Q}.
    \end{equation}
\end{proposition}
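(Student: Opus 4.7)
The plan is to prove the two bounds separately, tracing each back to the corresponding structural constraint of the path-based formulation via Algorithm \ref{alg:path_to_link}.

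For Constraints \eqref{supp_eq:lbf_disjoint_repeaters}, I would fix $u \in \mathcal{R}$ and $q \in \mathcal{Q}$ and count how often $u$ serves as a tail of a selected outgoing link across all indices $k$. By the algorithm, $\tl x_{uv}^{q,k} = 1$ only if $(u,v)$ belongs to the unique path $p^{q,k} \in \mathcal{P}_q$ that the algorithm assigns to index $k$. Since paths contain no loops, any $p^{q,k}$ that uses $u$ as a quantum-repeater node uses it via exactly one outgoing edge, giving $\sum_{v:(u,v)\in\mathcal{E}_q}\tl x_{uv}^{q,k} = r_{u,p^{q,k}}$. Summing over $k$ and using that the paths $p^{q,1},\dots,p^{q,K}$ enumerate exactly the paths $p \in \mathcal{P}_q$ with $x_p = 1$, the left-hand side becomes $\sum_{p \in \mathcal{P}_q} r_{up} x_p$, which is at most $1$ by Constraints \eqref{supp_eq:pbf_disjoint}.

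For Constraints \eqref{supp_eq:lbf_max_one_direct_path}, I would single out the direct path $p_\text{direct} = \bigl((s,t)\bigr) \in \mathcal{P}_q$ consisting of the sole elementary link between the two end nodes. Any path in $\mathcal{P}_q$ that contains $(s,t)$ as one of its links must coincide with $p_\text{direct}$, since otherwise the path would need to leave $t$ and return, contradicting the no-loop property. Hence $\tl x_{st}^{q,k} = 1$ can hold only when $p^{q,k} = p_\text{direct}$. Because Algorithm \ref{alg:path_to_link} assigns a distinct index $k$ to each path $p$ with $x_p = 1$, the specific path $p_\text{direct}$ is assigned to at most one index, yielding $\sum_{k \in \mathcal{K}} \tl x_{st}^{q,k} \leq 1$.

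The main obstacle, though modest, is the bookkeeping in Algorithm \ref{alg:path_to_link}: one must confirm that the $K$ indices $k = 1, \dots, K$ are in one-to-one correspondence with the paths $p \in \mathcal{P}_q$ satisfying $x_p = 1$, which is guaranteed by Constraints \eqref{supp_eq:pbf_K} (ensuring there are exactly $K$ such paths). Once this bijection is established, both inequalities follow directly from the path-based constraints they mirror, with no further case analysis needed.
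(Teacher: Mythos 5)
Your proof is correct and follows essentially the same route as the paper's: you identify $\sum_{v:(u,v)\in\mathcal E_q}\tl x_{uv}^{q,k}$ with $r_{up^{q,k}}$, sum over $k$ using the bijection (guaranteed by Constraints \eqref{supp_eq:pbf_K}) between indices $k$ and the chosen paths in $\mathcal P_q$, and invoke Constraints \eqref{supp_eq:pbf_disjoint}; for the direct link you argue, as the paper does, that only the single direct path can contain $(s,t)$ and it is assigned to at most one index. The only cosmetic difference is that the paper rules out longer paths through $(s,t)$ via the absence of incoming edges at $s$ and outgoing edges at $t$ in $\mathcal E_q$, whereas you appeal to the no-loop property, which is an equally valid justification.
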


\begin{proof}
    Consider the path $p^{q,k}$ for specific values of $q = (s,t) \in \mathcal Q$ and $k \in \mathcal K$, for which $x_{p^{q,k}} = 1$.
    According to Equation \eqref{supp_eq:pbf_rup}, the parameter $r_{up^{q,k}} = 1$ if the potential repeater location $u \in \mc R$ is in $p^{q,k}$.
    If this is the case, the path $p^{q,k}$ contains exactly one outgoing elementary link at $u$, and thus $\sum_{v : (u, v) \in \mathcal E_q} \tl x^{q,k}_{uv} = 1$.
    Otherwise, there are no outgoing elementary links, and $\sum_{v : (u, v) \in \mathcal E_q} \tl x^{q,k}_{uv} = 0$.
    Therefore,
    \begin{equation}
        r_{up^{q,k}} = \sum_{\substack{v \\ (u, v) \in \mathcal E_q}} \tl x^{q,k}_{uv}.
        \label{supp_eq:pbf_to_lbf_rup}
    \end{equation}
    Furthermore, we note that
    \begin{equation}
        \sum_{p \in \mathcal P_q} r_{up} x_p = \sum_{k \in \mathcal K} r_{up^{q,k}},
        \label{supp_eq:pbf_to_lbf_pqk_sum_over_k}
    \end{equation}
    since the paths $p \in \mathcal P_q$ for which $x_p=1$ are exactly the paths that were labeled $p^{q,k}$ for some $k \in \mathcal K$.
    Combining Equations \eqref{supp_eq:pbf_to_lbf_rup} and \eqref{supp_eq:pbf_to_lbf_pqk_sum_over_k}, and repeating this argument for every $u \in \mc R$ and $q \in \mc Q$ then gives
    \begin{equation}
        \sum_{p \in \mc P_q} r_{up}x_p = \sum_{\substack{v \\ (u, v) \in \mathcal E_q}} \sum_{k \in \mc K} \tl x^{q,k}_{uv} \quad \forall u \in \mc R, q \in \mc Q. \label{sup_eq:proof_disjoint_to_lbf}
    \end{equation}
    We can then substitute \eqref{sup_eq:proof_disjoint_to_lbf} directly into \eqref{supp_eq:pbf_disjoint}, in order to get
    \begin{equation}
        \sum_{\substack{v \\ (u, v) \in \mathcal E_q}} \sum_{k \in \mc K} \tl x^{q,k}_{uv} \leq 1 \quad \forall u \in \mc R, q \in \mc Q,
    \end{equation}
    which are exactly Constraints \eqref{supp_eq:lbf_disjoint_repeaters}. \\
    
    Additionally, the set $\mc P_q$ for $q = (s,t) \in \mathcal Q$ contains the path consisting of the direct elementary link $(s,t)$ exactly once.
    Furthermore, no other paths containing $(s,t)$ can exist because there exist no incoming edges to $s$ nor outgoing edges from $t$ in $\mc E_q$.
    If the path $p^{q,k'}$ is this direct path, then $\tl x_{st}^{q,k'} = 1$.
    But because $p^{q,k}$ cannot contain $(s,t)$ for $k \neq k'$, it holds that  $\tl x_{st}^{q,k} = 0$ for $k \neq k'$.
    Therefore,
        \begin{equation}
        \sum_{k \in \mc K} \tl x_{st}^{q,k}  = \tl x_{st}^{q,k'} + \sum_{\substack{k \in \mc K\\k \neq k'}} \tl x_{st}^{q,k} = 1.
    \end{equation}
    On the other hand, if there is no $k'$ such that $p^{q,k'}$ is the direct path,
    \begin{equation}
        \sum_{k \in \mc K} \tl x_{st}^{q,k}  = 0.
    \end{equation}
    In any case, the sum evaluates to smaller than or equal to 1.
    Since this argument holds for any $q \in \mathcal Q$, we can conclude that
    
    \begin{equation}
        \sum_{k \in \mc K} \tl x_{st}^{q,k} \leq 1 \qquad \forall q \in \mc Q. \label{sup_eq:proof_stpath}
    \end{equation}
    
\end{proof}

\begin{proposition}
    The variables $\tl x^{q,k}_{uv}$ and $\tl y_u^l$ that we obtain with \Cref{alg:path_to_link} satisfy Constraints \eqref{supp_eq:lbf_capacity} of the link-based formulation, i.e.
    \begin{equation}
        \sum_{q \in\mathcal{Q}} \sum_{\substack{v \\ (u, v) \in \mathcal E _q}} \sum_{k \in \mc K} \tl x^{q,k}_{uv} \leq D \tl y^l_u \qquad \forall u \in \mathcal{R}.
    \end{equation}
\end{proposition}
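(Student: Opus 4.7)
The plan is to leverage the relationship between $r_{up}$ and the link variables $\tl x_{uv}^{q,k}$ that was already established in the preceding proposition, combined with the fact that Algorithm~\ref{alg:path_to_link} sets $\tl y_u^l = y_u^p$ directly. The entire argument reduces to summing an identity over all end-node pairs and applying the path-based capacity constraint \eqref{supp_eq:pbf_capacity}.

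First, I would recall Equation~\eqref{sup_eq:proof_disjoint_to_lbf}, which states that for every $u \in \mc R$ and every $q \in \mc Q$,
\begin{equation}
    \sum_{p \in \mc P_q} r_{up} x_p \;=\; \sum_{\substack{v \\ (u,v) \in \mc E_q}} \sum_{k \in \mc K} \tl x_{uv}^{q,k}.
\end{equation}
This identity was obtained by noting that each chosen path $p^{q,k}$ contributes exactly one outgoing elementary link at $u$ precisely when $u \in p^{q,k}$, and nothing otherwise. Next, I would sum both sides over all $q \in \mc Q$. Since $\mc P = \bigsqcup_{q \in \mc Q} \mc P_q$ is a disjoint union (each path belongs to a unique end-node pair), the left-hand side collapses to $\sum_{p \in \mc P} r_{up} x_p$, yielding
\begin{equation}
    \sum_{p \in \mc P} r_{up} x_p \;=\; \sum_{q \in \mc Q} \sum_{\substack{v \\ (u,v) \in \mc E_q}} \sum_{k \in \mc K} \tl x_{uv}^{q,k} \qquad \forall u \in \mc R.
\end{equation}

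Then I would invoke Constraint~\eqref{supp_eq:pbf_capacity} of the path-based formulation, which guarantees $\sum_{p \in \mc P} r_{up} x_p \leq D y_u^p$ for every $u \in \mc R$. Since Algorithm~\ref{alg:path_to_link} explicitly sets $\tl y_u^l = y_u^p$, substitution gives exactly
\begin{equation}
    \sum_{q \in \mc Q} \sum_{\substack{v \\ (u,v) \in \mc E_q}} \sum_{k \in \mc K} \tl x_{uv}^{q,k} \;\leq\; D \tl y_u^l \qquad \forall u \in \mc R,
\end{equation}
which is precisely Constraint~\eqref{supp_eq:lbf_capacity}.

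There is no real obstacle here: the work was all done in the preceding proposition where the identification between $r_{up} x_p$ (summed over $p \in \mc P_q$) and the link-variable sum was established. The only subtlety is to make sure the union $\mc P = \bigsqcup_{q \in \mc Q} \mc P_q$ is genuinely disjoint so that summing over $q$ gives a sum over $\mc P$ without double counting, and to explicitly note that the $y$-variables are unchanged between the two formulations by construction of the algorithm.
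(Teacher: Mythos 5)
Your proposal is correct and follows essentially the same route as the paper's proof: sum the identity from the preceding proposition (Equation \eqref{sup_eq:proof_disjoint_to_lbf}) over all $q \in \mathcal Q$ to obtain $\sum_{p \in \mathcal P} r_{up}x_p$, substitute into the path-based capacity constraint \eqref{supp_eq:pbf_capacity}, and use that \Cref{alg:path_to_link} sets $\tl y_u^l = y_u^p$. Your explicit remark that the union $\mathcal P = \cup_{q}\mathcal P_q$ must be disjoint is a small point the paper leaves implicit, but it is the same argument.
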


\begin{proof}
    If we sum over all $q \in \mc Q$ on both sides of \eqref{sup_eq:proof_disjoint_to_lbf}, we get
    
    \begin{align}
        \sum_{q \in \mc Q}\sum_{p \in \mc P_q} r_{up}x_p = \sum_{p \in \mc P}r_{up}x_p = \sum_{q \in Q}\sum_{\substack{v \\ (u, v) \in \mathcal E_q}} \sum_{k \in \mc K} \tl x^{q,k}_{uv} \qquad \forall u \in \mc R. \label{sup_eq:proof_capacity_to_lbf}
    \end{align}
    Substituting \eqref{sup_eq:proof_capacity_to_lbf} into \eqref{supp_eq:pbf_capacity} then results in
    \begin{equation}
        \sum_{q \in Q}\sum_{\substack{v \\ (u, v) \in \mathcal E_q}} \sum_{k \in \mc K} \tl x^{q,k}_{uv} \leq D y_u^p \qquad \forall u \in \mc R.
    \end{equation}
    Finally, since we assign all the values of $\tl y_u^l$ to have the same values as $y_u^p$ for all $u \in \mc R$ in \Cref{alg:path_to_link}, we conclude that
    \begin{equation}
        \sum_{q \in Q}\sum_{\substack{v \\ (u, v) \in \mathcal E_q}} \sum_{k \in \mc K} \tl x^{q,k}_{uv} \leq D \tl y_u^l \qquad \forall u \in \mc R.
    \end{equation}
\end{proof}

This concludes the proof that the variables we obtain from the optimal solution to the path-based formulation using \Cref{alg:path_to_link} provide a feasible solution to the link-based formulation. One important statement we make about the objective value of this newly constructed feasible solution is that

\begin{equation}
        \sum_{u\in \mc R} y^p_u = \sum_{u \in \mc R} \tilde{y}^l_u \geq \sum_{u \in \mc R} y^l_u \label{sup_eq:obj_val_feaslink},
\end{equation}

where $\sum_{u \in \mc R} y^l_u$ represents the optimal objective value of the link-based formulation. In other words, the objective value of the newly constructed feasible solution will always be greater than or equal to the objective value of the optimal solution, since we are solving a minimization problem.

\subsection{From the Link-Based Formulation to the Path-Based Formulation}
\label{supp_sec:link_to_path}

In this section we will go the other way around and construct a solution to the path-based formulation from the optimal solution to the link-based formulation using the path extraction algorithm, outlined in \Cref{alg:path_extraction}. First, we show that the application of this algorithm indeed leads to valid paths, after which we proceed by proving that this newly constructed solution is a feasible solution to the path-based formulation. \\

From the optimal solution to the link-based formulation, we can use the values of the variables $x_{uv}^{q,k}$ and $y_u^l$ to assign the values to our new binary decision variables $\tl x_p$ and $\tl y_u^p$, which presumably give a solution to the path-based formulation. We do this by setting $\tl y_u^p = y_u^l$ for all $u \in \mc R$, $\tl x_p = 1$ for all $p \in \mc P^*$ and $\tl x_p = 0$ for all $p \in \mc P \setminus \mc P^*$, where the set $\mc P^*$ is obtained 
from the path extraction algorithm. 
It follows from Proposition \ref{prop:valid_paths} that this can always be done.
Note that in the path extraction algorithm, to every $q \in \mathcal Q$ and $k \in \mathcal K$ there is a single path associated.
We label this path $p^{q,k}$ for the remainder of this section.
It follows that $x_{p^{q,k}} = 1$ for all $q \in \mathcal Q$ and $k \in \mathcal K$. \\

\begin{proposition}
\label{prop:valid_paths}
    \Cref{alg:path_extraction} is always successful.
    That is, it is always able to construct the set $\mathcal P^*$ such that $\mathcal P^* \subseteq \mathcal P$.
\end{proposition}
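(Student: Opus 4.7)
The plan is to fix an arbitrary $q = (s, t) \in \mathcal Q$ and $k \in \mathcal K$ and argue that the inner \textbf{while}-loop of the path extraction algorithm (\Cref{alg:path_extraction}) always terminates, producing a loop-free sequence of elementary links from $s$ to $t$ that lies in $\mathcal P_q \subseteq \mathcal P$. The three tools I will use are the flow-conservation constraints \eqref{supp_eq:lbf_flowcon}, the repeater-disjointness constraint \eqref{supp_eq:lbf_disjoint_repeaters}, and the binarity of the variables $x_{uv}^{q,k}$ from \eqref{supp_eq:xdmon_lbf_k}.

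First I would establish that the ``unique $v$'' demanded in the body of the loop always exists and is actually unique. At $s$, the set $\mathcal E_q$ contains no edges ending at $s$, so \eqref{supp_eq:lbf_flowcon} simplifies to $\sum_v x_{sv}^{q,k} = 1$; binarity then forces exactly one outgoing edge to carry value $1$, giving $u_1$ uniquely. For an inductive step, suppose the algorithm has reached some $u_n \in \mathcal R$ via an edge with $x_{u_{n-1} u_n}^{q,k} = 1$. Then the incoming sum at $u_n$ for this $q, k$ is at least $1$, so by \eqref{supp_eq:lbf_flowcon} the outgoing sum at $u_n$ is also at least $1$. At the same time, \eqref{supp_eq:lbf_disjoint_repeaters} caps $\sum_{v}\sum_{k' \in \mathcal K} x_{u_n v}^{q,k'}$ at $1$, which in particular caps the outgoing sum for the single $k$ under consideration by $1$. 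Combining the two inequalities with binarity yields exactly one outgoing edge at $u_n$ with $x_{u_n v}^{q,k} = 1$.

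Next I would show the loop terminates and that the produced walk is loop-free. Suppose for contradiction some repeater $u$ were visited twice. Then two distinct incoming edges at $u$ would carry value $1$ for this $q, k$, so by \eqref{supp_eq:lbf_flowcon} the outgoing sum at $u$ would be at least $2$. Summing over all $k' \in \mathcal K$ can only make this larger, contradicting \eqref{supp_eq:lbf_disjoint_repeaters}. Hence the sequence $u_0, u_1, u_2, \ldots$ of nodes in $\mathcal R$ visited by the algorithm is injective, so within at most $|\mathcal R|+1$ iterations it must exit the loop, which it can only do by reaching $t$. The resulting edge sequence is a loop-free path of elementary links in $\mathcal E_q$ from $s$ to $t$, i.e.\ an element of $\mathcal P_q$. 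Running over all $q \in \mathcal Q$ and $k \in \mathcal K$ yields $\mathcal P^* \subseteq \mathcal P$.

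The only delicate point, which I view as the main obstacle, is ruling out the possibility that the walk launched from $s$ falls into a cycle that does not contain $t$: flow conservation alone permits the support of $x^{q,k}$ to contain disjoint oriented cycles among repeaters, and such cycles would cause non-termination. The argument above uses \eqref{supp_eq:lbf_disjoint_repeaters} precisely to forbid this, since traversing a cycle would force some repeater to have outgoing flow at least $2$ for this single $q$. Everything else is straightforward bookkeeping of the flow-conservation equations together with the binary constraint on the decision variables.
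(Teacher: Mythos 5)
Your proposal is correct and takes essentially the same route as the paper's proof: uniqueness of the successor is obtained from flow conservation at $s$ and, at repeaters, from combining flow conservation (incoming $\geq 1$ since the node was reached) with the per-pair disjointness constraint capping the outgoing sum at $1$; loop-freeness and termination are then derived from those same two constraints. The only small refinement needed is in your no-revisit step: a twice-visited repeater need not a priori have two \emph{distinct} unit incoming edges (its two predecessors in the walk could coincide), so you should consider the earliest repetition, whose predecessors are distinct because no earlier node repeats, or, as the paper does, trace predecessors backwards to $s$ and use that $\mathcal E_q$ contains no edges into $s$.
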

\begin{proof}
    We will prove this proposition by proving that \Cref{alg:link_to_path_sub} can always successfully construct a sequence $\bar p^{q,k}$ and moreover that this sequence forms a valid path (i.e. $\bar p^{q,k} \in \mathcal P$).
    Because \Cref{alg:path_extraction} is nothing but the repeated application of \Cref{alg:link_to_path_sub} (with $\mathcal P^* = \cup_{q,k}p^{q,k}$), it then follows that this proposition holds.
    
    \smallskip
    \begin{algorithm}[H]
	\SetAlgoLined
        $u_0 = s$\;
        $n = 0$ \;
         \While{$u_n \neq t$}
         {
            Find the unique node $v \in \mc R \cup \{t\}$ for which $x_{u_nv}^{q,k} = 1$\;
            $n = n + 1$\;
            $u_n = v$\;
	    }
	    $\bar p^{q,k} = \Big ( (u_0, u_1), (u_1, u_2), \dots, (u_{n_1}, u_n) \Big)$\;
	\caption{Path extraction sub-algorithm.}
	\label{alg:link_to_path_sub}
    \end{algorithm}
    \smallskip
    
    The first time when the algorithm enters the while loop, it has to find the single node $v$ for which $x^{q,k}_{sv} = 1$.
    To prove that there exists exactly one such node, we consider Constraints \eqref{supp_eq:lbf_flowcon} for $u = s$.
    Because there is no incoming elementary link at $s$, i.e. there is no $v'$ such that $(v', s) \in \mathcal E_q$, the second summation is empty and the equation reduces to
    \begin{equation}
        \sum_{\substack{v \\ (s,v) \in \mathcal E_q}} x_{sv}^{q,k} = 1.
    \end{equation}
    Since the variables $x_{uv}^{q,k}$ are binary, this implies that there is exactly one $v$ such that $x_{sv}^{qk}=1$. \\
    
    Now, we assume that $u = u_i$ such that $u_i \neq s$ and $u_i \neq t$ and show there is a unique $v$ such that $x_{u_i v}^{q,k} = 1$.
    First, we combine Constraints \eqref{supp_eq:lbf_disjoint_repeaters} with the fact that
    \begin{equation}
        \sum_{\substack{v \\ (u_i, v)\in \mathcal E_q}}x_{u_i v}^{q,k} \leq \sum_{\substack{v \\ (u_i, v) \in \mathcal E_q}}\sum_{k \in \mc K} x_{u_i v}^{q,k},
    \end{equation}
    to find that
    \begin{equation}
        \sum_{\substack{v \\ (u_i, v) \in \mathcal E_q}}x_{u_i v}^{q,k} \leq 1.
        \label{supp_eq:sum_x<1}
    \end{equation}
    From Constraints \eqref{supp_eq:lbf_flowcon} with $u_i \in \mathcal R$ we find that
    \begin{equation}
        \sum_{\substack{v \\ (u_i, v) \in \mathcal E_q}}x_{u_iv}^{q,k}  = \sum_{\substack{v \\ (v, u_i) \in \mathcal E_q}}x_{v u_i}^{q,k}. \qquad \forall q \in Q, k \in \mc K.
    \end{equation}
    We know that the left-hand side of this equation is upper bounded by $1$ because of Equation \eqref{supp_eq:sum_x<1}.
    Furthermore, because node $u_i$ was selected by \Cref{alg:link_to_path_sub} (when entering the while loop for $u=u_{i-1}$), we know that $x_{u_{i-1}u_i}^{q,k} = 1$. 
    This implies that the right-hand side is at least one.
    Therefore, both sides must be equal to one.
    Because the variables are binary, the equality of the right-hand side to $1$ implies there is exactly one $v$ such that $x_{u_i v}^{q,k}$ = 1.\\
    
    This procedure only concludes if there is an $n$ such that $u_n = t$.
    This must be the case, as there is only a finite number of nodes in $\mathcal N$, and two nodes $u_k$, $u_l$ cannot be the same unless $k = l$.
    To see that this last property holds, assume for the moment that there are a $k$ and $l>k$ such that $u_k = u_l$.
    In that case, by virtue of how \Cref{alg:link_to_path_sub} works, it must be the case that $x^{q,k}_{u_{k-1} u_k} = x^{q,k}_{u_{l-1}u_k} = 1$.
    Since we concluded earlier that there can only be one $v$ such that $x_{vu}^{q,k}=1$, this implies that $u_{l-1} = u_{k-1}$.
    Then, the above argument can be repeated to find $u_{l-2} = u_{k-2}$.
    This can be continued until we find that $u_{l-k} = u_{k-k} = u_0 = s$. 
    Because $l-k > 0$, $u_{l-k}$ can only be in $\bar p^{q,k}$ if $x_{u_{l-k-1} u_{l-k}}^{q,k} = x_{u_{l-k-1} s}^{q,k} = 1$.
    However, this variable is not defined, because there is no elementary link $(v, s) \in \mathcal E_q$ for any $v \in \mathcal N$.
    We have thus reached a contradiction, and we can conclude that $u_k \neq u_l$ as long as $k \neq l$ and thus \Cref{alg:link_to_path_sub} must eventually terminate. \\
    
    At this point we can conclude that \Cref{alg:link_to_path_sub} creates the sequence
    \begin{equation}
        \bar p^{q,k} = \Big ( (u_0, u_1), (u_1, u_2), (u_2, u_3), \dots, (u_{n-1}, u_n) \Big )
    \end{equation}
    for some integer $n$, where $u_0 = s$ and $u_n=t$.
    Clearly, this is a sequence of adjacent elementary links which connect the end node $s$ to the end node $t$.
    Furthermore, since we concluded that $u_k \neq u_l$ for $k \neq l$, there are no loops, and thus $\bar p^{q,k}$ is in fact a path $\bar p^{q,k} = p^{q,k} \in \mathcal P.$
    Since $p^{q,k} \in \mathcal P$, the variable $\tilde x_{p^{q,k}}$ is well-defined and can be set to $1$.
    
\end{proof}

Next, we address the fact that optimal solutions to the link-based formulation can contain chosen elementary links that form loops, which do not contribute to satisfying constraints, but also do not violate them.
To avoid the construction of ineffective elementary links,
we define the variables $\bar x^{q,k}_{uv}$ for all $q \in \mathcal Q$, $k \in \mathcal K$ and $(u,v) \in \mathcal E_q$.
We set $\bar x^{q,k}_{uv} = 1$ if $(u,v) \in p$ for some $p \in \mathcal P^*$ and $\bar x^{q,k}_{uv} = 0$ otherwise for all $q \in \mathcal Q$ and $k \in \mathcal K$, where $\mc P^*$ is the output of \Cref{alg:path_extraction} when applied to the variables $x_{uv}^{q,k}$ (which are part of an optimal solution to the link-based formulation). In other words, $\bar x_{uv}^{q,k}$ represent a choice of elementary links that corresponds to the optimal solution, but with all links that are not in any of the paths $p \in \mathcal P^*$ removed.

\begin{proposition} \label{prop:bar_x_valid}
    The variables $\bar x_{uv}^{q,k}$ and $y_u^l$ form an optimal solution to the the link-based formulation.
\end{proposition}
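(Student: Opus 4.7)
The plan is to argue that the modified variables $\bar x^{q,k}_{uv}$ together with the unchanged $y_u^l$ satisfy every constraint of the link-based formulation, and then observe that the objective function in \eqref{supp_eq:lbf_objfun} depends only on the $y$-variables, which are untouched. Hence the objective value equals that of the original optimal solution, and feasibility alone suffices to conclude optimality.

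For the easy constraints, I would exploit the monotonicity relation $\bar x^{q,k}_{uv} \le x^{q,k}_{uv}$, which holds directly from the definition (we only set some $1$'s to $0$). Applying this pointwise immediately yields \eqref{supp_eq:lbf_max_length} (since $\bar x^{q,k}_{uv}=1$ implies $x^{q,k}_{uv}=1$ and the original satisfied the length bound), as well as \eqref{supp_eq:lbf_max_repeaters}, \eqref{supp_eq:lbf_disjoint_repeaters}, \eqref{supp_eq:lbf_max_one_direct_path}, and the capacity constraint \eqref{supp_eq:lbf_capacity} (where the right-hand side $Dy_u^l$ is unaffected by the replacement). The binary domain constraints \eqref{supp_eq:xdmon_lbf_k} and \eqref{supp_eq:ydom_lbf} are preserved by construction.

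The main obstacle, and the only constraint not handled by monotonicity, is the flow conservation constraint \eqref{supp_eq:lbf_flowcon}. Here I would invoke Proposition~\ref{prop:valid_paths}: for every $q=(s,t)\in\mathcal{Q}$ and every $k\in\mathcal{K}$, the path extraction algorithm produces a simple path $p^{q,k}\in\mathcal{P}$ from $s$ to $t$, and by definition $\bar x^{q,k}_{uv}=1$ exactly when $(u,v)\in p^{q,k}$. A simple path from $s$ to $t$ trivially satisfies Kirchhoff-type flow conservation with source $s$ and sink $t$: at $s$ there is exactly one outgoing chosen link and no incoming one (hence net flow $+1$), at $t$ exactly one incoming and no outgoing (net flow $-1$), and at every intermediate node in $\mathcal{R}$ there is exactly one incoming and one outgoing link (net flow $0$); at any $u\in\mathcal{R}$ not on the path, both sums vanish. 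This is essentially the same argument already given in the path-to-link direction, and I would simply transcribe it.

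Combining feasibility with the identity
\begin{equation}
\sum_{u\in\mathcal{R}} y_u^l \;=\; \sum_{u\in\mathcal{R}} y_u^l
\end{equation}
between the objective values of the original optimal solution and the modified one, the modified solution attains the optimum, which completes the proof.
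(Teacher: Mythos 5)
Your proposal is correct and follows essentially the same route as the paper: optimality reduces to feasibility because the objective depends only on the unchanged $y_u^l$, the bound-type Constraints \eqref{supp_eq:lbf_max_length}--\eqref{supp_eq:lbf_capacity} follow from the pointwise monotonicity $\bar x^{q,k}_{uv} \leq x^{q,k}_{uv}$, and the flow-conservation Constraints \eqref{supp_eq:lbf_flowcon} are verified by invoking Proposition~\ref{prop:valid_paths} and checking the source, sink, on-path and off-path nodes of each simple path $p^{q,k}$. No gaps to report.
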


\begin{proof}
    If the variables form a feasible solution, they also form an optimal solution, since the variables $y_u^l$ are defined to be part of an optimal solution and the objective function is independent of the values of $\bar x_{uv}^{q,k}$.\\
    
    The variables form a feasible solution if they satisfy all constraints in \Cref{supp_lbf} (but with $x_{uv}^{q,k}$ substituted by $\bar x_{uv}^{q,k}$ everywhere).
    It is easily verified that Constraints (\ref{supp_eq:lbf_max_length} - \ref{supp_eq:lbf_capacity}) are satisfied.
    Each of these set an upper bound on sums over (linear functions of) $\bar x_{uv}^{q,k}$ variables.
    Since $\bar x_{uv}^{q,k}$ is either equal to $x_{uv}^{q,k}$ or set to $0$, it always holds that $\bar x_{uv}^{q,k} \leq x_{uv}^{q,k}$.
    Thus, replacing $x_{uv}^{q,k}$ variables by $\bar x_{uv}^{q,k}$ variables can only decrease the summations.
    Since the $x_{uv}^{q,k}$ variables satisfy all constraints by assumption (and the bounds are unaltered), we can conclude that all of these constraints are also satisfied by the $\bar x_{uv}^{q,k}$ variables.\\
    
    To show that Constraints \eqref{supp_eq:lbf_flowcon} are satisfied as well, consider the path $p^{q,k}$ for some specific $q = (s,t) \in \mathcal Q$ and $k \in \mathcal K$.
    By definition, $\bar x_{uv}^{q,k} = 1$ if and only if $(u,v) \in p^{q,k}$.
    By virtue of Proposition \ref{prop:valid_paths}, we know that $p^{q,k}$ is a valid path between $s$ and $t$, i.e.
    \begin{equation}
        p^{q,k} = \Big( (u_0, u_1), (u_1, u_2), \dots , (u_{n_1}, u_n) \Big)
    \end{equation}
    for some integer $n$, where $u_0 = s$, $u_n = t$, $u_i \in \mathcal R$ for $0 < i < n$, and $u_i \neq u_j$ for $i \neq j$.\\
    
    Consider Constraints \eqref{supp_eq:lbf_flowcon} for $u = s$.
    $x_{sv}^{q,k} = 1$ holds if and only if $v = u_1$, and thus $\sum_{v: (s,v) \in \mathcal E_q} \bar x_{sv}^{q,k} = \bar x_{su_1}^{q,k} = 1$.
    Furthermore, since there are no incoming elementary links at $s$, i.e. $(v, s) \notin \mathcal E_q$ for all $v \in \mathcal R \cup \{s, t\}$, $\sum_{v:(v,s) \in \mathcal E_q} \bar x_{vs}^{q,k} = 0$ trivially.
    Therefore,
    \begin{equation}
        \sum_{\substack{v \\ (s,v) \in \mathcal E_q}}\bar x_{sv}^{q,k} - \sum_{\substack{v \\ (v,s) \in \mathcal E_q}}\bar x_{vs}^{q,k} = 1
    \end{equation}
    and therefore, Constraints \eqref{supp_eq:lbf_flowcon} hold for $u=s$.\\
    
    When $u=t$, we can make a similar argument: 
    $\bar x_{vt}^{q,k} = 1$ holds if and only if $v = u_{n-1}$, and thus $\sum_{v: (v,t) \in \mathcal E_q} x_{vt}^{q,k}=1$.
    Furthermore, there are no elementary links leaving $t$, i.e. there is no $v \in \mathcal R \cup \{s, t\}$ such that $(t, v) \in \mathcal E_q$.
    Therefore, $\sum_{v: (t, v) \in \mathcal E_q} \bar x_{tv}^{q,k} =0$ trivially.
    Thus we can conclude Constraints \eqref{supp_eq:lbf_flowcon} hold for $u = t$, i.e.
    \begin{equation}
        \sum_{\substack{v \\ (t,v) \in \mathcal E_q}}\bar x_{tv}^{q,k} - \sum_{\substack{v \\ (v,t) \in \mathcal E_q}}\bar x_{vt}^{q,k} = -1.
    \end{equation}
    
    When $u \in \mathcal R$, it can be the case that there is an $i$ such that $u = u_i$ for $1 < i < n$.
    Then, $\bar x_{u_i v}^{q,k} = 1$ holds if and only if $v = u_{i+1}$, and therefore $\sum_{v: (u_i,v) \in \mathcal E_q} \bar x_{u_iv}^{q,k} = \bar x_{u_i u_{i+1}}^{q,k}= 1$.
    Furthermore, $\bar x_{v u_i}^{q,k} = 1$ holds if and only if $v = u_{i-1}$, and therefore $\sum_{v: (v, u_i) \in \mathcal E_q} \bar x_{v u_i}^{q,k} = \bar x_{u_{i-1} u_{i}}^{q,k}= 1$.
    On the other hand, if there is no $i$ such that $u=u_i$, $u$ is not on the path.
    It then holds by definition that $\bar x_{uv}^{q,k} = \bar x_{vu}^{q,k} = 0$ for all $v \in \mathcal R \cup \{s, t\}$.
    In both cases, it follows directly that
    \begin{equation}
        \sum_{\substack{v \\ (u,v) \in \mathcal E_q}}\bar x_{uv}^{q,k} - \sum_{\substack{v \\ (v,u) \in \mathcal E_q}}\bar x_{vu}^{q,k} = 0.
    \end{equation}
    Therefore, Constraints \eqref{supp_eq:lbf_flowcon} also holds for $u \in \mathcal R$.
    We conclude that all constraints hold, and thus $\bar x_{uv}^{q,k}$ and $y_u^l$ together form an optimal solution to the link-based formulation.
\end{proof}

Note that, by definition, for all $q \in \mathcal Q$ and $k \in \mathcal K$, there are no elementary links $(u,v) \in \mathcal E_q$  such that $\bar x_{uv}^{q,k} = 1$ which are not in the path $p^{q,k}$.
This property, together with Proposition \ref{prop:bar_x_valid}, makes it easier to show that all constraints of the path-based formulation are satisfied for the variables $\tl x_p$.
\\

\begin{proposition}
    The variables $\tl x_p$ that we obtain satisfy Constraints \eqref{supp_eq:pbf_max_length} of the path-based formulation, i.e.
    \begin{equation}
        L\big((u, v)\big)\tl x_p \leq L_\text{max} \qquad \qquad \forall (u, v) \in p, p \in \mathcal P.
    \end{equation}
\end{proposition}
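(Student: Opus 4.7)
The plan is to split on the value of $\tl x_p$. When $\tl x_p = 0$, the inequality $L\big((u,v)\big) \tl x_p = 0 \leq L_\text{max}$ holds trivially for every $(u,v) \in p$, so there is nothing to prove. The interesting case is $\tl x_p = 1$, and the key observation is that we have defined $\tl x_p = 1$ precisely for $p \in \mathcal P^*$, which means $p = p^{q,k}$ for some $q = (s,t) \in \mathcal Q$ and $k \in \mathcal K$ produced by the path extraction algorithm.

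The idea is then to transfer the bound from the link-based constraints \eqref{supp_eq:lbf_max_length} to the path-based constraints via the intermediate variables $\bar x_{uv}^{q,k}$. Concretely, I would argue that for every elementary link $(u,v) \in p^{q,k}$, the construction of $\bar x_{uv}^{q,k}$ forces $\bar x_{uv}^{q,k} = 1$. By Proposition~\ref{prop:bar_x_valid}, the variables $\bar x_{uv}^{q,k}$ together with $y_u^l$ form a feasible (in fact optimal) solution to the link-based formulation, and hence Constraints \eqref{supp_eq:lbf_max_length} apply, giving $L\big((u,v)\big) \bar x_{uv}^{q,k} \leq L_\text{max}$. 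Plugging in $\bar x_{uv}^{q,k} = 1$ yields $L\big((u,v)\big) \leq L_\text{max}$, and since $\tl x_p = 1$, this is exactly $L\big((u,v)\big) \tl x_p \leq L_\text{max}$.

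There is no real obstacle here; the statement is essentially a bookkeeping check that the length bound enjoyed by the link-based solution is inherited by the extracted paths. The only subtlety worth flagging explicitly in the written-out version is that the bound is inherited only because the path extraction algorithm never selects an elementary link with $x_{uv}^{q,k} = 0$ (so no ``long'' link can sneak into $p^{q,k}$), which is exactly what Proposition~\ref{prop:valid_paths} and the definition of $\bar x_{uv}^{q,k}$ guarantee. I would therefore keep the proof to just a few lines, citing Propositions~\ref{prop:valid_paths} and \ref{prop:bar_x_valid} and Constraints~\eqref{supp_eq:lbf_max_length} as the three ingredients.
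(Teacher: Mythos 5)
Your proposal is correct and follows essentially the same route as the paper's proof: split on $\tl x_p = 0$ (trivial) versus $\tl x_p = 1$, identify $p$ with some extracted path $p^{q,k}$, note that $\bar x_{uv}^{q,k} = 1$ for every $(u,v) \in p^{q,k}$, and transfer the bound from Constraints \eqref{supp_eq:lbf_max_length}. The only cosmetic difference is that you explicitly cite Propositions \ref{prop:valid_paths} and \ref{prop:bar_x_valid}, which the paper leaves implicit.
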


\begin{proof}
    Every path $p$ for which $\tilde x_p=1$ is equal to $p^{q,k}$ for some $q \in \mathcal Q$ and $k \in \mathcal K$.
    For every elementary link $(u,v)$ that makes up $p^{q,k}$ it holds that $\bar x_{uv}^{q,k}=1$.
    Therefore, due to Constraints \eqref{supp_eq:lbf_max_length}, each elementary link $(u,v) \in p^{q,k}$ must have length $L((u,v)) = L((u,v)) \tilde x_{p^{q,k}} \leq L_\text{max}$.
    For all paths $p$ which are not equal to $p^{q,k}$ for some $q \in \mathcal Q$ and $k \in \mathcal K$, $\tilde x_p=0$ and $L((u,v)) \tilde x_p = 0 \leq L_\text{max}$.
    Therefore, $L((u,v)) \tilde x_p \leq L_\text{max}$ holds for any $(u,v) \in p$ for any path $p \in \mathcal P$.
\end{proof}

\begin{proposition}
    The variables $\tl x_p$ that we obtain satisfy Constraints \eqref{supp_eq:pbf_max_repeaters} of the path-based formulation, i.e.
    \begin{equation}
        |p| \tl x_p \leq N_\text{max} + 1 \qquad \qquad \forall p \in \mathcal P.
    \end{equation}
\end{proposition}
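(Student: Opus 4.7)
The plan is to split on the value of $\tl x_p$. When $\tl x_p = 0$ the inequality reduces to $0 \leq N_\text{max} + 1$, which holds trivially, so the only substantive case is $\tl x_p = 1$. By the construction used in \Cref{supp_sec:link_to_path}, this case occurs precisely when $p \in \mc P^*$, so $p = p^{q,k}$ for some $q \in \mc Q$ and $k \in \mc K$.

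For this remaining case, I would first rewrite the path length as a sum over the $\bar x$ variables. Since the extracted path $p^{q,k}$ consists exactly of the elementary links picked up by the while loop in \Cref{alg:link_to_path_sub}, and since Proposition \ref{prop:valid_paths} guarantees that no node (and hence no elementary link) is visited twice along $p^{q,k}$, the definition of $\bar x_{uv}^{q,k}$ immediately yields
\begin{equation}
|p^{q,k}| = \sum_{(u,v) \in \mc E_q} \bar x_{uv}^{q,k}.
\end{equation}
I would then invoke Proposition \ref{prop:bar_x_valid}, which establishes that the $\bar x_{uv}^{q,k}$ together with $y_u^l$ form a feasible solution to the link-based formulation, so in particular Constraints \eqref{supp_eq:lbf_max_repeaters} hold for these variables. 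Combining this bound with the identity above and the fact that $\tl x_{p^{q,k}} = 1$ gives $|p^{q,k}| \tl x_{p^{q,k}} \leq N_\text{max} + 1$, which completes the only nontrivial case.

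I expect the main obstacle to be more a matter of careful bookkeeping than of genuine mathematical content: the only subtlety is justifying the identity between the combinatorial quantity $|p^{q,k}|$ and the algebraic sum $\sum_{(u,v) \in \mc E_q} \bar x_{uv}^{q,k}$, which relies on the no-loops property already established in Proposition \ref{prop:valid_paths} together with the fact that $\bar x_{uv}^{q,k} = 1$ exactly for the links of $p^{q,k}$. Once this identity is in place, the argument reduces mechanically to quoting the corresponding constraint of the link-based formulation.
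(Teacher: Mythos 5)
Your proposal is correct and follows essentially the same route as the paper's proof: establish the identity $|p^{q,k}| = \sum_{(u,v)\in\mc E_q}\bar x_{uv}^{q,k}$ for the extracted paths, apply Constraints \eqref{supp_eq:lbf_max_repeaters} (valid for the $\bar x$ variables by Proposition \ref{prop:bar_x_valid}), and dispose of all paths with $\tl x_p = 0$ trivially. Your version merely makes explicit the reliance on Proposition \ref{prop:bar_x_valid} that the paper leaves implicit in its substitution step.
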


\begin{proof}
    Consider the path $p^{q,k}$ for specific values of $q = (s,t) \in \mathcal Q$ and $k \in \mathcal K$.
    By definition, the number of elementary links in $p^{q,k}$ is the same as the number of elementary links for which $\bar x_{uv}^{q,k} = 1$.
    Thus,
    \begin{equation}
        \sum_{(u, v) \in \mc E_q} \bar x_{uv}^{q,k} = |p^{q,k}| = |p^{q,k}| \tl x_{p^{q,k}}.
        \label{sup_eq:proof_Nmaxpbf}
    \end{equation}
    Substituting \eqref{sup_eq:proof_Nmaxpbf} into \eqref{supp_eq:lbf_max_repeaters} then gives
    \begin{equation}
        |p^{q,k}| \tl x_{p^{q,k}} \leq N_\text{max} + 1.
    \end{equation}
    Furthermore, for any path $p \in \mathcal P$ which is not equal to $p^{q,k}$ for some $q \in \mathcal Q$ and $k \in \mathcal K$, it holds that $\tilde x_p=0$ and thus $|p| \tilde x_p = 0 \leq N_\text{max} + 1$.
    Therefore, we can conclude that $|p|\tilde x_p \leq N_\text{max} + 1$ for any $p \in \mathcal P$.
\end{proof}

\begin{proposition}
    The variables $\tl x_p$ that we obtain satisfy Constraints \eqref{supp_eq:pbf_K} of the path-based formulation, i.e.
    \begin{equation}
        \sum_{p \in \mathcal{P}_q} \tl x_p = K \qquad \forall q \in \mathcal Q.
    \end{equation}
\end{proposition}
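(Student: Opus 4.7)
The plan is to establish that $\sum_{p \in \mathcal P_q} \tl x_p = K$ by combining a counting argument with a distinctness argument for the $K$ paths produced by \Cref{alg:path_extraction} for the given end-node pair $q$.

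First I would observe that, by the construction of $\mathcal P^*$ in \Cref{alg:path_extraction}, the intersection $\mathcal P^* \cap \mathcal P_q$ consists precisely of the paths $\{p^{q,k} : k \in \mathcal K\}$. Indeed, every path $p^{q',k'}$ extracted for a different end-node pair $q' \neq q$ connects end nodes other than those of $q$ and therefore cannot lie in $\mathcal P_q$. Since we set $\tl x_p = 1$ if and only if $p \in \mathcal P^*$, this gives
\begin{equation*}
    \sum_{p \in \mathcal P_q} \tl x_p = \big|\{p^{q,k} : k \in \mathcal K\}\big|,
\end{equation*}
so it suffices to prove that these $K$ paths are pairwise distinct, since then the set has cardinality exactly $K$.

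The main obstacle is establishing this pairwise distinctness, and I would do so by contradiction. Suppose $p^{q,k} = p^{q,k'}$ for some $k \neq k'$ in $\mathcal K$. I would split into two exhaustive cases depending on whether the shared path contains an intermediate repeater. If it does, pick any $u \in \mathcal R$ lying on $p^{q,k}$; then at step $n$ of \Cref{alg:path_extraction} with $u_n = u$, the unique successor $v$ with $x_{uv}^{q,k} = 1$ is selected in both iterations $k$ and $k'$, forcing $x_{uv}^{q,k} = x_{uv}^{q,k'} = 1$ and hence $\sum_{v' : (u,v') \in \mathcal E_q} \sum_{k'' \in \mathcal K} x_{u v'}^{q,k''} \geq 2$, which contradicts Constraints \eqref{supp_eq:lbf_disjoint_repeaters}. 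If instead the shared path contains no intermediate repeater, it must be the direct elementary link $(s,t)$, so $x_{st}^{q,k} = x_{st}^{q,k'} = 1$, giving $\sum_{k'' \in \mathcal K} x_{st}^{q,k''} \geq 2$ and contradicting Constraints \eqref{supp_eq:lbf_max_one_direct_path}.

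In either case we reach a contradiction, so the $K$ paths $p^{q,1}, \ldots, p^{q,K}$ are pairwise distinct, and therefore $\sum_{p \in \mathcal P_q} \tl x_p = K$. Since $q \in \mathcal Q$ was arbitrary, the conclusion holds for all $q \in \mathcal Q$.
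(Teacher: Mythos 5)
Your proof is correct, and it is in fact more careful than the paper's own argument. The paper's proof simply identifies the chosen paths in $\mathcal P_q$ with the extracted paths $p^{q,k}$ and writes $\sum_{p \in \mathcal P_q} \tl x_p = \sum_{k \in \mathcal K} \tl x_{p^{q,k}} = K$; this identity silently assumes that the $K$ paths $p^{q,1},\dots,p^{q,K}$ are pairwise distinct, since the left-hand sum counts each distinct path of $\mathcal P_q$ only once. You make exactly this point explicit and supply the missing justification: if $p^{q,k} = p^{q,k'}$ for $k \neq k'$, then either the common path traverses some $u \in \mathcal R$, in which case both iterations of \Cref{alg:path_extraction} select the same outgoing link at $u$ and Constraints \eqref{supp_eq:lbf_disjoint_repeaters} are violated, or the common path is the direct link $(s,t)$ and Constraints \eqref{supp_eq:lbf_max_one_direct_path} are violated. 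Your preliminary observation that no path extracted for a pair $q' \neq q$ can lie in $\mathcal P_q$ (because its endpoints differ) is also correct and matches what the paper uses implicitly. So the overall route is the same — count the extracted paths for the pair $q$ — but your version buys rigor at the one place where the count could in principle fail to be $K$, whereas the paper's proof takes that step for granted.
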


\begin{proof}
    Consider the set $\mathcal P_q$ for a specific value of $q \in \mathcal Q$.
    Every path $p \in \mathcal P_q$ has $\tilde x_p = 1$ if it is equal to $p^{q,k}$ for some $k \in \mathcal K$ and $\tilde x_p = 0$ otherwise.
    Therefore,
    \begin{equation}
        \sum_{p \in \mathcal P_q} \tilde x_p = \sum_{k \in \mathcal K} \tilde x_{p^{q,k}} =\sum_{k \in \mathcal K} 1 =  K \qquad \forall q \in \mc Q.
    \end{equation}
\end{proof}

\begin{proposition}
    The variables $\tl x_p$ that we obtain satisfy Constraints \eqref{supp_eq:pbf_disjoint} of the path-based formulation, i.e.
    \begin{equation}
        \sum_{p \in \mathcal{P}_q}r_{up}\tl x_p \leq 1 \qquad \forall u \in \mathcal{R}, q \in \mathcal{Q}.
    \end{equation}
\end{proposition}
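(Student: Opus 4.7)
The plan is to translate the sum on the left-hand side into an expression involving the variables $\bar x_{uv}^{q,k}$ and then invoke Constraints \eqref{supp_eq:lbf_disjoint_repeaters}, which hold by Proposition \ref{prop:bar_x_valid}. Fix $u \in \mathcal R$ and $q \in \mathcal Q$. Because $\tilde x_p = 1$ if and only if $p = p^{q,k}$ for some $k \in \mathcal K$ (and $\tilde x_p = 0$ otherwise), the sum restricted to $\mathcal P_q$ collapses to a sum over $k$:
\begin{equation}
\sum_{p \in \mathcal P_q} r_{up}\tilde x_p = \sum_{k \in \mathcal K} r_{u p^{q,k}}.
\end{equation}

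Next I would relate $r_{u p^{q,k}}$ to the link variables $\bar x_{uv}^{q,k}$. By the definition in \eqref{supp_eq:pbf_rup}, $r_{u p^{q,k}} = 1$ exactly when $u$ appears as a quantum-repeater node in $p^{q,k}$, which happens if and only if there is some (necessarily unique) $v \in \mathcal R \cup \{t\}$ with $(u,v) \in p^{q,k}$. Since $\bar x^{q,k}_{uv} = 1$ iff $(u,v) \in p^{q,k}$, this gives the identity
\begin{equation}
r_{u p^{q,k}} = \sum_{\substack{v \\ (u,v) \in \mathcal E_q}} \bar x^{q,k}_{uv},
\end{equation}
exactly as in the derivation of Equation \eqref{supp_eq:pbf_to_lbf_rup} in Section \ref{supp_sec:path_to_link}, but now with the roles of the link-based and path-based variables reversed.

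Combining these two identities and exchanging the order of summation yields
\begin{equation}
\sum_{p \in \mathcal P_q} r_{up}\tilde x_p = \sum_{\substack{v \\ (u,v) \in \mathcal E_q}} \sum_{k \in \mathcal K} \bar x^{q,k}_{uv}.
\end{equation}
By Proposition \ref{prop:bar_x_valid}, the variables $\bar x_{uv}^{q,k}$ form a feasible solution to the link-based formulation, so Constraints \eqref{supp_eq:lbf_disjoint_repeaters} apply and bound the right-hand side by $1$. Since this argument is uniform in $u \in \mathcal R$ and $q \in \mathcal Q$, the claim follows. The only subtle point is to be sure we are allowed to invoke \eqref{supp_eq:lbf_disjoint_repeaters} for $\bar x$ rather than for the original $x$; that is exactly what Proposition \ref{prop:bar_x_valid} guarantees, and it is what makes the loop-pruning step in the construction necessary rather than cosmetic.
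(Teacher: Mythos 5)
Your proof is correct and follows essentially the same route as the paper: you collapse the sum over $\mathcal P_q$ to a sum over $k$, establish the identity $r_{up^{q,k}} = \sum_{v:(u,v)\in\mathcal E_q}\bar x^{q,k}_{uv}$, and then invoke Constraints \eqref{supp_eq:lbf_disjoint_repeaters} for the $\bar x$ variables, justified by Proposition \ref{prop:bar_x_valid}. No gaps to report.
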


\begin{proof}
     Consider the path $p^{q,k}$ for specific values of $q = (s,t) \in \mathcal Q$ and $k \in \mathcal K$.
     The parameter $r_{up^{q,k}}$ is defined in Equation \eqref{supp_eq:pbf_rup}, and takes the value $1$ if the path $p^{q,k}$ passes the potential repeater location $u \in \mc R$ and $0$ otherwise.
     Since $p^{q,k}$ can only pass $u$ if there is an outgoing elementary link from $u$ in $p^{q,k}$, and since the elementary links in $p^{q,k}$ are exactly those elementary links $(u,v)$ for which $\bar x^{q,k}_{uv} = 1$,
     $r_{up^{q,k}} = 1$ if and only if $\bar x^{q,k}_{uv} = 1$ for some $v$ such that $(u,v) \in \mathcal E_q$.
     Therefore,
     \begin{equation}
        r_{up^{q,k}} =   \sum_{\substack{v \\ (u, v) \in \mathcal E_q}} \bar x_{uv}^{q,k}.
     \end{equation}
     Since $x_{p^{q,k}} = 1$, this directly implies that
      \begin{equation}
         r_{up^{q,k}} \tilde x_{p^{q,k}} = \sum_{\substack{v \\ (u, v) \in \mathcal E_q}} \bar x_{uv}^{q,k}.
        \label{supp_eq:lbf_to_pbf_rup}
     \end{equation}
     Furthermore, since for every path $p \in \mathcal P_q$, $\tl x_p = 1$ if $p$ is equal to $p^{q,k}$ for some $k \in \mathcal K$ and $\tl x_p = 0$ otherwise,
     \begin{equation}
         \sum_{p \in \mathcal P_q}r_{up} \tilde x_p = \sum_{k \in \mathcal K} r_{up^{q,k}} \tilde x_{p^{q,k}}.
         \label{supp_eq:lbf_to_pbf_pqk_sum_over_k}
     \end{equation}
     Combining Equation \eqref{supp_eq:lbf_to_pbf_rup} and Equation \eqref{supp_eq:lbf_to_pbf_pqk_sum_over_k} then gives
     \begin{equation}
         \sum_{p \in \mathcal P_q}r_{up} \tilde x_p = \sum_{\substack{v \\ (u, v) \in \mathcal E_q}} \sum_{k \in \mathcal K} \bar x_{uv}^{q,k}.
         \label{sup_eq:proof_rup}
     \end{equation}
    Substituting \eqref{sup_eq:proof_rup} into \eqref{supp_eq:lbf_disjoint_repeaters} gives
    \begin{equation}
        \sum_{p \in \mc P_q} r_{up} \tl x_p \leq 1.
    \end{equation}
    Repeating this argument for every every $u \in \mathcal{R}$ and $q \in \mathcal{Q}$ results in Constraints \eqref{supp_eq:pbf_disjoint}.
    
\end{proof}

\begin{proposition}
    The variables $\tl x_p$ and $\tl y_u^p$ that we obtain satisfy Constraints \eqref{supp_eq:pbf_capacity} of the path-based formulation, i.e.
    \begin{equation}
        \sum_{p \in \mathcal P}r_{up}\tl x_p \leq D\tl y^p_u \qquad \forall u \in \mathcal{R}.
    \end{equation}
\end{proposition}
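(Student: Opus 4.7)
The plan is to mirror the structure used in the proof of the previous proposition (Constraints \eqref{supp_eq:pbf_disjoint}) and simply sum over all end-node pairs $q \in \mathcal Q$. The key identity already in hand is
\begin{equation}
    \sum_{p \in \mathcal P_q} r_{up}\tilde x_p = \sum_{\substack{v \\ (u,v) \in \mathcal E_q}} \sum_{k \in \mathcal K} \bar x_{uv}^{q,k} \qquad \forall u \in \mathcal R, q \in \mathcal Q,
\end{equation}
derived from the definition of $r_{up}$ together with the bijection between the chosen paths $p^{q,k}$ and the unique outgoing elementary link from $u$ in each such path.

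Next, I would observe that $\mathcal P = \bigsqcup_{q \in \mathcal Q} \mathcal P_q$, so summing the above identity over $q$ gives
\begin{equation}
    \sum_{p \in \mathcal P} r_{up}\tilde x_p \;=\; \sum_{q \in \mathcal Q} \sum_{p \in \mathcal P_q} r_{up}\tilde x_p \;=\; \sum_{q \in \mathcal Q} \sum_{\substack{v \\ (u,v) \in \mathcal E_q}} \sum_{k \in \mathcal K} \bar x_{uv}^{q,k} \qquad \forall u \in \mathcal R.
\end{equation}
I would then apply Constraints \eqref{supp_eq:lbf_capacity}, which by Proposition \ref{prop:bar_x_valid} are satisfied by the $\bar x_{uv}^{q,k}$ variables together with $y_u^l$, to conclude that the right-hand side is at most $D y_u^l$. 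Finally, since by construction $\tilde y_u^p = y_u^l$ for all $u \in \mathcal R$, this yields $\sum_{p \in \mathcal P}r_{up}\tilde x_p \leq D\tilde y_u^p$ for every $u \in \mathcal R$, which is exactly the desired constraint.

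There is no real obstacle here; the proposition is essentially a bookkeeping consequence of two ingredients already established earlier in the appendix: (i) the per-pair identity from the preceding proposition, and (ii) the fact that the truncated variables $\bar x_{uv}^{q,k}$ continue to form a feasible solution to the link-based formulation. The only subtlety worth flagging is that one must use $\bar x_{uv}^{q,k}$ rather than the original $x_{uv}^{q,k}$ on the right-hand side, since it is $\bar x_{uv}^{q,k}$ which corresponds exactly to the elementary links appearing in the paths $p^{q,k}$, and it is only via this correspondence that the identity linking $r_{up}\tilde x_p$ to the link variables holds. Since Proposition \ref{prop:bar_x_valid} guarantees that the capacity constraint \eqref{supp_eq:lbf_capacity} still holds for $\bar x_{uv}^{q,k}$ with the same $y_u^l$, this substitution is harmless and the argument closes.
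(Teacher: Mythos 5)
Your proposal is correct and follows essentially the same route as the paper: summing the per-pair identity \eqref{sup_eq:proof_rup} over $q \in \mathcal Q$, substituting into the capacity constraint \eqref{supp_eq:lbf_capacity} for the $\bar x_{uv}^{q,k}$ variables (valid by Proposition \ref{prop:bar_x_valid}), and then replacing $y_u^l$ by $\tl y_u^p$. Your explicit remark that the identity requires the truncated variables $\bar x_{uv}^{q,k}$ rather than the original $x_{uv}^{q,k}$ is exactly the point the paper handles with its parenthetical substitution, so the two arguments coincide.
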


\begin{proof}
    If we substitute \eqref{sup_eq:proof_rup} into \eqref{supp_eq:lbf_capacity} (with every $x_{uv}^{q,k}$ replaced by $\bar x_{uv}^{q,k}$), we get that
    \begin{align}
        \sum_{q \in \mc Q}\sum_{p \in \mc P_q}r_{up}\tl x_p & = \sum_{p \in \mc P}r_{up}\tl x_p \leq Dy_u^l \qquad \forall u \in \mc R. \label{sup_eq:proof_capacity}
    \end{align}
    Additionally, we assign all the values of $\tl y_u^p$ to have the same value as $y_u^l$ for all $u \in \mc R$, such that we can replace $y_u^l$ with $\tl y_u^p$ in \eqref{sup_eq:proof_capacity}  to get
    \begin{equation}
        \sum_{p \in \mc P}r_{up}\tl x_p \leq D \tl y_u^p \qquad \forall u \in \mc R.
    \end{equation}
\end{proof}

This concludes the proof that the variables we obtain from the optimal solution to the link-based formulation provide a feasible solution to the path-based formulation.
Another important observation we can make about the objective value of this newly constructed feasible solution is that 

\begin{equation}
        \sum_{u\in \mc R} y^l_u = \sum_{u \in \mc R} \tilde{y}^p_u \geq \sum_{u \in \mc R} y^p_u \label{sup_eq:obj_val_feaspath}.
\end{equation}
If we now combine \eqref{sup_eq:obj_val_feaslink} and \eqref{sup_eq:obj_val_feaspath}, we reach the conclusion that
\begin{equation}
    \sum_{u\in \mc R} y^l_u = \sum_{u\in \mc R} y^p_u,
\end{equation}
i.e. the optimal objective values of the path-based formulation and link-based formulation are equal. This implies that the feasible solution to the link-based formulation we obtain from the path-based formulation and vice versa are actually optimal solutions. We thus conclude that the two formulations are equivalent. 
\end{document}